\newtheoremstyle{plain-boldhead}
  {\topsep}
  {\topsep}
  {\itshape}
  {}
  {\bfseries}
  {.}
  { }
  {\thmname{#1}\thmnumber{ #2}\thmnote{ (\bfseries #3)}}
\newtheoremstyle{definition-boldhead}
  {\topsep}
  {\topsep}
  {\normalfont}
  {}
  {\bfseries}
  {.}
  { }
  {\thmname{#1}\thmnumber{ #2}\thmnote{ (\bfseries #3)}}
\theoremstyle{plain-boldhead}
\newtheorem{theorem}{Theorem}
\newtheorem{lemma}[theorem]{Lemma}
\newtheorem{corollary}[theorem]{Corollary}
\theoremstyle{definition-boldhead}
\newtheorem{definition}{Definition}
\newtheorem{example}{Example}
\providecommand{\naive}{na\"{i}ve\xspace}
\providecommand{\Naive}{Na\"{i}ve\xspace}
\def \ifempty#1{\def\temp{#1} \ifx\temp\empty }
\renewcommand{\P}{\mathrm{P}}
\newcommand{\str}[1]{\textsc{#1}}
\newcommand{\var}[1]{\textit{#1}}
\newcommand{\op}[1]{\textsl{#1}}
\newcommand{\msg}[2]{\ensuremath{\ifempty{#2} [\str{#1}] \else [\str{#1}, {#2}] \fi}}
\newcommand{\nil}{\ensuremath{\bot}}
\newcommand{\false}{\textsc{false}\xspace}
\newcommand{\true}{\textsc{true}\xspace}
\newcommand{\BC}{\ensuremath{\mathbb{C}}\xspace}
\newcommand{\BF}{\ensuremath{\mathbb{F}}\xspace}
\newcommand{\BK}{\ensuremath{\mathbb{K}}\xspace}
\newcommand{\BQ}{\ensuremath{\mathbb{Q}}\xspace}
\newcommand{\CA}{\ensuremath{\mathcal{A}}\xspace}
\newcommand{\CB}{\ensuremath{\mathcal{B}}\xspace}
\newcommand{\CC}{\ensuremath{\mathcal{C}}\xspace}
\newcommand{\CF}{\ensuremath{\mathcal{F}}\xspace}
\newcommand{\CG}{\ensuremath{\mathcal{G}}\xspace}
\newcommand{\CH}{\ensuremath{\mathcal{H}}\xspace}
\newcommand{\CK}{\ensuremath{\mathcal{K}}\xspace}
\newcommand{\CP}{\ensuremath{\mathcal{P}}\xspace}
\newcommand{\CQ}{\ensuremath{\mathcal{Q}}\xspace}
\newcommand{\CS}{\ensuremath{\mathcal{S}}\xspace}
\newcommand{\CT}{\ensuremath{\mathcal{T}}\xspace}
\newcommand{\CX}{\ensuremath{\mathcal{X}}\xspace}
\newcommand\ts{\var{ts}\xspace}
\newcommand\wts{\var{wts}\xspace}
\newcommand\rid{\var{rid}\xspace}
\begin{document}

\title{\bf Asymmetric Distributed Trust\protect{$^1$}}

\author{Orestis Alpos$^{2,4}$\\
  University of Bern \& Common Prefix\\
  \url{oralpos@gmail.com}
  \and Christian Cachin$^2$\\
  University of Bern\\
  \url{christian.cachin@unibe.ch}
  \and Bj\"{o}rn Tackmann$^3$\\
  DFINITY\\
  \url{bjoern@dfinity.org}
  \and Luca Zanolini$^4$\\
  Ethereum Foundation\\
  \url{luca.zanolini@ethereum.org}
}

\footnotetext[1]{This work combines multiple preliminary publications on
  asymmetric trust and protocols with asymmetric trust, which appeared at
  OPODIS~2019~\protect\cite{DBLP:conf/opodis/CachinT19},
  DISC~2021~\protect\cite{DBLP:conf/wdag/CachinZ21}, and
  ESORICS-CBT~2021~\protect\cite{DBLP:conf/esorics/CachinZ21}}

\footnotetext[2]{Institute of Computer Science, University of Bern,
  Neubr\"{u}ckstrasse 10, 3012 CH-Bern, Switzerland.}

\footnotetext[3]{DFINITY, CH-8000 Z\"{u}rich, Switzerland.  Work done at IBM
  Research~-- Zurich.}

\footnotetext[4]{Work done at the University of Bern.}

\date{\today}

\maketitle

\begin{abstract}\noindent
  Quorum systems are a key abstraction in distributed fault-tolerant
  computing for capturing trust assumptions.  They can be found at the core
  of many algorithms for implementing reliable broadcasts, shared memory,
  consensus and other problems.  This paper introduces \emph{asymmetric
    Byzantine quorum systems} that model subjective trust.  Every process
  is free to choose which combinations of other processes it trusts and
  which ones it considers faulty.  Asymmetric quorum systems strictly
  generalize standard Byzantine quorum systems, which have only one global
  trust assumption for all processes.  This work also presents protocols
  that implement abstractions of shared memory, broadcast primitives, and a
  consensus protocol among processes prone to Byzantine faults and
  asymmetric trust.  The model and protocols pave the way for realizing
  more elaborate algorithms with asymmetric trust.
\end{abstract}

\section{Introduction}
\label{sec:intro}

Byzantine quorum systems~\cite{DBLP:journals/dc/MalkhiR98} are a fundamental primitive for
building resilient distributed systems from untrusted components.  Given a
set of nodes, a quorum system captures a trust assumption on the nodes in
terms of potentially malicious protocol participants and colluding groups
of nodes.  Based on quorum systems, many well-known algorithms for
\emph{reliable broadcast}, \emph{shared memory}, \emph{consensus} and more
have been implemented; these are the main abstractions to synchronize the
correct nodes with each other and to achieve consistency despite the
actions of the faulty, so-called \emph{Byzantine} nodes.

Traditionally, trust in a Byzantine quorum system for a set of
processes~\CP has been \emph{symmetric}.  In other words, a global
assumption specifies which processes may fail, such as the simple and
prominent \emph{threshold quorum} assumption, in which any subset of \CP of
a given maximum size may collude and act against the protocol.  The most
basic threshold Byzantine quorum system, for example, allows all subsets of
up to $f < n/3$ processes to fail.  Some classic works also model
arbitrary, non-threshold symmetric quorum
systems~\cite{DBLP:journals/dc/MalkhiR98,DBLP:journals/joc/HirtM00},
but it is unknown if these have been used in practice.

However, trust is inherently subjective.
\emph{De gustibus non est disputandum~-- There is no disputing about taste.}
Estimating which processes will function correctly and which ones will
misbehave may depend on personal taste.  A myriad of local choices
influences one process' trust in others, especially because there are so
many forms of ``malicious'' behavior.  Some processes might not even be
aware of all others, yet a process should not depend on unknown third
parties in a distributed collaboration.  How can one model asymmetric trust
in distributed protocols?  Can traditional Byzantine quorum systems be
extended to subjective failure assumptions?  How do the standard protocols
generalize to this model?

\paragraph{Asymmetric trust.}
In this paper, we answer these questions and introduce models and protocols
for asymmetric distributed trust.  We formalize \emph{asymmetric
  (Byzantine) quorum systems} for asynchronous protocols, in which every
process can make its own assumptions about Byzantine faults of others.  We
introduce several protocols with asymmetric trust that strictly generalize
the existing algorithms, which require common trust.

Our formalization takes up earlier work by Damg{\aa}rd et al.~\cite{DBLP:conf/asiacrypt/DamgardDFN07}
and starts out with the notion of a fail-prone system that forms the basis
of a symmetric Byzantine quorum system.  A global fail-prone system for a
process set \CP contains all maximal subsets of \CP that might jointly fail
during an execution.  In an asymmetric quorum system, every process
specifies its \emph{own} fail-prone system and a corresponding set of local
quorums.  These local quorum systems satisfy a \emph{consistency condition}
that ranges across all processes and a local \emph{availability condition},
and generalize symmetric Byzantine quorum system according to Malkhi and
Reiter~\cite{DBLP:journals/dc/MalkhiR98}.

\paragraph{Protocols with asymmetric quorums.}
Quorum systems are used within various fault-tolerant distributed
protocols, here specifically within protocols for systems subject to
Byzantine faults.  An important aspect of our notion concerns its
relation to existing protocols: it should be easy to generalize the known
protocols to the asymmetric model, ideally simply by replacing the symmetric
quorums with their asymmetric counterparts.
Indeed this is the case for many, but not for all
protocols described here.  A different, generalized analysis is necessary
in any case.

We show first that two existing protocols for emulating a shared regular
register also work in the asymmetric model.  Second, we introduce
asymmetric Byzantine consistent and reliable broadcast primitives, for
which we again only change the quorums compared to the protocols with
symmetric quorums.  Third, we address consensus, one of the most important
primitives in distributed computing, and extend a randomized binary
consensus protocol for asynchronous networks to work with asymmetric trust.
The protocol relies on a common coin abstraction, for which a different
implementation is needed.

Our randomized consensus takes up the award-winning, randomized, and
signature-free implementation of consensus by Most{\'{e}}faoui \emph{et
  al.}~\cite{DBLP:conf/podc/MostefaouiMR14}.  In its 2014 version, however,
this protocol suffered from a liveness issue, which was corrected
subsequently~\cite{DBLP:journals/jacm/MostefaouiMR15}, although the fix
added considerable complexity.  The corrected algorithm offers the same
asymptotic complexity in message and time as the original algorithm, but it
requires more communication steps.

Through our randomized asymmetric consensus, we also introduce a novel way
of fixing the problem in the original protocol.  It retains the latter
protocol's simplicity, which is an appealing property.  Obviously, our
asymmetric consensus protocol can also be instantiated with symmetric
threshold quorums to work in the same model as the protocol of
Most{\'{e}}faoui \emph{et al.}~\cite{DBLP:conf/podc/MostefaouiMR14}.  In
order to clearly demonstrate the liveness issue and to show how our
approach avoids it, we also include in this work a discussion of this
randomized consensus algorithm in the symmetric-trust model.

In the traditional models for quorum-based systems, all correct processes
uniformly benefit from the guarantees of a protocol as long as the initial
assumption expressed by the fail-prone system holds.  With subjective trust,
this symmetry no longer exists.  Some of the correct processes may have made
assumptions that proved appropriate in an execution with actually faulty
processes $F \subset \CP$; we call these processes \emph{wise}.  Other correct
processes, however, may have assumed that only a proper subset of $F$ actually
fails; these processes are \emph{\naive} and they do not enjoy the same
guarantees as the wise ones, even though they are correct.  In particular, our
protocols typically ensure safety only for wise processes and liveness depends
on the existence of a sufficiently large group of wise processes.

\paragraph{Motivation.}
Interest in consensus protocols based on Byzantine quorum systems has
surged recently because of their application to permissioned blockchain
networks~\cite{DBLP:conf/wdag/CachinV17,DBLP:conf/eurosys/AndroulakiBBCCC18}.
Typically run by a consortium, such distributed ledgers often use
\emph{Byzantine-fault tolerant (BFT)} protocols like
PBFT~\cite{DBLP:journals/tocs/CastroL02},
Tendermint~\cite{DBLP:journals/corr/abs-1807-04938}, or
HotStuff~\cite{DBLP:conf/podc/YinMRGA19} for consensus that rely on
symmetric threshold quorum systems.  The Bitcoin blockchain and many other
cryptocurrencies, which triggered this development, started from different
assumptions and use so-called permissionless protocols, in which everyone
may participate.  Those algorithms capture the relative influence of the
participants on consensus decisions by an external factor, such as invested
``work'' or ``stake'' in the system.

A middle ground between permissionless blockchains and BFT-based ones has
been introduced by the blockchain networks of Ripple
(\url{https://ripple.com}) and Stellar (\url{https://stellar.org}).  Their
stated model for achieving network-level consensus uses subjective trust in
the sense that each process declares a local list of processes that it
``trusts'' in the protocol.

Consensus in the \emph{Ripple} blockchain (and for the \emph{XRP}
cryptocurrency on the \emph{XRP Ledger}) is executed by its validator
nodes.  Each node declares a \emph{Unique Node List (UNL)}, which are
validators that this node trusts, in the sense that ``the given participant
believes [they] will not conspire to defraud [the node].''
%
%
At least up to around 2020, however, nodes have not really been
free in their trust choice since ``Ripple
provides a default and recommended list which we [Ripple] expand based on watching
the history of validators operated by Ripple and third parties''
\cite{ripple-faq20}.  
%
%
%
As of 2023, the XRP ledger documentation states that ``currently the XRP
Ledger Foundation and Ripple are known to publish recommended default lists
of high quality validators \dots''~\cite{ripple-faq23}.
It is clear that two nodes that transact via the XRP ledger need to have
some validators that they trust in common.  But many questions are left
open about the kind of decentralization offered by the Ripple protocol.

\emph{Stellar} was created as an evolution of Ripple that shares much of
the same design philosophy.  The Stellar consensus protocol~\cite{Mazieres2015TheSC}
powers the \emph{Stellar Lumen (XLM)} cryptocurrency and introduces
\emph{federated Byzantine quorum systems (FBQS)}; they also capture 
subjective trust assumptions, but differ technically from asymmetric
quorum systems.
Stellar's consensus protocol uses \emph{quorum slices}, which are ``the
subset of a quorum that can convince one particular node of agreement.''
In an FBQS, ``each node chooses its own quorum slices'' and ``the
system-wide quorums result from these decisions by individual
nodes''~\cite{stellar15}.

\paragraph{Contribution.}
The main motivation for this work is to understand how existing ideas of
subjective trust, as manifested in the Ripple and Stellar blockchains,
relate to traditional quorum systems.  The formalization of asymmetric
quorums provides a sound foundation for protocols with asymmetric trust.
The protocols described here generalize well-known, classic algorithms in
the literature and therefore look similar.  This should be seen as a
feature, actually, because simplicity and modularity are important guiding
principles in science.

Our contributions are as follows:
\begin{itemize}
\item We introduce asymmetric Byzantine quorum systems formally in
  Section~\ref{sec:quorums} as an extension of standard Byzantine quorum
  systems and discuss some of their properties.
\item In Section~\ref{sec:memory}, we show two implementations of a shared
  register, with single-writer, multi-reader regular semantics, using
  asymmetric Byzantine quorum systems.
\item We examine broadcast primitives in the Byzantine model with
  asymmetric trust in Section~\ref{sec:broadcast}.  In particular, we
  define and implement Byzantine consistent and reliable broadcast
  protocols.
\item In Section~\ref{sec:consensus}, we present the first asynchronous
  Byzantine consensus protocol with asymmetric trust.  It uses
  randomization, provided by an asymmetric common coin protocol, to
  circumvent the impossibility of asynchronous consensus.
\end{itemize}
Before presenting the technical contributions, we discuss related work in
Section~\ref{sec:related} and state our system model in
Section~\ref{sec:model}.  A detailed discussion of the liveness issue in
the existing signature-free Byzantine consensus
protocol~\cite{DBLP:conf/podc/MostefaouiMR14} and of our approach to fixing
it appears in Appendix~\ref{app:attack}.

\section{Related work}
\label{sec:related}

\paragraph{Practical systems: Ripple and Stellar.}
The \emph{Ripple} consensus protocol is run by an open set of validator
nodes.  The protocol uses votes, similar to standard consensus protocols,
whereby each validator only communicates with the validators in its UNL.
Each validator chooses its own UNL, which makes it possible for anyone to
participate, in principle, similar to proof-of-work blockchains.  Early
investigations suggested that the intersection of the UNLs of every two
validators should be at least 20\% of each list~\cite{scyobr14}, assuming
that also less than one fifth of the validators in the UNL of every node
might be faulty.  An independent analysis by Armknecht et
al.~\cite{DBLP:conf/trust/ArmknechtKMYZ15} later argued that this bound
must be more than~40\%.  A technical report of Chase and
MacBrough~\cite[Thm.~8]{DBLP:journals/corr/abs-1802-07242} concludes, under
the same assumption of $f < n/5$ faulty nodes in every UNL of size~$n$,
that the UNL overlap should actually be at least~90\%.

However, the same paper also derives a counterexample to the liveness of
the Ripple consensus
protocol~\cite[Sec.~4.2]{DBLP:journals/corr/abs-1802-07242} as soon as two
validators don't have ``99\% UNL overlap.''  By generalizing the example,
this essentially means that the protocol can get stuck \emph{unless all
  nodes have the same UNL}.  According to the standards of the field of
distributed systems, though, a protocol needs to satisfy safety \emph{and}
liveness because achieving only one of these properties is trivial.
Amores-Sesar \emph{et al.}~\cite{DBLP:conf/opodis/Amores-SesarCM20} confirm
the prior analysis and exhibit a wider set of examples how safety and
liveness may be violated in executions of the Ripple consensus protocol.
They first show that the network may fork, even under the standard
condition stated by Ripple on the overlap of UNLs, and then that the
consensus protocol may lose liveness in the presence of only one Byzantine
process, even if all the processes have the same UNL.  These works,
however, exploit arbitrary message delays, i.e., a period of asynchronous
network behavior, which is not assumed by Ripple and arguably also unlikely
to occur in practice.

The \emph{Stellar consensus protocol (SCP)} also features open membership
and lets every node express its own set of trusted
nodes~\cite{Mazieres2015TheSC,DBLP:conf/sosp/LokhavaLMHBGJMM19}.
Generalizing from Ripple's flat lists of unique nodes, every node declares
a collection of trusted sets called \emph{quorum slices}, whereby a slice
is ``the subset of a quorum convincing one particular node of agreement.''
A \emph{quorum} in Stellar is a set of nodes ``sufficient to reach
agreement,'' defined as a set of nodes that contains one slice for each
member node.  The quorum choices of all nodes together yield a
\emph{federated Byzantine quorum systems (FBQS)}.  The literature on
Stellar gives properties for FBQS and contains protocols that build on
them, which have been implemented in the Stellar
blockchain~\cite{DBLP:conf/sosp/LokhavaLMHBGJMM19}.  However, standard
Byzantine quorum systems and FBQS are \emph{not} comparable because (1) an
FBQS when instantiated with the same trust assumption for all processes
does not reduce to a symmetric quorum system and (2) existing protocols do
not directly generalize to FBQS.

\paragraph{Models of asymmetric trust.}
Starting from Stellar's notions, Garc{\'{\i}}a{-}P{\'{e}}rez and
Gotsman~\cite{DBLP:conf/opodis/Garcia-PerezG18} build a link from FBQS to
existing quorum-system concepts by investigating a Byzantine reliable
broadcast abstraction in an FBQS.  They show that the \emph{federated
  voting protocol} of Stellar~\cite{Mazieres2015TheSC} is similar to
Bracha's reliable broadcast~\cite{DBLP:journals/iandc/Bracha87} and that it
implements a variation of Byzantine reliable broadcast on an FBQS for
executions that contain, additionally, a set of so-called intact nodes.
Losa \emph{et al.}~\cite{DBLP:conf/wdag/LosaGM19} have later formulated an
abstraction of the consensus mechanism in the Stellar network by
introducing \emph{Personal Byzantine quorum systems} (PBQS).  In contrast
to the other notions of ``quorums'', their definition does not require a
global intersection among quorums. This may lead to several separate
\emph{consensus clusters} such that each one satisfies agreement and
liveness on its own.

The FBQS and PBQS concepts, however, differ from the notion of a Byzantine
quorum system in the literature.  In particular, the characterization of
their properties seems to take into account knowledge of which nodes are
Byzantine, and their effects are therefore analyzed in the context of
particular executions.  Existing notions of symmetric quorum systems in the
literature~\cite{DBLP:journals/dc/MalkhiR98,DBLP:journals/joc/HirtM00}
start from an a-priori assumption about all potentially faulty sets of
nodes, through a fail-prone system~\cite{DBLP:journals/dc/MalkhiR98}.  This
permits to study protocol-independent aspects of quorum systems.

Another approach for designing Byzantine fault-tolerant (BFT) consensus
protocols has been introduced by Malkhi \emph{et
  al.}~\cite{DBLP:conf/ccs/MalkhiN019}, namely \emph{Flexible BFT}. This
notion guarantees higher resilience by introducing a new
\emph{alive-but-corrupt} fault type, which denotes processes that attack
safety but not liveness.  Malkhi \emph{et
  al.}~\cite{DBLP:conf/ccs/MalkhiN019} also define \emph{flexible Byzantine
  quorums} that allow processes in the system to have different faults
models.

Our work, in contrast, goes back to the model of Damg{\aa}rd et
al.~\cite{DBLP:conf/asiacrypt/DamgardDFN07}.  It already contains the basic
formulation of asymmetric trust and expresses it in the context of
synchronous protocols for secure distributed computation with
process-specific fail-prone systems.  The model features only a consistency
property, but omits liveness.  Damg{\aa}rd et
al.~\cite{DBLP:conf/asiacrypt/DamgardDFN07} also state a characterization
of when an asymmetric Byzantine quorum system exists (with the so-called
$B^3$), but give no proof.  Their work has remained without impact until
research on cryptocurrencies has revived interest in heterogeneous and
subjective trust models.

\paragraph{Signature-free randomized consensus.}
Most{\'{e}}faoui \emph{et al.}~\cite{DBLP:conf/podc/MostefaouiMR14} present
a randomized, signature-free, and round-based asynchronous consensus
algorithm for binary values. It achieves optimal resilience and takes
$O(n^2)$ constant-sized messages. Randomization is achieved through a
common coin as defined by Rabin~\cite{DBLP:conf/focs/Rabin83}.  Their
binary consensus algorithm has been taken up for constructing the ``Honey
Badger BFT'' protocol by Miller \emph{et
  al.}~\cite{DBLP:conf/ccs/MillerXCSS16}, for instance.  One important
contribution of Most{\'{e}}faoui \emph{et
  al.}~\cite{DBLP:conf/podc/MostefaouiMR14} is a new binary validated
broadcast primitive with a non-deterministic termination property; it has
also found applications in other protocols~\cite{DBLP:conf/nca/CrainGLR18}.

Tholoniat and Gramoli~\cite{TG19} observe a liveness issue in the protocol
by Most{\'{e}}faoui \emph{et al.}~\cite{DBLP:conf/podc/MostefaouiMR14} in
which an adversary is able to prevent progress among the correct processes
by controlling messages between them and by sending them values in a
specific order.

In a later work, Most{\'{e}}faoui \emph{et
  al.}~\cite{DBLP:journals/jacm/MostefaouiMR15} present a different version
of their randomized consensus algorithm that does not suffer from the
liveness problem anymore. The resulting algorithm offers the same
asymptotic complexity in message and time as their previous
algorithm~\cite{DBLP:conf/podc/MostefaouiMR14}, but requires more
communication steps.

\section{System model}
\label{sec:model}

\paragraph{Processes.}
We consider a system of $n$ \emph{processes} $\CP = \{p_1, \dots, p_n\}$
that communicate with each other.  The processes interact asynchronously
with each other through exchanging messages. The system itself is
asynchronous, i.e., the delivery of messages among processes may be delayed
arbitrarily and the processes have no synchronized clocks.  Every process
is identified by a name, but such identifiers are not made explicit.
A protocol for \CP consists of a collection of programs with instructions
for all processes.  Protocols are presented in a modular way using the
event-based notation of Cachin et al.~\cite{DBLP:books/daglib/0025983}.

\paragraph{Executions and faults.}
An \emph{execution} starts with all processes in a special initial state;
subsequently the processes repeatedly trigger events, react to events, and
change their state through computation steps.
Every execution is \emph{fair} in the sense that, informally, processes do
not halt prematurely when there are still steps to be taken or events to be
delivered (we refer to the standard literature for a formal
definition~\cite{Lynch96}).

A process that follows its protocol during an execution is called
\emph{correct}.  On the other hand, a \emph{faulty} process may crash or
even deviate arbitrarily from its specification, e.g., when
\emph{corrupted} by an adversary; such processes are also called
\emph{Byzantine}.  We consider only Byzantine faults here and assume for
simplicity that the faulty processes fail right at the start of an
execution.

\paragraph{Functionalities.}
A \emph{functionality} is an abstraction of a distributed computation,
either a primitive that may be used by the processes or a service that they
will provide.  Every functionality in the system is specified through its
\emph{interface}, containing the events that it exposes to protocol
implementations that may call it, and its \emph{properties}, which define
its behavior.  A process may react to a received event by changing their
state and triggering further events.

There are two kinds of events in an interface: \emph{input events} that the
functionality receives from other abstractions, typically to invoke its
services, and \emph{output events}, through which the functionality
delivers information or signals a condition to a process.  The behavior of a
functionality is usually stated through a number of properties or through a
sequential implementation.

Multiple functionalities may be composed together modularly.  In a modular
protocol implementation, in particular, every process executes the program
instructions of the protocol implementations for all functionalities in
which it participates.

\paragraph{Links.} 
We assume there is a low-level functionality for sending messages over
point-to-point links between each pair of processes.  In a protocol, this
functionality is accessed through the events of ``sending a message'' and
``receiving a message.''  Point-to-point messages are authenticated and
delivered reliably among correct processes.

Moreover, we assume FIFO ordering on the reliable point-to-point links for
every pair of correct processes.  This means that if a correct process has
``sent'' a message $m_1$ and subsequently ``sent'' a message $m_2$, then
every correct process does not ``receive'' $m_2$ unless it has earlier also
``received'' $m_1$.  FIFO-ordered links are actually a very common
assumption. Protocols that guarantee FIFO order on top of (unordered)
reliable point-to-point links are well-known and simple to
implement~\cite{10.5555/HadzilacosT93,DBLP:books/daglib/0025983}.  We
remark that there is only one FIFO-ordered reliable point-to-point link
functionality in the model; hence, FIFO order holds among the messages
exchanged by the implementations for \emph{all} functionalities used by a
protocol.

\paragraph{Idealized digital signatures.}

A \emph{digital signature scheme} provides two operations, $\op{sign}_i$
and $\op{verify}_i$.  The invocation of $\op{sign}_i$ specifies a
process~$p_i$ and takes a bit string~$m \in \{0,1\}^*$ as input and returns
a signature $\sigma \in \{0,1\}^*$ with the response.  Only $p_i$ may
invoke $\op{sign}_i$.  The operation $\op{verify}_i$ takes a putative
signature~$\sigma$ and a bit string~$m$ as parameters and returns a Boolean
value with the response.  Its implementation satisfies that
$\op{verify}_i(\sigma ,m)$ returns \true for any $i \in [1,n]$ and
$m \in \{0,1\}^*$ if and only if $p_i$ has executed $\op{sign}_i(m)$ and
obtained $\sigma$ before; otherwise, $\op{verify}_i(\sigma ,m)$ returns
\false.  Every process may invoke \op{verify}.


\section{Asymmetric Byzantine quorum systems}
\label{sec:quorums}

This section defines asymmetric Byzantine quorum systems and the notions of
a guild and a tolerated system, which are used in protocols later.  To set
the stage, symmetric Byzantine quorum systems are reviewed first.

\subsection{Review of symmetric trust}
\label{sec:symquorums}

Quorum systems are well-known in settings with symmetric trust.  As
demonstrated by many applications to distributed systems, ordinary quorum
systems~\cite{DBLP:journals/siamcomp/NaorW98} and Byzantine quorum systems~\cite{DBLP:journals/dc/MalkhiR98} play
a crucial role in formulating resilient protocols that tolerate faults
through replication~\cite{CharronBostPS10}.  A quorum system typically
ensures a consistency property among the processes in an execution, despite
the presence of some faulty processes.

For the model with Byzantine faults, \emph{Byzantine quorum systems} have
been introduced by Malkhi and Reiter~\cite{DBLP:journals/dc/MalkhiR98}.  This notion is
defined with respect to a \emph{fail-prone system} $\CF \subseteq 2^{\CP}$,
a collection of subsets of \CP, none of which is contained in another, such
that some $F \in \CF$ with $F \subseteq \CP$ is called a \emph{fail-prone
  set} and contains all processes that may at most fail together in some
execution~\cite{DBLP:journals/dc/MalkhiR98}.  A fail-prone system is the same as the
\emph{basis} of an \emph{adversary structure}, which was introduced
independently by Hirt and Maurer~\cite{DBLP:journals/joc/HirtM00}.

A fail-prone system captures an assumption on the possible failure patterns
that may occur.  It specifies all maximal sets of faulty processes that a
protocol should tolerate in an execution; this means that a protocol
designed for \CF achieves its properties as long as the set $F$ of actually
faulty processes satisfies $F \in \CF^*$.  Here and from now on, the
notation $\CA^*$ for a system $\CA \subseteq 2^\CP$, denotes the collection
of all subsets of the sets in $\CA$, that is,
\(\CA^* = \{ A' | A' \subseteq A, A \in \CA \}\).

\begin{definition}[Byzantine quorum system~\cite{DBLP:journals/dc/MalkhiR98}]\label{def:quorum}
  A \emph{Byzantine quorum system} for \CF is a collection of sets of
  processes $\CQ \subseteq 2^{\CP}$ where no set is contained in
  another and each $Q \in \CQ$ is called a
  \emph{quorum}, such the following properties hold:
  \begin{description}
  \item[Consistency:] The intersection of any two quorums contains at least
    one process that is not faulty, i.e.,
    \[\forall Q_1, Q_2 \in \CQ , \forall F \in \CF: \, Q_1 \cap Q_2 \not
    \subseteq F.\]
\item[Availability:] For any set of processes that may fail together, there
  exists a disjoint quorum in \CQ, i.e.,
  \[\forall F \in \CF: \, \exists Q \in \CQ: \, F \cap Q = \emptyset.\]
  \end{description}
\end{definition}
The above notion is also known as a \emph{Byzantine dissemination quorum
  system}~\cite{DBLP:journals/dc/MalkhiR98} and allows a protocol to be
designed despite arbitrary behavior of the potentially faulty processes.
The notion generalizes the usual threshold failure assumption for Byzantine
faults~\cite{DBLP:journals/jacm/PeaseSL80}, which considers that any set of
$f$ processes may fail.

We say that a set system \CT \emph{dominates} another set system \CS if for
each $S \in \CS$ there is some $T \in \CT$ such that
$S \subseteq T$~\cite{DBLP:journals/jacm/Garcia-MolinaB85}.  In this sense,
a quorum system for \CF is \emph{minimal} whenever it does not dominate any
other quorum system for~\CF.  A \emph{maximal} set system is defined
analogously.

Similarly to the threshold case, where $n > 3f$ processes are needed to
tolerate $f$ faulty ones in many Byzantine protocols, Byzantine quorum
systems can only exist if not ``too many'' processes fail.

\begin{definition}[$Q^3$-condition~\cite{DBLP:journals/dc/MalkhiR98,DBLP:journals/joc/HirtM00}]
  A fail-prone system \CF satisfies the \emph{$Q^3$-condition}, abbreviated
  as $Q^3(\CF)$, whenever it holds
  \[
    \forall F_1, F_2, F_3 \in \CF: \, \CP \not\subseteq F_1 \cup F_2 \cup F_3.
  \]
\end{definition}
In other words, $Q^3(\CF)$ means that no \emph{three} fail-prone sets
together cover the whole system of processes.  A $Q^k$-condition can be
defined like this for any $k \geq 2$~\cite{DBLP:journals/joc/HirtM00}.

The following result of Malkhi and
Reiter~\cite[Theorem~5.4]{DBLP:journals/dc/MalkhiR98} considers the
\emph{bijective complement} of a process set $\CS \subseteq 2^{\CP}$, which
is defined as $\overline{\CS} = \{ \CP \setminus S | S \in \CS \}$, and
turns \CF into a Byzantine quorum system.  A related theorem was formulated
also by Hirt and Maurer~\cite{DBLP:journals/joc/HirtM00}.

\begin{lemma}\label{lem:canon}
  Given a fail-prone system \CF, a Byzantine quorum system for \CF exists
  if and only if~$Q^3(\CF)$.
  In particular, if $Q^3(\CF)$ holds, then $\overline{\CF}$, the bijective
  complement of \CF, is a Byzantine quorum system.
\end{lemma}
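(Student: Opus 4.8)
The plan is to prove the two directions of the equivalence separately, folding the ``if'' direction together with the ``in particular'' claim by exhibiting $\overline{\CF}$ as an explicit witness. Since a Byzantine quorum system for \CF exists as soon as one is produced, it suffices for the reverse implication to verify that $\overline{\CF}$ satisfies both the \textbf{Consistency} and the \textbf{Availability} property of Definition~\ref{def:quorum} whenever $Q^3(\CF)$ holds.

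For that verification I would first record the shape of quorums in $\overline{\CF}$: any two of them are $Q_1 = \CP \setminus F_1$ and $Q_2 = \CP \setminus F_2$ for some $F_1, F_2 \in \CF$, so that $Q_1 \cap Q_2 = \CP \setminus (F_1 \cup F_2)$. Consistency then follows by contraposition: if $Q_1 \cap Q_2 \subseteq F_3$ for some $F_3 \in \CF$, then every process lies in $F_1 \cup F_2 \cup F_3$, i.e.\ $\CP \subseteq F_1 \cup F_2 \cup F_3$, contradicting $Q^3(\CF)$. Availability is immediate, as for each $F \in \CF$ the set $Q = \CP \setminus F$ belongs to $\overline{\CF}$ and is disjoint from $F$ by construction. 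This settles the reverse implication and the ``in particular'' statement simultaneously.

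For the forward direction I would argue by contradiction: suppose a Byzantine quorum system \CQ for \CF exists, yet $Q^3(\CF)$ fails, so there are $F_1, F_2, F_3 \in \CF$ with $\CP \subseteq F_1 \cup F_2 \cup F_3$. Applying Availability to $F_1$ and to $F_2$ yields quorums $Q_1, Q_2 \in \CQ$ with $Q_1 \cap F_1 = \emptyset$ and $Q_2 \cap F_2 = \emptyset$, and applying Consistency to $Q_1$, $Q_2$, and $F_3$ produces a process $p \in Q_1 \cap Q_2$ with $p \notin F_3$. Now $p \in Q_1$ forces $p \notin F_1$ and $p \in Q_2$ forces $p \notin F_2$, so $p$ escapes all three sets; hence $p \notin F_1 \cup F_2 \cup F_3$, while $p \in Q_1 \subseteq \CP \subseteq F_1 \cup F_2 \cup F_3$, a contradiction. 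Therefore no Byzantine quorum system can exist when $Q^3(\CF)$ fails.

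The set-theoretic computations with complements and intersections are not where the difficulty lies; the one genuinely delicate point is the \emph{asymmetric} role played by the three fail-prone sets in the forward direction. One must spend Availability on two of the sets and reserve the third for Consistency, then notice that membership in the two chosen quorums is precisely what excludes the first two sets while Consistency excludes the third. Getting this bookkeeping right---rather than, say, attempting to dodge all three sets with quorums directly---is the step most likely to trip up a first attempt.
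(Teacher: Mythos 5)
Your proof is correct: the paper itself only cites this lemma from Malkhi and Reiter without reproving it, but your argument is precisely the symmetric specialization of the paper's own proof of Theorem~\ref{thm:asymcanon} --- exhibit the canonical quorum system $\overline{\CF}$ as the witness for the ``if'' direction, and for the ``only if'' direction spend availability on two of the fail-prone sets and consistency on the third to derive the contradiction. Both the decomposition and the bookkeeping match, so there is nothing to add.
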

The quorum system $\CQ = \overline{\CF}$ is called the \emph{canonical
  quorum system} of~\CF.  According to the duality between \CQ and \CF,
properties of \CF are sometimes ascribed to \CQ as well.
However, note that the canonical quorum system is not always minimal.  For
instance, if \CF consists of all sets of $f \ll n/3$ processes, then each
quorum in the canonical quorum system has $n-f$ members, but also the
family of all subsets of \CP with $\lceil \frac{n+f+1}{2} \rceil < n-f$
processes forms a quorum system.

\paragraph{Core sets.}

A \emph{core set}~$C$ for \CF is a minimal set of processes that contains
at least one correct process in every execution.  More precisely,
$C \subseteq \CP$ is a core set whenever (1) for all $F \in \CF$, it holds
$\CP \setminus F \cap C \neq \emptyset$ (and, equivalently,
$C \not\subseteq F$) and (2) for all $C' \subsetneq C$, there exists
$F \in \CF$ such that $\CP \setminus F \cap C' = \emptyset$ (and,
equivalently, \(C' \subseteq F\)).  With the threshold failure assumption,
every set of $f+1$ processes is a core set.  A \emph{core-set system} \CC
is the minimal collection of all core sets, in the sense that no set in \CC
is contained in another.

Core sets can be complemented by \emph{survivor sets}, as shown by
Junqueira et al.~\cite{DBLP:journals/dc/JunqueiraMHP10}.  This yields a dual characterization of
resilient distributed protocols, which parallels ours using fail-prone sets
and quorums.

\paragraph{Kernels.}

Given a symmetric Byzantine quorum system~\CQ, we define a
\emph{kernel}~$K$ as a minimal set of processes that overlaps with every
quorum. A kernel generalizes the notion of a \emph{core
  set}~\cite{DBLP:conf/icdcs/JunqueiraM03}.
 
\begin{definition}[Kernel system]
  A set $K \subseteq \CP$ is a \emph{kernel} of a quorum system~\CQ if an only if
  \[
    \forall Q \in \CQ: \, K \cap Q \neq \emptyset
  \]
  and 
    \[
    \forall K' \subsetneq K: \, \exists~Q \in \mathcal{Q}: \, Q \cap K' = \emptyset.
  \]
  
  We also define the \emph{kernel system} $\CK$ of \CQ to be the set of all kernels of $\CQ$.  
\end{definition}

For example, under a threshold failure assumption where any $f$ processes
may fail, every set of $\big\lfloor\frac{n-f+1}{2}\big\rfloor$ processes is
a kernel. In particular, $n=3f+1$ if and only if every kernel has $f+1$
processes.

The definition of a kernel is related to that of a core set in the
following sense. 

\begin{lemma}\label{lem:kernelcore}
  Let \CF be a fail-prone system and \(\CQ = \overline{\CF}\) be the
  canonical quorum system of \CF.  Then the kernel system of \CQ is the
  same as the core-set system for~\CF.
\end{lemma}

\begin{proof}
Consider a kernel system \( \mathcal{K} \) of a Byzantine quorum system \( \mathcal{Q} \). By definition, the following two properties hold with respect to every kernel $K \in \CK$: 
\begin{itemize}
\item[(i)] For every quorum \( Q \) in \( \mathcal{Q} \), the intersection with the kernel \( K \) is non-empty, i.e., \( K \cap Q \neq \emptyset \).
\item[(ii)] For any proper subset \( K' \) of \( K \), there exists a quorum \( Q \) in \( \mathcal{Q} \) such that \( K' \) does not intersect with \( Q \), i.e., \( Q \cap K' = \emptyset \).
\end{itemize}
Given the canonical quorum system \( \mathcal{Q} \) derived from the fail-prone system \( \mathcal{F} \), by definition of canonical quorum system of \CF we have that for every \( Q\) in \( \mathcal{Q} \), there exists a unique fail-prone set \( F\) in \( \mathcal{F}\) such that \( Q\) is precisely the complement of \( F\) within \( \mathcal{P} \), that is, \( Q = \mathcal{P} \setminus F \). Consequently, the concepts of a kernel and a core set are equivalent in this context, as a core set is defined with respect to sets of the form \( \mathcal{P} \setminus F \).
\end{proof}

\begin{lemma}\label{lem:kernelinquorum}
  Let \CF, \CQ, and \CK be a fail-prone system, a Byzantine quorum system
  for \CF, and the kernel system of \CQ, respectively. Then, for every
  quorum $Q \in \CQ$, there exists a kernel $K \in \CK$ such that
  $K \subseteq Q$.
\end{lemma}
 
\begin{proof}
  Consider the quorum system \CQ for \CF. Let \( F \) be any such
  fail-prone set in \CF. For a given quorum \( Q \in \CQ \), define the set
  \( K = Q \setminus F \). By definition, \( K \) is a subset of \( Q \),
  i.e., \( K \subseteq Q \).  The consistency property of the Byzantine
  quorum system now implies that any two quorums \( Q, Q' \in \CQ \) have
  an intersection \( Q \cap Q' \) that is not fully contained within
  \( F \). Therefore, \( K \) intersects with \( Q' \) since
  \( (Q \setminus F) \cap Q' = K \cap Q' \) is not empty.  This property
  holds for every \( Q' \in \CQ \) and confirms that \( K \) intersects
  with every quorum in \( \mathcal{Q} \). As such, \( K \) satisfies the
  first property of a kernel of \CQ.

  For the second property, minimality, let us consider such a \( K \). To
  construct a kernel contained in $Q$, we progressively remove elements
  from \( K \), ensuring that the resultant subset retains the property of
  intersection with all quorums. This process terminates with a subset
  \( K^* \), which cannot be reduced further without losing the
  intersection property. The minimality of \( K^* \) is guaranteed by the
  contradiction that arises from the assumption that a proper subset of
  \( K^* \) could intersect with all quorums, as this would violate the
  termination of our removal process. Therefore, \( K^* \) is a kernel by
  definition since it is the minimal intersecting set with every quorum in
  \( \mathcal{Q} \), and it is contained within the original quorum \( Q \)
  from which we subtracted \( F \).  This shows that \( K^* \) is a kernel
  of~\( Q \).
\end{proof}

\subsection{Asymmetric trust}
\label{sec:asymquorums}

In our model with asymmetric trust, every process is free to make its own
trust assumption and to express this with a fail-prone system.  Hence, an
\emph{asymmetric fail-prone system} $\BF = [\CF_1, \dots, \CF_n]$ consists
of an array of fail-prone systems, where $\CF_i$ denotes the trust
assumption of~$p_i$.  One often assumes $p_i \not\in F_i$ for practical
reasons, but this is not necessary.  This notion has earlier been
formalized by Damg{\aa}rd et al.~\cite{DBLP:conf/asiacrypt/DamgardDFN07}.

\begin{definition}[Asymmetric Byzantine quorum system]\label{def:asymquorum}
  An \emph{asymmetric Byzantine quorum system} for \BF is an array of
  collections of sets $\BQ = [\CQ_1, \dots, \CQ_n]$, where
  $\CQ_i \subseteq 2^{\CP}$ for $i \in [1,n]$.  The set
  $\CQ_i \subseteq 2^{\CP}$ is called the \emph{quorum system of $p_i$} and
  any set $Q_i \in \CQ_i$ is called a \emph{quorum (set) for $p_i$}.  It
  satisfies:
  \begin{description}
  \item[Consistency:] The intersection of two quorums for any two processes
    contains at least one process for which either process assumes that it
    is not faulty, i.e.,
    \[
      \forall i,j \in [1,n],
      \forall Q_i \in \CQ_i, \forall Q_j \in \CQ_j,
      \forall F_{ij} \in {\CF_i}^* \cap {\CF_j}^*: \,
      Q_i \cap Q_j \not\subseteq F_{ij}.
    \]
  \item[Availability:] For any process~$p_i$ and any set of processes that
    may fail together according to $p_i$, there exists a disjoint quorum
    for $p_i$ in $\CQ_i$, i.e.,
    \[
      \forall i \in [1,n],
      \forall F_i \in \CF_i: \, \exists Q_i \in \CQ_i: \, F_i \cap Q_i =
      \emptyset.
    \]
\end{description}
\end{definition}

Recall that the consistency condition for a (symmetric) Byzantine quorum
system requires that at least one process in the intersection of every two
quorums is correct.  In the asymmetric case, quorums are subjective and
defined according to the quorum system for each process.  The asymmetric
consistency property states that in the intersection of every two
subjective quorums of two processes there exists at least one process that
is correct according to one of the two processes.  On the other hand, the
availability condition in the above definition is a direct extension of the
symmetric case, since it considers the quorum system of each process
separately.  We remark that availability suffices for implementing some
protocols but a stronger assumption (i.e., the existence of a guild,
introduced below) is needed for others.

The existence of asymmetric quorum systems can be characterized with a
property that generalizes the $Q^3$-condition for the underlying asymmetric
fail-prone systems as follows.

\begin{definition}[$B^3$-condition]
\label{def:b3}
  An asymmetric fail-prone system \BF satisfies the
  \emph{$B^3$-condition}, abbreviated as $B^3(\BF)$, whenever it holds that
  \[
    \forall i,j \in [1,n],
    \forall F_i \in \CF_i, \forall F_j\in\CF_j,
    \forall F_{ij} \in {\CF_i}^*\cap{\CF_j}^*: \,
    \CP \not\subseteq F_i \cup F_j \cup F_{ij} 
  \]
\end{definition}

The following result is the generalization of Lemma~\ref{lem:canon} for
asymmetric quorum systems; it was stated by Damg{\aa}rd et
al.~\cite{DBLP:conf/asiacrypt/DamgardDFN07} without proof.

\begin{theorem}\label{thm:asymcanon}
  An asymmetric fail-prone system \BF satisfies $B^3(\BF)$ if and only if
  there exists an asymmetric quorum system for \BF.
\end{theorem}

\begin{proof}
  Suppose that $B^3(\BF)$.  We let $\BQ = [\CQ_1, \dots, \CQ_n]$, where
  $\CQ_i = \overline{\CF_i}$ is the canonical quorum system of $\CF_i$,
  and show that \BQ is an asymmetric quorum system.  Indeed, let
  \(Q_i \in \CQ_i\), \(Q_j \in \CQ_j\), and
  \(F_{ij} \in {\CF_i}^* \cap {\CF_j}^*\) for any $i$ and~$j$. Then
  \(F_i = \CP \setminus Q_i \in \CF_i\) and
  \(F_j = \CP \setminus Q_j \in \CF_j\) by construction, and therefore,
  \(F_i \cup F_j \cup F_{ij} \not= \CP\) holds according to $B^3(\BF)$.
  This means there is some
  \(p_k \in \CP \setminus (F_i \cup F_j \cup F_{ij})\).  Because
  $p_k \not \in F_i$, it holds $p_k \in Q_i$ and analogously $p_k \in Q_j$.
  This implies in turn that \(p_k \in Q_i \cap Q_j\) but
  \(p_k \notin F_{ij}\) and proves the consistency condition.  The
  availability property holds by construction of the canonical quorum
  systems.

  To show the reverse direction, let \BQ be a candidate asymmetric
  Byzantine quorum system for \BF that satisfies availability and assume
  towards a contradiction that \(B^3(\BF)\) does not hold.  We show that
  consistency cannot be fulfilled for \BQ. By our assumption there are sets
  \(F_i, F_j, F_{ij}\) in \BF such that \(F_i \cup F_j \cup F_{ij} = \CP\),
  which means also that \(\CP \setminus (F_i \cup F_j) \subseteq
  F_{ij}\). The availability condition for \BQ then implies that there are
  sets \(Q_i \in \CQ_i\) and \(Q_j \in \CQ_j\) with
  \(F_i \cap Q_i = \emptyset\) and \(F_j \cap Q_j = \emptyset\). Now for
  every \(p_k \in Q_i \cap Q_j\) it holds that \(p_k \notin F_i \cup F_j\) by
  availability and therefore \(p_k \in \CP \setminus (F_i \cup F_j)\).  Taken
  together this means that
  \(Q_i \cap Q_j \subseteq \CP \setminus (F_i \cup F_j) \subseteq F_{ij}\).
  Hence, \(\BQ\) does not satisfy the consistency condition and the
  statement follows.
\end{proof}

\paragraph{Asymmetric core sets and kernels.}

Let \(\BF = [ \CF_1, \dots, \CF_n ]\) be an asymmetric fail-prone system.
An \emph{asymmetric core-set system}~\BC is an array of
collections of sets \([ \CC_1, \dots, \CC_n ]\) such that each \(\CC_i\) is
a core set system for the fail-prone system \(\CF_i\).  We call a set
$C_i \in \CC_i$ a \emph{core set for $p_i$}.

Given an asymmetric quorum system \BQ for \BF, an \emph{asymmetric kernel
  system} for \BQ is defined analogously as the array
$\BK = [\CK_1, \dots, \CK_n]$ that consists of the kernel systems for all
processes in~\CP.  A set $K_i \in \CK_i$ is called a \emph{kernel
  for~$p_i$.}  This means that every kernel for $p_i$ has a non-empty
intersection with every quorum of~$p_i$.

\paragraph{\Naive and wise processes.}

Recall that the guarantees of quorum-based protocols apply to \emph{correct}
processes only, but not to faulty ones.
The faults or corruptions occurring in a protocol execution with an underlying
quorum system induce a set~$F$ of actually \emph{faulty processes}.  However,
no process knows~$F$ and this information is only available to an observer
outside the system.  With a traditional quorum system \CQ designed for a
fail-prone set \CF, the guarantees of a protocol usually hold as long as
$F \in \CF^*$, and if $F$ is not contained in $\CF^*$, no useful properties
can be derived for any process.

With asymmetric quorums, we further distinguish between two kinds of
correct processes, depending on whether they considered $F$ in their trust
assumption or not.  Given a protocol execution, the processes are therefore
partitioned into three types:
\begin{description}
\item[Faulty:] A process $p_i \in F$ is \emph{faulty}.
\item[\Naive:] A correct process $p_i$ for which $F \not\in {\CF_i}^*$
  is called \emph{\naive}.
\item[Wise:] A correct process $p_i$ for which $F \in {\CF_i}^*$ is called
  \emph{wise}.
\end{description}

The \naive processes are new for the asymmetric case, as all correct processes
are wise under a symmetric trust assumption.  Protocols for asymmetric quorums
cannot guarantee the same properties for \naive processes as for wise ones,
since the \naive processes may have the ``wrong friends.''
In one formalization of the Stellar protocol, correct nodes that find
themselves in a similar situation have been called
``befouled''~\cite{Mazieres2015TheSC}.

\begin{example}
\label{ex:wisenaive}
We define an example of asymmetric fail-prone system $\BF_A$ on
$\CP = \{p_1, p_2, p_3, p_4, p_5\}$.  The notation $\Theta^n_k(\CS)$ for a
set \CS with $n$ elements denotes the ``threshold'' combination operator
and enumerates all subsets of \CS of cardinality~$k$.  W.l.o.g.\ every
process trusts itself.  The diagram below shows fail-prone sets as shaded
areas and the notation $\mbox{}_{k}^{n}$ in front of a fail-prone set
stands for $k$ out of the $n$ processes in the set.

\vspace*{-2ex}
\begin{minipage}[c]{0.05\linewidth}
  \vspace*{2ex}
  \Large$\BF_A$:
\end{minipage}
\begin{minipage}[c]{0.4\linewidth}
  \begin{align*}
    \mbox{}\\
    \CF_1 & = \Theta^4_1(\{p_2,p_3,p_4,p_5\}) \\
    \CF_2 & = \Theta^4_1(\{p_1,p_3,p_4,p_5\}) \\
    \CF_3 & = \Theta^2_1(\{p_1,p_2\}) \ast \Theta^2_1(\{p_4,p_5\}) \\
    \CF_4 & = \Theta^4_1(\{p_1,p_2,p_3,p_5\}) \\
    \CF_5 & = \{\{p_2,p_4\}\} \\
  \end{align*}
\end{minipage}
\begin{minipage}[c]{0.5\linewidth}
  \includegraphics[scale=0.42]{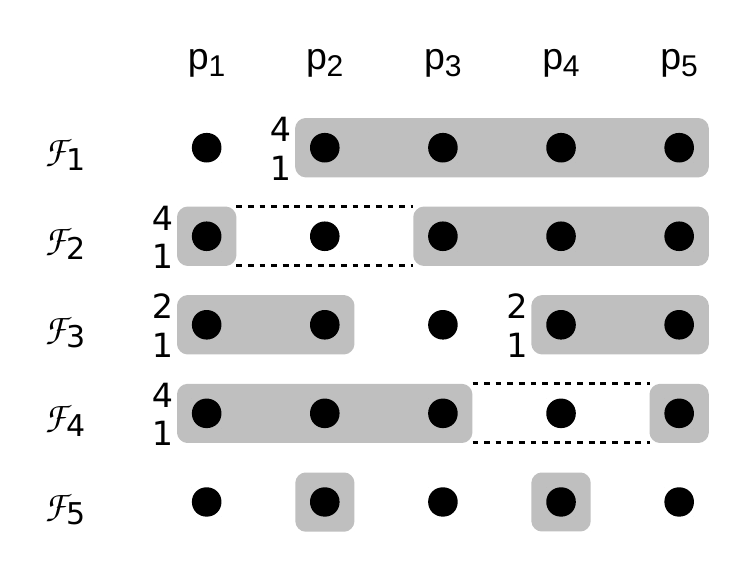}
\end{minipage}
The operator $\ast$ for two sets satisfies
$\CA \ast \CB = \{ A \cup B | A \in \CA, B \in \CB \}$.

As one can verify in a straightforward way, $B^3(\BF_A)$ holds.  Let
$\BQ_A$ be the canonical asymmetric quorum system for $\BF_A$.  Note that
since $\BF_A$ contains the fail-prone systems of $p_3$ and $p_5$ that
permit two faulty processes each, this fail-prone system cannot be obtained
as a special case of $\Theta^5_1(\{p_1, p_2, p_3, p_4, p_5\})$.  When
$F = \{p_2, p_4\}$, for example, then processes~$p_3$ and $p_5$ are wise
and $p_1$ is~\naive.
\end{example}

\paragraph{Guilds.}

If too many processes are \naive or even fail during a protocol run with
asymmetric quorums, then protocol properties cannot be ensured. A
\emph{guild} is a set of wise processes that contains at least one quorum
for each member; by definition this quorum consists only of wise processes.
A guild ensures liveness and consistency for typical protocols.  This
generalizes from protocols with symmetric trust, where the correct
processes in every execution form a quorum by definition.  A guild
represents a group of influential and well-connected wise processes, like
in the real world.

\begin{definition}[Guild]
  Given a fail-prone system \BF, an asymmetric quorum system \BQ for \BF,
  and a protocol execution with faulty processes~$F$, a \emph{guild \CG for
    $F$ and \BQ} satisfies two properties:
  \begin{description}
  \item[Wisdom:] \CG is a set of wise processes:
    \[
      \forall p_i \in \CG :\, F \in {\CF_i}^*.
    \]
  \item[Closure:] \CG contains a quorum for each of its members:
    \[
      \forall p_i \in \CG :\, \exists Q_i \in \CQ_i :\, Q_i \subseteq \CG.
    \]
  \end{description}
\end{definition}

A guild is related to an ``intact set'' in the Stellar consensus
protocol~\cite{Mazieres2015TheSC,DBLP:conf/sosp/LokhavaLMHBGJMM19}, but the
two notions differ in how they are defined.
Observe that the union of two guilds is again a guild, since the union
consists only of wise processes and contains again a quorum for each
member.  All guilds overlap, as the next result shows.

\begin{lemma}\label{lem:guildunique}
  In any execution with a guild $\CG$, every two guilds intersect.
\end{lemma}

\begin{proof}
  Let $\CP$ be a set of processes, $\CG$ be a guild, and $F$ be the set of
  actually faulty processes. Furthermore, suppose that there is another
  guild $\CG'$.  Let $p_i \in \CG$ and $p_j \in \CG'$ be two processes and
  consider a quorum $Q_i \subseteq \CG$ for $p_i$ and a quorum
  $Q_j \subseteq \CG'$ for $p_j$. From the definition of an asymmetric
  quorum system it must hold $Q_i \cap Q_j \nsubseteq F$, with
  $Q_i \cap Q_j \neq \emptyset$ and $F \in {\CF_i}^* \cap {\CF_j}^*$. It
  follows that there exists a wise process $p_k \in Q_i \cap Q_j$ with
  $p_k \in \CG$ and $p_k \in \CG'$. Notice also that $\CG$ and $\CG'$ both
  contain a quorum for $p_k$.
\end{proof}

It follows that every execution with a guild contains a unique
\emph{maximal guild}~$\CG_{\max}$.  The next lemma shows that if a guild
exists, no quorum for any process contains only faulty processes.

\begin{lemma}\label{lem:quorumonlybyz}
  Let $\CG_{\text{max}}$ be the maximal guild for a given execution and let $\mathbb{Q}$ be the canonical asymmetric quorum system. Then, there cannot be a quorum $Q_j \in \mathcal{Q}_j$ for any process $p_j$ consisting only of faulty processes.
\end{lemma}

\begin{proof}
  Given an execution with $F$ as set of faulty processes, suppose there is
  a guild $\CG_{\text{max}}$. This means that for every process
  $p_i \in \CG_{\text{max}}$, a quorum $Q_i \subseteq \CG_{\text{max}}$
  exists such that $Q_i \cap F = \emptyset$. It follows that for every
  $p_i \in \CG_{\text{max}}$, there is a set $F_i \in \mathcal{F}_i$ such
  that $F \subseteq F_i$. Recall that since \BQ is a quorum system,
  $B^3(\BF)$ holds. From Definition~\ref{def:b3}, we have that for all
  $i,j \in [1,n]$, all $F_i \in \CF_i, \forall F_j\in\CF_j$, and all
  $F_{ij} \in {\CF_i}^*\cap{\CF_j}^*$, it holds
  $\CP \not\subseteq F_i \cup F_j \cup F_{ij}$.
  
  Towards a contradiction, assume that there is a process $p_j$ such that
  there exists a quorum $Q_j \in \mathcal{Q}_j$ for $p_j$ with $Q_j =
  F$. This implies that there exists $F_j \in \mathcal{F}_j$ such that
  $F_j = \mathcal{P} \setminus F$.
  
  Let $F_i$ be the fail-prone system of $p_i \in \CG_{\text{max}}$ such
  that $F \subseteq F_i$ and let $F_j = \mathcal{P} \setminus F$ as just
  defined. Then, $F_i \cup F_j \cup F_{ij} = \mathcal{P}$. This follows
  from the fact that $F_i$ contains $F$ and that
  $F_j = \mathcal{P} \setminus F$.  This contradicts the $B^3$-condition
  for \BF.
\end{proof}

\begin{lemma}\label{lem:quorumnaive}
  Let $\CG_{\text{max}}$ be the maximal guild for a given execution and let $p_i$ be any correct process. Then, every quorum for $p_i$ contains at
  least one process in $\CG_{\text{max}}$.
\end{lemma}

\begin{proof}
  The claim naturally derives from the consistency property of an
  asymmetric quorum system.  Consider any correct process $p_i$ and one of
  its quorums, $Q_i \in \CQ_i$.  For any process
  $p_j \in \CG_{\text{max}}$, let $Q_j$ be a quorum of $p_j$ such that
  $Q_j \subseteq \CG_{\text{max}}$, which exists because $\CG_{\text{max}}$
  is a guild.  Then, the quorum consistency property implies that
  $Q_i \cap Q_j \neq \emptyset$.  Thus, $Q_i$ contains a process
  in the maximal guild.
\end{proof}

Finally, we show with an example that it is possible for a wise process to
be outside the maximal guild.

\begin{example}
\label{ex:wisegmax}
Let us consider a seven-process asymmetric quorum system $\BQ_B$, defined through its fail-prone system~$\BF_B$.

\vspace*{-1ex}
\noindent
\begin{minipage}[c]{0.05\linewidth}
  \vspace*{2ex}
  \center\Large$\BF_B$:
\end{minipage}
\begin{minipage}[l]{0.45\linewidth}
  \begin{eqnarray*}
    \mbox{}\\
    \CF_1 & = & \Theta^3_2(\{p_2,p_4,p_5\}) \ast \{p_6\} \ast \{p_7\}\\
    \CF_2 & = & \Theta^3_2(\{p_3,p_4,p_5\}) \ast \{p_6\} \ast \{p_7\} \\
    \CF_3 & = & \Theta^3_2(\{p_1,p_4,p_5\}) \ast \{p_6\} \ast \{p_7\}\\
    \CF_4 & = & \Theta^4_1(\{p_1,p_2,p_3,p_5\}) \ast \{p_6\} \ast \{p_7\}\\
    \CF_5 & = & \Theta^4_1(\{p_1,p_2,p_3,p_4\}) \ast \{p_6\} \ast \{p_7\}\\
    \CF_6 & = & \Theta^3_3(\{p_1,p_3,p_7\})\\
    \CF_7 & = & \Theta^3_3(\{p_3,p_4,p_5\})\\
    \mbox{}
  \end{eqnarray*}
\end{minipage}
\begin{minipage}[c]{0.5\linewidth}
  \includegraphics[scale=0.42]{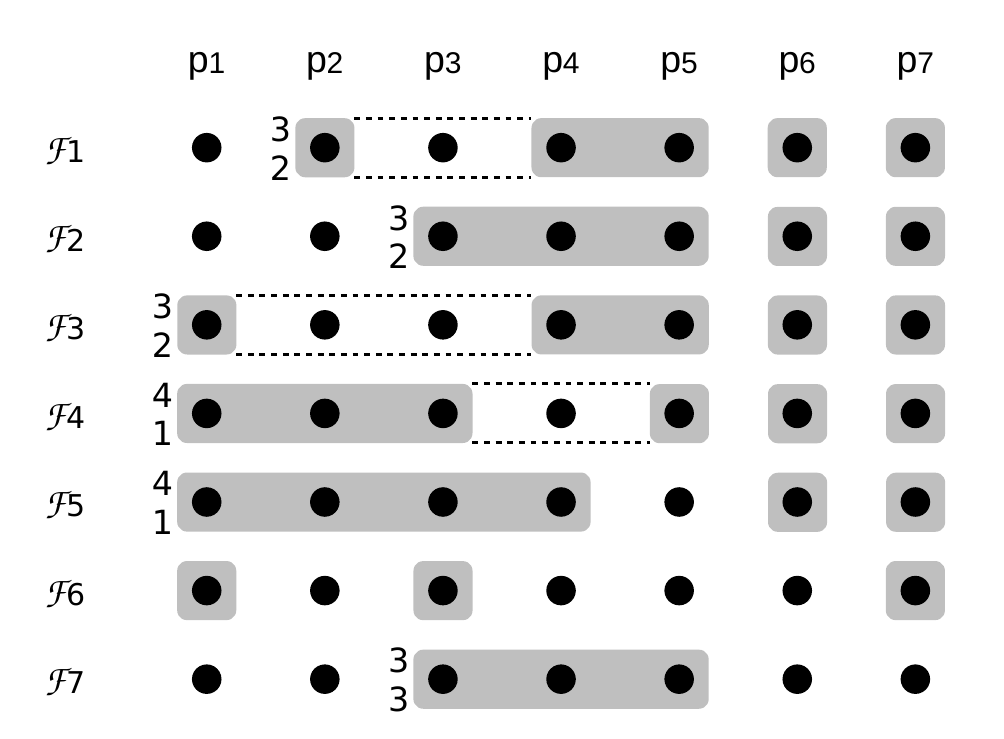}
\end{minipage}
\end{example}

\noindent
One can verify that $B^3(\BF_B)$ holds; hence, let $\BQ_B$ be the
canonical quorum system of $\BF_B$.

\noindent
\begin{minipage}[c]{0.05\linewidth}
  \vspace*{2ex}
  \center\Large$\BQ_B$:
\end{minipage}
\begin{minipage}[l]{0.8\linewidth}
  \begin{eqnarray*}
   \CQ_1 &=& \{\{p_1, p_3, p_5 \}, \{p_1, p_3, p_4 \}, \{p_1, p_2, p_3 \}\} \\
   \CQ_2 &=& \{\{p_1, p_2, p_5 \}, \{p_1, p_2, p_4 \}, \{p_1, p_2, p_3 \}\} \\
   \CQ_3 &=& \{\{p_2, p_3, p_5 \}, \{p_2, p_3, p_4 \}, \{p_1, p_2, p_3 \}\} \\
   \CQ_4 &=& \{\{p_1, p_2, p_3, p_4 \}, \{p_1, p_2, p_4, p_5 \},
             \{p_1, p_3, p_4, p_5 \}, \{p_2, p_3, p_4, p_5 \}\} \\
   \CQ_5 &=& \{\{p_1, p_2, p_3, p_5 \}, \{p_1, p_2, p_4, p_5 \},
             \{p_1, p_3, p_4, p_5 \}, \{p_2, p_3, p_4, p_5 \}\} \\
   \CQ_6 &=& \{\{p_2, p_4, p_5, p_6\}\} \\
   \CQ_7 &=& \{\{p_1, p_2, p_6, p_7 \}\} \\
   \mbox{}
  \end{eqnarray*}
\end{minipage}

With $F = \{p_4, p_5\}$, for instance, processes~$p_1, p_2, p_3$ and $p_7$
are wise, $p_6$ is \naive, and $\CG_{\text{max}} = \{p_1, p_2, p_3\}$.  It
follows that process $p_7$ is wise but outside the guild
$\CG_{\text{max}}$, because the unique maximal quorum in $\CQ_7$ contains
the \naive process $p_6$.

Lemma~\ref{lem:quorumnaive} reveals the interesting result that for an
execution with a guild, each quorum of every correct process~$p_i$ contains
at least one process that is also in the maximal guild~$\CG_{\text{max}}$.
Since a kernel for $p_i$ is a process set that has some member in common
with every quorum of $p_i$, this implies that $\CG_{\text{max}}$ contains a
kernel for~$p_i$.

\begin{corollary}\label{cor:gmaxkernel}
  In every execution with a guild, the maximal guild~$\CG_{\text{max}}$
  contains a kernel for every correct process.
\end{corollary}

It follows that whenever all processes in the maximal guild send some
particular message, then every correct process will eventually receive this
message from all processes in one of its kernels.  This is exploited by
protocols that use kernels, such as Algorithm~\ref{alg:bracha} (in
Section~\ref{sec:broadcast}).

A guild can also be seen as a set of sufficiently many wise processes that
allow a protocol to make progress, in the following sense.

\begin{lemma}
  Consider an execution, in which the processes in $F$ are faulty and let
  $\mathcal{G}_{\text{max}}$ be the maximal guild for $F$.  Let $A$ be a
  superset of $F$ that is disjoint from $\mathcal{G}_{\text{max}}$, i.e.,
  $F \subseteq A \subseteq \CP \setminus \mathcal{G}_{\text{max}}$.
  
  Then, in any execution where the processes in $A$ fail,
  $\mathcal{G}_{\text{max}}$ is also the maximal guild for~$A$.
 \end{lemma}
 
\begin{proof}
  Let $\mathcal{G}_{\text{max}}$ be the maximal guild in an execution with
  set of faulty processes
  $F \subseteq \mathcal{P}\setminus \mathcal{G}_{\text{max}}$.  By
  definition of a guild, $\mathcal{G}_{\text{max}}$ contains a quorum for
  each of its members.  This means that there exists a quorum $Q_i$ for
  every $p_i \in \mathcal{G}_{\text{max}}$ such that
  $Q_i \cap~F = \emptyset.$ This also implies that for every set
  $A \supseteq F$, with
  $A \subseteq \mathcal{P} \setminus \mathcal{G}_{\text{max}}$, we have
  that $Q_i \cap~A = \emptyset$, and the lemma follows.
\end{proof}

Given the importance of a guild for an asymmetric Byzantine quorum system,
we introduce the following notion.

\begin{definition}[Tolerated system]\label{def:tolerated_sys}
  Given an asymmetric Byzantine quorum system \BQ and an execution with
  faulty processes $F$, a set of processes~$T$ is called \emph{tolerated
    (by \BQ)} if a non-empty guild \CG for $F$ and \BQ exists such that
  $T = \CP \setminus \CG$.

  The \emph{tolerated system} \CT of an asymmetric Byzantine quorum system
  \BQ is the maximal collection of tolerated sets, where $F$ ranges over
  all possible executions.
\end{definition}

Intuitively, the tolerated system of an asymmetric Byzantine quorum system
reflects its resilience: even when all processes in a tolerated set fail,
there still exists a non-empty guild.  Therefore, the tolerated system characterizes
the executions in which some processes will be able to operate correctly
and make progress (where progress is defined by the protocol they are
running). In that sense, the tolerated system of an asymmetric Byzantine
quorum system can be seen as a counterpart of the fail-prone system in the
symmetric model.

Notice that the tolerated system is a global notion emerging from the
subjective trust choices of the participating processes; any process that
knows the fail-prone and quorum systems of all processes can calculate it.
We remark that the tolerated system is a central concept for composing
asymmetric Byzantine quorum system, as shown by Alpos \emph{et
  al.}~\cite{DBLP:conf/srds/AlposCZ21}.

The following lemma shows that the tolerated system \CT of a canonical
asymmetric Byzantine quorum system is itself a symmetric fail-prone system.
In particular, $\tau$ builds a connection to symmetric quorum-based
protocols.  This property will be used in Section~\ref{sec:consensus} to
construct an asymmetric common coin protocol.

\begin{lemma}\label{lem:b3-q3}
  Let $\BQ$ be an asymmetric Byzantine quorum system among processes \CP with asymmetric fail-prone system
  $\BF = \overline{\BQ}$, i.e., such that \BQ is a canonical asymmetric Byzantine quorum system,
  and let \CT be the tolerated system of \BQ. If $B^3(\BF)$,
  then $Q^3(\CT)$.
\end{lemma}
\begin{proof}
  Towards a contradiction, let us assume that $\CT$ does not satisfy the $Q^3$-condition. This means that there exist $T_1, T_2, T_3 \in \CT$ such that $T_1 \cup T_2 \cup T_3 = \CP$. Also, let $\CG_1, \CG_2, \CG_3$ be the corresponding guilds, i.e., $\CG_1 = \CP \setminus T_1, \CG_2 = \CP \setminus T_2$ and $\CG_3 = \CP \setminus T_3$. 
By assumption, every guild contains at least one process and at least one quorum for this process is fully contained in the guild. By the consistency property of an asymmetric Byzantine quorum system, these quorums must intersect pairwise, hence the guilds also intersect pairwise. This means that there exist processes $p_{i} \in \CG_1 \cap \CG_2$ and $p_{j} \in \CG_2 \cap \CG_3$. Now, because $p_i$ is a member of $\CG_1$, we can make the following reasoning: $p_i$ has a quorum $Q_i \in \CQ_i$ such that $Q_i \subseteq \CG_1$, the quorum system is canonical, so $p_i$ has a fail-prone set $F_i = \CP \setminus Q_i \in \CF_i$, thus we get $ T_1 \subseteq F_i$, i.e., $T_1 \in \CF_i$. With similar reasoning, we get $T_2 \in \CF_i$ (because $p_i \in \CG_2$), $T_2 \in \CF_j$ (because $p_j \in \CG_2$), and $T_3 \in \CF_j$ (because $p_j \in \CG_3$). But this is a contradiction because $p_i$ and $p_j$ with fail-prone sets $T_1, T_2$, and $T_3$ violate the $B^3$-condition in $\BQ$.
\end{proof}

\section{Shared memory}
\label{sec:memory}

This section illustrates a first application of asymmetric quorum systems:
how to emulate shared memory, represented by a \emph{register}.
Maintaining a shared register reliably in a distributed system subject to
faults is perhaps the most fundamental task for which ordinary, symmetric
quorum systems have been introduced, in the models with
crashes~\cite{DBLP:conf/sosp/Gifford79} and with Byzantine
faults~\cite{DBLP:journals/dc/MalkhiR98}.


\subsection{Definitions}

\paragraph{Operations and precedence.}

For the particular \emph{shared-object} functionalities considered here,
the processes interact with an object $\Lambda$ through \emph{operations}
provided by $\Lambda$.  Operations on objects take time and are represented
by two events occurring at a process, an \emph{invocation} and a
\emph{response}.  The \emph{history} of an execution~$h$ consists of
the sequence of invocations and responses of $\Lambda$ occurring in
$h$.  An operation is \emph{complete} in a history if it has a
matching response.

An operation~$o$ \emph{precedes} another operation~$o'$ in a sequence
of events $h$, denoted $o <_h o'$, whenever $o$ completes
before $o'$ is invoked in $h$. A sequence of events $\pi$
\emph{preserves the real-time order} of a history~$h$ if for
every two operations $o$ and $o'$ in $\pi$, if $o <_h o'$ then
$o<_\pi o'$.  Two operations are \emph{concurrent} if neither one of
them precedes the other.  A sequence of events is \emph{sequential} if
it does not contain concurrent operations.
%
%
%
An execution on a shared object is \emph{well-formed} if the events at each
process are alternating invocations and matching responses, starting with
an invocation.

\if0
\paragraph{Consistency properties.}

We use the standard notion of \emph{linearizability}~\cite{DBLP:journals/toplas/HerlihyW90}, which
requires that the operations of all processes appear to execute atomically
in one sequence.

\begin{definition}[View]\label{def:view}
  A sequence of events $\pi$ is called a \emph{view} of a history $h$
  at a process $p_i$ w.r.t. a functionality $\Lambda$ if:
\begin{enumerate}
\item $\pi$ is a sequential permutation of some subsequence
of complete operations in $h$;
\item all complete operations executed by $p_i$ appear in $\pi$; and
\item $\pi$ satisfies the sequential specification of $\Lambda$.
\end{enumerate}
\end{definition}

\begin{definition}[Linearizability~\cite{DBLP:journals/toplas/HerlihyW90}]\label{def:lin}
  A history $h$ is \emph{linearizable w.r.t. a functionality
    $\Lambda$} if there exists a sequence of events $\pi$ such that:
\begin{enumerate}
\item $\pi$ is a view of $h$ at all processes w.r.t. $\Lambda$; and
\item $\pi$ preserves the real-time order of $h$.
\end{enumerate}
\end{definition}
\fi

\paragraph{Semantics.}

A \emph{register} with domain~\CX provides two operations: $\op{write}(x)$,
which is parameterized by a value~$x \in \CX$ and outputs a token \str{ack}
when it completes; and $\op{read}$, which takes no parameter for invocation
but outputs a value $x \in \CX$ upon completion.

We consider a \emph{single-writer} (or \emph{SW}) register, where only a
designated process~$p_w$ may invoke~\op{write}, and permit \emph{multiple
  readers} (or \emph{MR}), that is, every process may execute a \op{read}
operation.  The register is initialized with a special value $x_0$, which
is written by an imaginary \op{write} operation that occurs before any
process invokes operations.  We consider \emph{regular} semantics under
concurrent access~\cite{DBLP:journals/dc/Lamport86}; the extension to other forms of
concurrent memory, including an atomic register, proceeds analogously.

It is customary in the literature to assume $p_w$ writes every value in \CX
at most once.  Furthermore, the writer and the reader are correct; with
asymmetric quorums we assume explicitly that readers and writers are
\emph{wise}.  We illustrate below why one cannot extend the guarantees of
the register to \naive processes.

\begin{definition}[Asymmetric Byzantine SWMR regular
  register]\label{def:byzreg}
  A protocol emulating an \emph{asymmetric SWMR regular register}
  satisfies:
  \begin{description}
  \item[Liveness:] If a wise process $p$ invokes an operation on the
    register, $p$ eventually completes the operation.
  \item[Safety:] Every \op{read} operation of a wise process that is not
    concurrent with a \op{write} returns the value written by the most
    recent, preceding \op{write} of a wise process; furthermore, a
    \op{read} operation of a wise process concurrent with a \op{write} of a
    wise process may also return the value that is written concurrently.
  \end{description}
\end{definition}

\if0
We are interested in a protocol where the
processes invoking operations never block, though some operations may be
aborted and thus will not complete regularly.  We call a protocol
\emph{wait-free} if in every history where the server is correct, every
operation invoked by any process completes~\cite{DBLP:journals/toplas/Herlihy91}.
\fi

\subsection{Protocol with authenticated data}

In Algorithm~\ref{alg:regular-register-sigs}, we describe a protocol for
emulating a regular SWMR register with an asymmetric Byzantine quorum
system, for a designated writer $p_w$ and a reader $p_r \in \CP$.  The
protocol uses \emph{data authentication} implemented with digital
signatures.  This protocol is the same as the classic one of Malkhi and
Reiter~\cite{DBLP:journals/dc/MalkhiR98} that uses a Byzantine dissemination quorum system
and where processes send messages to each other over point-to-point links.
The difference lies in the individual choices of quorums by the processes
and that it ensures safety and liveness for wise processes.

In more detail, every process stores a triple $(\var{ts}, v, \sigma)$,
which consists of a timestamp~\ts, a value~$v$, and a signature~$\sigma$.
The idea is that the writer maintains a timestamp that increases with every
\op{write} operation.  The writer~$p_w$ signs the timestamp/value pair and
sends it in a message together with the signature to the processes, who
will store the data if the timestamp within the received message is higher
than the timestamp~\var{ts} stored locally.  A process then responds to
$p_w$ with an \str{ack} message.  The change from the classic protocol is
the writer~$p_w$ obtains \str{ack} messages from all processes in a quorum
$Q_w \in \CQ_w$ for itself.
The reader~$p_r$ sends a \str{read} message to all processes.  It then
waits to receive responses, which carry a triple of value, timestamp, and
signature such that the signature is valid, from processes in a
quorum~$Q_r$ for~$p_r$.  The returned value is the one from the triple with
the highest timestamp.

\begin{algo*}
\vbox{
\small
\begin{numbertabbing}\reset
  xxxx\=xxxx\=xxxx\=xxxx\=xxxx\=xxxx\=MMMMMMMMMMMMMMMMMMM\=\kill
  \textbf{State} \label{}\\
  \> \(\wts\): sequence number of write operations, stored only by writer~$p_w$ \label{}\\
  \> \(\rid\): identifier of read operations, used only by reader\label{} \\
  \> \(\ts, v, \sigma\): current state stored by $p_i$: timestamp, value, signature\label{} \\
  \\
  \textbf{upon invocation} \(\op{write}(v)\) \textbf{do}
  \` // only if $p_i$ is writer $p_w$ \label{}\\
  \> \(\wts \gets \wts + 1\) \label{}\\
  \> \(\sigma \gets \op{sign}_w(\str{write}\|w\|\wts\|v)\) \label{}\\
  \> send message \(\msg{write}{\wts, v, \sigma}\) to all \(p_j \in \CP\)\label{} \\
  \> \textbf{wait for} receiving a message \msg{ack}{} from all processes in some quorum \(Q_w \in \CQ_w\) \label{}\\
  \\
  \textbf{upon invocation} \(\op{read}\) \textbf{do}
  \` // only if $p_i$ is reader $p_r$ \label{}\\
  \> \(\rid \gets \rid + 1\) \label{}\\
  \> send message \(\msg{read}{\rid}\) to all \(p_j \in \CP\) \label{}\\
  \> \textbf{wait for} receiving messages \(\msg{value}{r_j, \ts_j, v_j, \sigma_j}\)
     from all processes in some \(Q_r \in \CQ_r\) \textbf{such that} \label{}\\
  \> \> \(r_j = \rid\) \textbf{and}
     \(\op{verify}_w(\sigma_j, \str{write}\|w\|\ts\|v_j)\) \label{}\\
  \> \textbf{return} \( \op{highestval}( \{ (\ts_j, v_j) | j \in Q_r\} ) \)\label{} \\
  \\
  \textbf{upon} receiving a message \(\msg{write}{\ts', v', \sigma'}\) from $p_w$ \textbf{do}
  \` // every process \label{}\\
  \> \textbf{if} \(\ts' > \ts\) \textbf{then} \label{}\\
  \> \> \((\ts, v, \sigma) \gets (\ts', v', \sigma')\) \label{}\\
  \> send message \(\msg{ack}{}\) to $p_w$ \label{}\\
  \\
  \textbf{upon} receiving a message \(\msg{read} r\) from $p_r$ \textbf{do} 
  \` // every process \label{}\\
  \> send message \(\msg{value}{r, \ts, v, \sigma}\) to $p_r$\label{}\\[-5ex]
\end{numbertabbing}
}
\caption{Emulation of an asymmetric SWMR regular register (process~$p_i$).}
\label{alg:regular-register-sigs}
\end{algo*}

The function $\op{highestval}(S)$ takes a set of timestamp/value pairs~$S$
as input and outputs the value in the pair with the largest timestamp,
i.e., $v$ such that $(\ts, v) \in S$ and
$\forall (\ts', v') \in S : \ts' < \ts \lor (\ts', v') = (\ts, v)$.  Note
that this $v$ is unique in Algorithm~\ref{alg:regular-register-sigs}
because $p_w$ is correct.
The protocol uses digital signatures, modeled by operations $\op{sign}_i$
and $\op{verify}_i$, as introduced earlier.


\begin{theorem}\label{thm:regular}
  Algorithm~\ref{alg:regular-register-sigs} emulates an asymmetric
  Byzantine SWMR regular register.
\end{theorem}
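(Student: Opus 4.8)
The plan is to establish the two properties of Definition~\ref{def:byzreg} separately, exploiting throughout that the reader $p_r$ and the writer $p_w$ are wise, so that the actual fault set $F$ satisfies $F \in {\CF_w}^* \cap {\CF_r}^*$ and every quorum obtained from the availability property consists of genuinely correct processes.

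For \emph{liveness}, consider an operation invoked by a wise process. Since $p_w$ is wise, $F \in {\CF_w}^*$, so some $F_w \in \CF_w$ contains $F$; by the availability property of $\CQ_w$ there is a quorum $Q_w \in \CQ_w$ with $F_w \cap Q_w = \emptyset$, hence consisting solely of correct processes. Each such process receives the \str{write} message over the reliable links, updates its state when the timestamp is newer, and replies with \str{ack}, so $p_w$ eventually collects acknowledgements from all of $Q_w$ and completes. The argument for \op{read} is symmetric: wisdom of $p_r$ and availability of $\CQ_r$ furnish a quorum $Q_r$ of correct processes that all answer with a \str{value} message carrying the matching \rid. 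I would then check that the verification $\op{verify}_w$ succeeds on these answers, which holds because each correct process stores either the signed initial pair for $x_0$ or a pair forwarded by the correct writer; in both cases the signature is valid. Hence the set of verifying replies eventually contains $Q_r$ and the read returns.

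For \emph{safety}, take a complete \op{read} of a wise $p_r$ that returns through some quorum $Q_r$, and let the most recent preceding \op{write} of a wise process carry timestamp $\tau$ and value $v^\ast$ and complete via a quorum $Q_w \in \CQ_w$. Because both processes are wise, $F \in {\CF_w}^* \cap {\CF_r}^*$, so instantiating the consistency property with $F_{ij} = F$ gives $Q_w \cap Q_r \not\subseteq F$; I fix a correct $p_k \in (Q_w \cap Q_r) \setminus F$. As $p_k \in Q_w$, it processed that \str{write} before acknowledging it, so its stored timestamp is at least $\tau$ by the time the write completes; since timestamps are monotone and the read is not concurrent with any write, $p_k$ still reports $(\tau, v^\ast)$, forcing $\op{highestval}$ to select a timestamp at least $\tau$.

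The crux is to rule out a strictly larger returned timestamp. Any pair $(\ts_j, v_j)$ that the reader accepts from $p_j \in Q_r$ must satisfy $\op{verify}_w$, and by the idealized signature assumption this means $p_w$ actually signed it; since $p_w$ is correct it signs strictly increasing timestamps and, by the choice of the most recent write, has signed nothing above $\tau$ before the read returns. Thus no verifying reply, not even one from a faulty member of $Q_r$, carries a timestamp exceeding $\tau$, and $\op{highestval}$ returns exactly $v^\ast$. The concurrent case follows the same template: a \op{write} running concurrently with the \op{read} may or may not have reached the members of $Q_r$, so the highest verifying timestamp is either $\tau$ or the concurrent one, which is precisely what Definition~\ref{def:byzreg} allows. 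I expect the main obstacle to be this timestamp bookkeeping at $p_k$ combined with the signature argument excluding forged higher timestamps; the intersection step itself is immediate once one observes that wisdom makes $F$ an admissible choice for $F_{ij}$ in the consistency condition.
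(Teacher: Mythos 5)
Your proof is correct and follows essentially the same route as the paper's: liveness from wisdom plus quorum availability, and safety from the quorum consistency condition (instantiated with $F_{ij}=F$, admissible because both $p_w$ and $p_r$ are wise) to obtain a correct process in $Q_w \cap Q_r$ reporting the latest timestamp, combined with the idealized signature assumption to exclude any higher forged timestamp. Your treatment is somewhat more explicit than the paper's (notably the monotonicity bookkeeping at $p_k$ and the initial-value case in the read's verification step), but there is no substantive difference in the argument.
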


\begin{proof}
  First we show liveness for wise writer $p_w$ and reader $p_r$,
  respectively.  Since $p_w$ is wise by assumption, \(F \in {\CF_w}^*\),
  and by the availability condition of the quorum system there is
  \(Q_w \in \CQ_w\) with \(F \cap Q_w = \emptyset\). Therefore, the writer
  will receive sufficiently many \(\msg{ack}{}\) messages and the
  \op{write} will return.  As $p_r$ is wise, \(F \in {\CF_r}^*\), and by
  the analogous condition, there is \(Q_r \in \CQ_r\) with
  \(F \cap Q_r = \emptyset\). Because $p_w$ is correct and by the
  properties of the signature scheme, all responses from processes
  \(p_j \in Q_r\) satisfy the checks and \op{read} returns.

  Regarding safety, it is easy to observe that any value output by
  \op{read} has been written in some preceding or concurrent \op{write}
  operation, and this even holds for \naive readers and writers. This
  follows from the properties of the signature scheme; \op{read} verifies
  the signature and outputs only values with a valid signature produced by~
  $p_w$.

  We now argue that when both the writer and the reader are wise, then
  \op{read} outputs a value of either the last preceding \op{write} or a
  concurrent \op{write} and the protocol satisfies safety for a regular
  register. On a high level, note that \(F \in {\CF_w}^* \cap {\CF_r}^*\)
  since both are wise. So if $p_w$ writes to a quorum \(Q_w \in \CQ_w\) and
  $p_r$ reads from a quorum \(Q_r \in \CQ_r\), then by consistency of the
  quorum system \(Q_w \cap Q_r \not\subseteq F\) because $p_w$ and $p_r$
  are wise.  Hence, there is some correct \(p_i \in Q_w \cap Q_r\) that
  received the most recently written value from $p_w$ and returns it
  to~$p_r$.
  \if0  -- cc: needed?
  In more detail, consider any pair of two events where one is a
  \op{read} and one is a \op{write}. Let the \op{write} precede
  the \op{read} in the execution history. When the \op{write} for
  message \msg{write}{\wts, \wval, \sigma} returns, it has collected
  \msg{ack}{} messages from processes \(p\in Q_\writer\) for some
  \(Q_\writer \in \CQ_\writer\). The \op{read} that is initiated later
  receives messages \(\msg{value}{r, \ts, v, \sigma}\) from processes
  \(p \in Q_\reader\) for some \(Q_\reader \in \CQ_\reader\). As both
  \reader and \writer are wise, \(F \in \CF_\writer \cap \CF_\reader\) and,
  by consistency of the quorum system, there is a
  \(p \in (Q_\writer\cap Q_\reader)\setminus F\). As \(p\) received
  \msg{write}{\wts, \wval, \sigma} with time stamp \wts from \writer, the
  time stamps at \(p\) only increase, and the \op{write} precedes the
  \op{read}, this means that \(\ts \geq \wts\). As the signature
  verifies and the client returns the value with the highest timestamp,
  that means the time stamp of the value that is output is at least
  \(\wts\).



  Applying the above argument to the immediately preceding \op{write}
  is actually sufficient to conclude the proof: If the \op{read} is not
  concurrent with any \op{write}, then the value of the immediately
  preceding \op{write} is returned. If the \op{read} is concurrent
  with one or more \op{write}s, then it returns either the value of the
  last preceding \op{write} or one of a concurrent \op{write}.
  \fi
\end{proof}

\begin{example} 
We show why the guarantees of this protocol with
asymmetric quorums hold only for wise readers and writers.  Consider
$\BQ_A$ from the last section and an execution in which $p_2$ and $p_4$ are
faulty, and therefore $p_1$ is \naive and $p_3$ and $p_5$ are wise.  A
quorum for $p_1$ consists of $p_1$ and three processes in
$\{ p_2, \dots, p_5\}$; moreover, every process set that contains $p_3$,
one of $\{p_1, p_2\}$ and one of $\{p_4, p_5\}$ is a quorum for~$p_3$.

We illustrate that if \naive $p_1$ writes, then a wise reader $p_3$ may
violate safety.  Suppose that all correct processes, especially $p_3$,
store timestamp/value/signature triples from an operation that has
terminated and that wrote~$x$.  When $p_1$ invokes $\op{write}(u)$, it
obtains \msg{ack}{} messages from all processes except~$p_3$.  This is a
quorum for~$p_1$.  Then $p_3$ runs a \op{read} operation and receives the
outdated values representing $x$ from itself ($p_3$ is correct but has not
been involved in writing $u$) and also from the faulty $p_2$ and $p_4$.
Hence, $p_3$ outputs $x$ instead of~$u$.

Analogously, with the same setup of every process initially storing a
representation of $x$ but with wise $p_3$ as writer, suppose $p_3$ executes
$\op{write}(u)$.  It obtains \msg{ack}{} messages from $p_2$, $p_3$, and
$p_4$ and terminates.  When $p_1$ subsequently invokes \op{read} and
receives values representing $x$, from correct $p_1$ and $p_5$ and from
faulty $p_2$ and $p_4$, then $p_1$ outputs $x$ instead of~$y$ and violates
safety as a \naive reader.

Since the sample operations are not concurrent, the implication actually
holds also for registers with only safe semantics.
\end{example}

\subsection{Double-write protocol without data authentication}

\newcommand\pts{\var{pts}\xspace}
\newcommand\pv{\var{pv}\xspace}

This section describes a second protocol emulating an asymmetric Byzantine
SWMR regular register.  In contrast to the previous protocol, it does not
use digital signatures for authenticating the data to the reader.  Our
algorithm generalizes the construction of Abraham et
al.~\cite{DBLP:journals/dc/AbrahamCKM06} and also assumes that only a
finite number of write operations occur (\emph{FW-termination}).
Furthermore, this algorithm illustrates the use of asymmetric core-set
systems in the context of an asymmetric-trust protocol.

This protocol extends Algorithm~\ref{alg:regular-register-sigs} and every
process stores the most recently written timestamp-value pair $(\ts, v)$.
Every \op{write} operation performs two rounds instead of one, a pre-write
round and a write round.  In addition to the previous protocol, every
process stores the most recently pre-written timestamp-value
pair~$(\pts, \pv)$.  From the perspective of the writer~$p_w$, each round
proceeds like the single round in
Algorithm~\ref{alg:regular-register-sigs}, except that $p_w$ does not
produce a digital signature.  In particular, $p_w$ waits in each round for
responses that form a quorum $Q_w \in \CQ_w$ for itself.

The reader~$p_r$ exchanges one round of messages with the processes and
waits for responses that form a quorum $Q_r \in \CQ_r$ for~$p_r$.  Every
response contains the pre-written and the written timestamp-value pairs
from the sending process.  The reader collects these in an array
\var{readlist} until the following condition is satisfied.  A pair
$(\ts^*, v^*)$, a core set $C_r$ for $p_r$ of entries in \var{readlist},
and a quorum $Q_r$ for $p_r$ of entries in \var{readlist} exist such that
(1) the pair $(\ts^*, v^*)$ is either the pre-written or the written pair
in all entries of \var{readlist} in $C_r$; and (2) $(\ts^*, v^*)$ is the
pair with the highest timestamp among the entries in $Q_r$.  Intuitively,
the initial pre-write round and the core set~$C_r$ that reports this value
to $p_r$ replace the step of authenticating the value through a digital
signature.  This respects safety because $C_r$, for a wise $p_r$, contains
at least one correct process that has not altered the value.  The full
protocol appears in Algorithm~\ref{alg:regular-register-doublewrite}.

\begin{algo*}
\vbox{
\small
\begin{numbertabbing}\reset
  xxxx\=xxxx\=xxxx\=xxxx\=xxxx\=xxxx\=MMMMMMMMMMMMMMMMMMM\=\kill
  \textbf{State} \label{} \\
  \> \(\wts\): sequence number of write operations, stored only by writer~$p_w$ \label{}\\
  \> \(\rid\): identifier of read operations, used only by reader \label{}\\
  \> \(\pts, \pv, \ts, v\): current state stored by $p_i$: pre-written timestamp and value, written timestamp and value \label{}\\
  \\
  \textbf{upon invocation} \(\op{write}(v)\) \textbf{do}
  \` // only if $p_i$ is writer $p_w$ \label{}\\
  \> \(\wts \gets \wts + 1\)  \label{}\\
  \> send message \(\msg{prewrite}{\wts, v}\) to all \(p_j \in \CP\) \label{}\\
  \> \textbf{wait for} receiving a message \msg{preack}{} from all processes
  in some quorum \(Q_w \in \CQ_w\) \label{}\\
  \> send message \(\msg{write}{\wts, v}\) to all \(p_j \in \CP\) \label{}\\
  \> \textbf{wait for} receiving a message \msg{ack}{} from all processes
  in some quorum \(Q_w \in \CQ_w\) \label{}\\
  \\
  \textbf{upon invocation} \op{read} \textbf{do}
  \` // only if $p_i$ is reader $p_r$ \label{}\\
  \> \(\rid \gets \rid + 1\) \label{}\\
  \> send message \(\msg{read}{\rid}\) to all \(p_j \in \CP\) \label{}\\
  \\
  \textbf{upon} receiving a message
     \([\str{value}, r_j, \pts_j, \pv_j, \ts_j, v_j]\) from $p_j$
     \textbf{such that}
  \` // only if $p_i$ is reader $p_r$ \label{}\\
  \> \> \(r_j = \rid \land \bigl(\pts_j = \ts_j + 1 \vee
     (\pts_j,\pv_j) = (\ts_j, v_j)\bigr)\) \textbf{do} \label{} \\
  \> $\var{readlist}[j] \gets (\pts_j, \pv_j, \ts_j, v_j)$ \label{}\\
  \> \textbf{if} there exist $\ts^*,v^*$,
     a core set $C_r \in \CC_r$ for $p_r$,
     and a quorum $Q_r \in \CQ_r$ for $p_r$ \textbf{such that}\label{} \\
  \> \> $C_r \subseteq \bigl\{p_k |
     \var{readlist}[k] = (\pts_k, \pv_k, \ts_k, v_k) \} \land \bigl(
     (\pts_k, \pv_k) = (\ts^*, v^*) \lor (\ts_k, v_k) = (\ts^*, v^*) \bigr)
     \bigr\}$ \textbf{and}\label{} \\
  \> \> $Q_r = \bigl\{p_k |
        \var{readlist}[k] = (\pts_k, \pv_k, \ts_k, v_k)$ \label{}\\
  \> \> \> \> $\mbox{} \land \bigl( (\ts_k < \ts^*) \lor
        (\pts_k, \pv_k) = (\ts^*, v^*) \lor (\ts_k, v_k) = (\ts^*, v^*) \bigr)
        \bigr\}$ \textbf{then} \label{}\\
  \> \> \textbf{return} \(v^*\) \label{}\\
  \> \textbf{else} \label{}\\
  \> \> send message \msg{read}{rid} to all \(p_j \in \CP\)\label{}\\
  \\
  \textbf{upon} receiving a message \(\msg{prewrite}{\ts', v'}\) from $p_w$
    \textbf{such that} \(\ts' = \pts + 1 \land \pts = \ts\) \textbf{do} \label{}\\
    \> \((\pts, \pv) \gets (\ts', v')\) \label{}\\
    \> send message \(\msg{preack}{}\) to $p_w$\label{} \\
  \\
  \textbf{upon} receiving a message \(\msg{write}{\ts', v'}\) from $p_w$
    \textbf{such that} \(\ts' = \pts \land v' = \pv\) \textbf{do} \label{}\\
    \> \((\ts, v) \gets (\ts', v')\) \label{}\\
    \> send message \(\msg{ack}{}\) to $p_w$ \label{}\\
  \\
  \textbf{upon} receiving a message \(\msg{read}{r}\) from $p_r$ \textbf{do} \label{}\\
    \> send message \(\msg{value}{r, \pts, \pv, \ts, v}\) to $p_r$\label{}\\[-5ex]
\end{numbertabbing}
}
\caption{Double-write emulation of an asymmetric SWMR regular register (process~$p_i$).}
\label{alg:regular-register-doublewrite}
\end{algo*}

\begin{theorem}\label{thm:doublewrite}
  Algorithm~\ref{alg:regular-register-doublewrite} emulates an asymmetric
  Byzantine SWMR regular register, provided there are only finitely many
  write operations.
\end{theorem}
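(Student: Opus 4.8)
The plan is to prove liveness and safety separately, reusing the quorum consistency and availability arguments of Theorem~\ref{thm:regular} but (i) replacing the signature-based genuineness step by the core-set condition and (ii) using the two-phase \str{prewrite}/\str{write} bookkeeping to control timestamps; throughout I assume, as stated, that the reader $p_r$ and the writer $p_w$ are wise. First I would record the invariants that every correct process maintains. The acceptance guards $\ts' = \pts + 1 \land \pts = \ts$ for \str{prewrite} and $\ts' = \pts \land v' = \pv$ for \str{write} force $\pts \ge \ts$ at all times, $\pts \in \{\ts, \ts+1\}$, monotone non-decreasing $\pts$ and $\ts$, and that $(\pts,\pv)$ and $(\ts,v)$ only ever hold pairs actually sent by the correct writer. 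Together with FIFO links this shows that a correct process processes $\msg{prewrite}{t,\cdot}$ and $\msg{write}{t,\cdot}$ in increasing order of $t$ and never blocks.

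For \op{write} liveness of wise $p_w$: since $F \in {\CF_w}^\ast$, availability gives a quorum $Q_w \in \CQ_w$ with $F \cap Q_w = \emptyset$, i.e.\ of correct processes, which by the invariants accept the \str{prewrite} and then the matching \str{write} and answer \str{preack}/\str{ack}, so both \textbf{wait for} statements unblock. For \op{read} liveness of wise $p_r$ I would invoke FW-termination: after the finitely many writes of the correct $p_w$ stop, the last one being $(\ts, v)$, every correct process eventually processes all messages up to $\msg{write}{\ts, v}$ and stabilizes with both its prewrite pair and its written pair equal to $(\ts, v)$, and such reports pass the reader's internal-consistency filter. Taking $\ts^\ast = \ts$ and $v^\ast = v$, the quorum condition is met by a correct quorum $Q_r$ (availability), and the core-set condition reduces to showing that the correct processes $\CP \setminus F$ contain some $C_r \in \CC_r$. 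Here I use that $B^3(\BF)$ with $i = j = r$ yields $Q^3(\CF_r)$, so no two sets of $\CF_r$ cover $\CP$; hence the survivor set $\CP \setminus F_r \supseteq \CP \setminus F$ lies outside ${\CF_r}^\ast$ and therefore contains a minimal such set, namely a core set. As $p_r$ re-polls on failure, it eventually returns.

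For safety I would establish two properties of any pair $(\ts^\ast, v^\ast)$ returned by wise $p_r$. \emph{Genuineness:} the core set $C_r$ is not contained in $F_r \supseteq F$, so it contains a correct $p_c$; since $p_c$ stores only writer-sent pairs, $p_w$ actually sent $\msg{prewrite}{\ts^\ast, v^\ast}$ or $\msg{write}{\ts^\ast, v^\ast}$ and $p_c$ received it before replying, so the $\ts^\ast$-th \op{write} was invoked before the \op{read} completed. \emph{Non-staleness:} if a \op{write} $(\ts, v)$ completed before the \op{read} was invoked, its \str{ack}-quorum $Q_w \in \CQ_w$ and the read's $Q_r \in \CQ_r$ satisfy, by consistency with $F \in {\CF_w}^\ast \cap {\CF_r}^\ast$, that $Q_w \cap Q_r \not\subseteq F$; a correct $p_i$ there has $\ts_i \ge \ts$ (it acked $(\ts,v)$) and $\pts_i \ge \ts_i$ ever after, so the reader's quorum guard forces $\ts^\ast \ge \ts$. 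Combining these: a \op{read} not concurrent with any \op{write} has $\ts^\ast \le \ts$ (genuineness plus non-concurrency) and $\ts^\ast \ge \ts$, hence $\ts^\ast = \ts$ and, since $p_w$ is correct, $v^\ast = v$; a concurrent \op{read} has $\ts^\ast \ge \ts$ with $v^\ast$ stemming from a preceding or concurrent \op{write}, exactly as regular semantics allows.

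I expect the principal difficulty to be the two-phase bookkeeping rather than any isolated trick: one must fix the precise invariants linking $\pts, \pv, \ts, v$ at correct processes and verify their preservation by every handler, because these invariants underpin both the non-staleness guard and the stabilization step of read liveness. A second delicate point is the core-set existence claim, where the chain $B^3(\BF) \Rightarrow Q^3(\CF_r) \Rightarrow$ ``$\CP \setminus F_r \notin {\CF_r}^\ast$'' $\Rightarrow$ ``some core set lies in $\CP \setminus F$'' must be spelled out, as this is the only place where the asymmetric trust assumption enters the read path beyond the standard consistency and availability of~$\BQ$.
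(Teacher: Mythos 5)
Your proposal is correct and follows essentially the same route as the paper's proof: safety is split into the same two halves (a correct member of the core set witnesses that $(\ts^\ast,v^\ast)$ was genuinely sent by the writer, and intersection of the read quorum with the write's \str{ack}-quorum forces $\ts^\ast \geq \wts$), and liveness uses availability for the writer plus FW-termination and eventual stabilization for the reader. You merely spell out two details the paper leaves implicit — the invariants on $\pts,\pv,\ts,v$ at correct processes and the derivation that $\CP \setminus F$ contains a core set for a wise reader via $B^3(\BF) \Rightarrow Q^3(\CF_r)$ — both of which are accurate.
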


\begin{proof}
  We first establish safety when the writer $p_w$ and the reader $p_r$ are
  wise. In that case, \(F \in {\CF_w}^* \cap {\CF_r}^*\).
  During in a \op{write} operation, $p_w$ has received \str{preack} and
  \str{ack} messages from $Q_w \in \CQ_i$ and $Q_w' \in \CQ_i$,
  respectively, and for all \(Q_r \in \CQ_r\) it holds that
  \(Q_w\cap Q_r \not\subseteq F\) and \(Q_w'\cap Q_r\not\subseteq F\).

  We now argue that any pair \((\ts^*,v^*)\) returned by $p_r$ was written
  by $p_w$ either in a preceding or a concurrent \op{write}.  From the
  properties of the core set $C_r$, because $p_r$ is wise, and together
  with the condition that $(\ts^*, v^*)$ satisfies, it follows that at
  least one correct process exists in $C_r$ that stores $(\ts^*, v^*)$ as a
  pre-written or as a written value.  Thus, the pair was written by $p_w$
  before.

  Next we argue that for every completed \(\op{write}(v^*)\) operation, in
  which $p_w$ has sent \msg{write}{\wts, v^*}, and for any subsequent
  \op{read} operation that selects \((\ts^*,v^*)\) and returns $v^*$, it
  must hold \(\wts \leq \ts^*\).  Namely, the condition on $Q_r$ implies
  that \(\ts^* \geq \ts_k\) for all \(p_k \in Q_r\). By the consistency of
  the quorum system, it holds that \(Q_w' \cap Q_r \not\subseteq F\), so
  there is a correct process \(p_\ell \in Q_w'\cap Q_r\) that has sent
  $\ts_\ell$ to $p_r$.  Then \(\ts^* \geq \ts_\ell \geq \wts\) follows
  because the timestamp variable of \(p_\ell\) only increases.

  The combination of the above two paragraphs implies that for \op{read}
  operations that are not concurrent with any \op{write}, the pair
  \((\ts^*,v^*)\) chosen by \op{read} was actually written in the
  immediately preceding \op{write}.  If the \op{read} operation occurs
  concurrently with a \op{write}, then the pair \((\ts^*,v^*)\) chosen by
  \op{read} may also originate from the concurrent \op{write}.  This
  establishes the safety property of the SWMR regular register.

  We now show liveness. First, if $p_w$ is wise, then there exists a quorum
  \(Q_w \in \CQ_w\) such that \(Q_w \cap F = \emptyset\).  Second, any
  correct process will eventually receive all \msg{prewrite}{\wts,v} and
  \msg{write}{\wts,v} messages sent by $p_w$ and process them in the
  correct order by the assumption of FIFO links. This means that $p_w$ will
  receive \msg{preack}{} and \msg{ack}{} messages, respectively, from all
  processes in one of its quorums, since at least the processes in $Q_w$
  will eventually send those.

  Liveness for the reader $p_r$ is shown under the condition that $p_r$ is
  wise and that the \op{read} operation is concurrent with only finitely
  many \op{write} operations. The latter condition implies that there is
  one last \op{write} operation that is initiated, but does not necessarily
  terminate, while \op{read} is active.

  By the assumption that $p_w$ is correct and because messages are received
  in FIFO order, all messages of that last \op{write} operation will
  eventually arrive at the correct processes.  Notice also that $p_r$
  simply repeats its steps until it succeeds and returns a value that
  fulfills the condition.  Hence, there is a time after which all correct
  processes reply with \str{value} messages that contain pre-written and
  written timestamp/value pairs from that last operation.  It is easy to
  see that there exist a core set and a quorum for $p_r$ that satisfy the
  condition and the reader returns.  In conclusion, the algorithm emulates
  an asymmetric regular SWMR register, where liveness holds only for
  finitely many write operations.
\end{proof}

\section{Broadcast}
\label{sec:broadcast}

This section shows how to implement two \emph{broadcast primitives}
tolerating Byzantine faults with asymmetric quorums.  Recall from the
standard literature~\cite{10.5555/HadzilacosT93,CharronBostPS10,DBLP:books/daglib/0025983} that
reliable broadcasts offer basic forms of reliable message delivery and
consistency, but they do not impose a total order on delivered messages (as
this is equivalent to consensus).  The Byzantine broadcast primitives
described here, \emph{consistent broadcast} and \emph{reliable broadcast},
are prominent building blocks for many more advanced protocols.

With both primitives, the sender process may broadcast a message $m$ by
invoking $\op{broadcast}(m)$; the broadcast abstraction outputs $m$ to the
local application on the process through a $\op{deliver}(m)$ event.
Moreover, the notions of broadcast considered in this section are intended
to deliver only one message per instance.  Every instance has a distinct
(implicit) label and a designated sender~$p_s$.  With standard multiplexing
techniques one can extend this to a protocol in which all processes may
broadcast messages repeatedly~\cite{DBLP:books/daglib/0025983}.

\paragraph{Byzantine consistent broadcast.}
The simplest such primitive, which has been called \emph{(Byzantine)
  consistent broadcast}~\cite{DBLP:books/daglib/0025983}, ensures only that those correct
processes which deliver a message agree on the content of the message, but
they may not agree on termination.  In other words, the primitive does not
enforce ``reliability'' such that a correct process outputs a message if
and only if all other correct processes produce an output.  The events in
its interface are denoted by \op{c-broadcast} and \op{c-deliver}.

The change of the definition towards asymmetric quorums affects most of its
guarantees, which hold only for wise processes but not for all correct
ones.  This is similar to the definition of a register in
Section~\ref{sec:memory}.

\begin{definition}[Asymmetric Byzantine consistent
  broadcast]\label{def:acbc}
  A protocol for \emph{asymmetric (Byzantine) consistent broadcast} satisfies:

\begin{description}
\item[Validity:] If a correct process $p_s$ \op{c-broadcasts} a
  message~$m$, then all wise processes eventually \op{c-deliver}~$m$.
  
\item[Consistency:] If some wise process \op{c-delivers}~$m$ and another
  wise process \op{c-delivers}~$m'$, then $m=m'$.
  
\item[Integrity:] For any message~$m$, every correct process
  \op{c-delivers} $m$ at most once.  Moreover, if the sender $p_s$ is
  correct and the receiver is wise, then $m$ was previously
  \op{c-broadcast} by $p_s$.
\end{description}

\end{definition}

The following protocol is an extension of ``authenticated echo
broadcast''~\cite{DBLP:books/daglib/0025983}, which goes back to Srikanth
and Toueg~\cite{DBLP:journals/dc/SrikanthT87}.  It is a building block
found in many Byzantine fault-tolerant protocols with greater complexity.
The protocol first has the sender~$p_s$ send its message~$m$ to all
processes; then every process echoes $m$, in the sense that it rebroadcasts
an \str{echo} message with $m$ to all processes.  As soon as a process
receives a quorum of such \str{echo} messages that all contain the
same~$m'$, the process \op{c-delivers}~$m'$.
The adaptation for asymmetric quorums is straightforward: Every process
considers its own quorum system before \op{c-delivering} the message.

\begin{algo*}
\vbox{
\small
\begin{numbertabbing}\reset
  xxxx\=xxxx\=xxxx\=xxxx\=xxxx\=xxxx\=MMMMMMMMMMMMMMMMMMM\=\kill
  \textbf{State} \label{}\\
  \> \(\var{sentecho} \gets \false\): indicates whether $p_i$ has sent \str{echo} \label{}\\
  \> \(\var{echos} \gets [\bot]^N\): collects the received \str{echo} messages from other processes \label{}\\
  \> \(\var{delivered} \gets \false\): indicates whether $p_i$ has delivered a message\label{}\\
  \\
  \textbf{upon invocation} \(\op{c-broadcast}(m)\) \textbf{do} \label{}\\
  \> send message \msg{send}{m} to all \(p_j \in \CP\) \label{}\\
  \\
  \textbf{upon} receiving a message \msg{send}{m} from $p_s$
     \textbf{such that} \(\neg \var{sentecho}\) \textbf{do} \label{}\\
  \> \(\var{sentecho} \gets \true\)\label{} \\
  \> send message \msg{echo}{m} to all \(p_j \in \CP\) \label{}\\
  \\
  \textbf{upon} receiving a message \msg{echo}{m} from \(p_j\) \textbf{do} \label{}\\
  \> \textbf{if} \(\var{echos}[j] = \bot\) \textbf{then}\label{} \\
  \> \> \(\var{echos}[j] \gets m\) \label{}\\
  \\
  \textbf{upon exists} \(m \not= \bot\) \textbf{such that}
     \(\{ p_j \in \CP | \var{echos}[j] = m\} \in \CQ_i\) \textbf{and}
     \(\neg \var{delivered}\) \textbf{do} \label{}\\
  \> \(\var{delivered} \gets \true\) \label{}\\
  \> \textbf{output} \(\op{c-deliver}(m)\)\label{}\\[-5ex]
\end{numbertabbing}
}
\caption{Asymmetric Byzantine consistent broadcast protocol with sender~$p_s$
  (process~$p_i$)}
\label{alg:srikanth-toueg}
\end{algo*}

\begin{theorem}\label{thm:srikanth-toueg}
  Algorithm~\ref{alg:srikanth-toueg} implements asymmetric Byzantine
  consistent broadcast.
\end{theorem}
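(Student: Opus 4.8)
The plan is to verify the three properties of Definition~\ref{def:acbc} --- integrity, consistency, and validity --- directly from the protocol, and I would first isolate two observations that drive all three. The first is about quorums of wise processes: if $p_i$ is wise then $F \in {\CF_i}^*$, so instantiating the consistency condition of \BQ with $j=i$, $Q_j=Q_i$, and $F_{ij}=F$ gives $Q_i \not\subseteq F$, i.e.\ every quorum of a wise process contains a correct process; and if $p_i,p_j$ are both wise then $F \in {\CF_i}^* \cap {\CF_j}^*$, so consistency gives $Q_i \cap Q_j \not\subseteq F$, i.e.\ the intersection of a quorum of $p_i$ with a quorum of $p_j$ contains a correct process. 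The second observation is that a correct process sends an \str{echo} at most once and with a single value, since \var{sentecho} is set on the first \str{send} it accepts and is never reset.

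For \textbf{integrity}, the ``at most once'' clause follows from the \var{delivered} flag. For the second clause I would take a wise $p_i$ that \op{c-delivers}~$m$ with $p_s$ correct: delivery means $p_i$ collected an \str{echo} for $m$ from a quorum $Q_i \in \CQ_i$, which by the first observation contains a correct $p_k$; a correct $p_k$ echoes $m$ only after receiving \msg{send}{m} from $p_s$, and a correct $p_s$ sends \msg{send}{m} only when \op{c-broadcasting}~$m$, so $m$ was indeed \op{c-broadcast} by $p_s$.

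For \textbf{consistency}, I would let wise $p_i$ \op{c-deliver}~$m$ and wise $p_j$ \op{c-deliver}~$m'$, witnessed by quorums $Q_i \in \CQ_i$ and $Q_j \in \CQ_j$ of senders of \str{echo} for $m$ and for $m'$ respectively, and then locate a correct process $p_k \in Q_i \cap Q_j$ by the first observation; since $p_k$ sent \str{echo} for both values but echoes only once, $m=m'$. For \textbf{validity}, I would let correct $p_s$ \op{c-broadcast}~$m$ and $p_i$ be wise: $p_s$ sends \msg{send}{m} and nothing else to everyone, so each correct process eventually echoes $m$; by availability applied to the wise $p_i$ there is a quorum $Q_i \in \CQ_i$ disjoint from $F$, hence made of correct processes, all of which echo $m$, so $p_i$ eventually satisfies its delivery guard and \op{c-delivers}~$m$, where if $p_i$ had delivered earlier the integrity argument already forces that value to be $m$.

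The flag arguments and the message-flow reasoning are routine; the step I expect to require the most care --- and the genuine crux --- is supplying the \emph{right} fail-prone set to the asymmetric consistency condition. The point is that wisdom of the two delivering processes is exactly what licenses $F$ as a common witness $F_{ij} \in {\CF_i}^* \cap {\CF_j}^*$, which is what forces a shared correct process into $Q_i \cap Q_j$; dropping wisdom breaks this intersection argument, which is also the reason the guarantees cannot be extended to \naive processes.
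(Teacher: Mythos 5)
Your proof is correct and follows essentially the same route as the paper's: validity via the availability condition for a wise reader, consistency via the quorum-intersection property instantiated with $F \in {\CF_i}^* \cap {\CF_j}^*$, and integrity via the \var{delivered} flag plus the existence of a correct process in any quorum of a wise process. Your explicit observation that setting $j=i$ in the consistency condition yields $Q_i \not\subseteq F$ for wise $p_i$, and your handling of the case where $p_i$ might satisfy the delivery guard with a different value, are slightly more careful than the paper's wording but do not change the argument.
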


\begin{proof}
  For the \emph{validity} property, it is straightforward to see that every
  correct process sends \msg{echo}{m}.  According to the availability
  condition for the quorum system~$\CQ_i$ of every wise process~$p_i$ and
  because $F \subseteq F_i$ for some $F_i \in \CF_i$, there exists some
  quorum $Q_i$ for $p_i$ of correct processes that echo $m$ to $p_i$.
  Hence, $p_i$ \op{c-delivers}~$m$.

  To show \emph{consistency}, suppose that some wise process~$p_i$ has
  \op{c-delivered} $m_i$ because of \msg{echo}{m_i} messages from a
  quorum~$Q_i$ and another wise~$p_j$ has received \msg{echo}{m_j} from all
  processes in $Q_j \in \CQ_j$.  By the consistency property of \BQ it
  holds \(Q_i \cap Q_j \not\subseteq F\); let $p_k$ be this process in
  $Q_i \cap Q_j$ that is not in $F$.  Because $p_k$ is correct, $p_i$ and
  $p_j$ received the same message from $p_k$ and $m_i = m_j$.

  The first condition of \emph{integrity} is guaranteed by using the
  \var{delivered} flag; the second condition holds because because the
  receiver is wise, and therefore the quorum that it uses for the decision
  contains some correct processes that have sent \msg{echo}{m} with the
  message~$m$ they obtained from~$p_s$ according to the protocol.
\end{proof}

\begin{example}%
We illustrate the broadcast protocols using a six-process asymmetric quorum
system $\BQ_C$, defined through its fail-prone system~$\BF_C$.  In $\BF_C$,
as shown below, for $p_1$, $p_2$, and $p_3$, each process always trusts
itself, some other process of $\{p_1, p_2, p_3\}$ and one further process
in $\{p_1, \dots, p_5\}$.  Process~$p_4$ and $p_5$ each assumes that at
most one other process of $\{p_1, \dots, p_5\}$ may fail (excluding
itself).  Moreover, none of the processes $p_1$, \dots, $p_5$ ever
trusts~$p_6$.  For $p_6$ itself, the fail-prone set is $\{p_1, p_3\}$,
i.e., it trusts $p_2$, $p_4$, and $p_5$ unconditionally.

\vspace*{-1ex}
\begin{minipage}[c]{0.1\linewidth}
  \vspace*{2ex}
  \center\Large$\BF_C$:
\end{minipage}
\begin{minipage}[l]{0.4\linewidth}
  \begin{eqnarray*}
    \mbox{}\\
    \CF_1 & = & \Theta^3_2(\{p_2,p_4,p_5\}) \ast \{\{p_6\}\} \\
    \CF_2 & = & \Theta^3_2(\{p_3,p_4,p_5\}) \ast \{\{p_6\}\} \\
    \CF_3 & = & \Theta^3_2(\{p_1,p_4,p_5\}) \ast \{\{p_6\}\} \\
    \CF_4 & = & \Theta^4_1(\{p_1,p_2,p_3,p_5\}) \ast \{\{p_6\}\} \\
    \CF_5 & = & \Theta^4_1(\{p_1,p_2,p_3,p_4\}) \ast \{\{p_6\}\} \\
    \CF_6 & = & \{\{p_1, p_3\}\}  \\
    \mbox{}
  \end{eqnarray*}
\end{minipage}
\begin{minipage}[l]{0.5\linewidth}
  \includegraphics[scale=0.42]{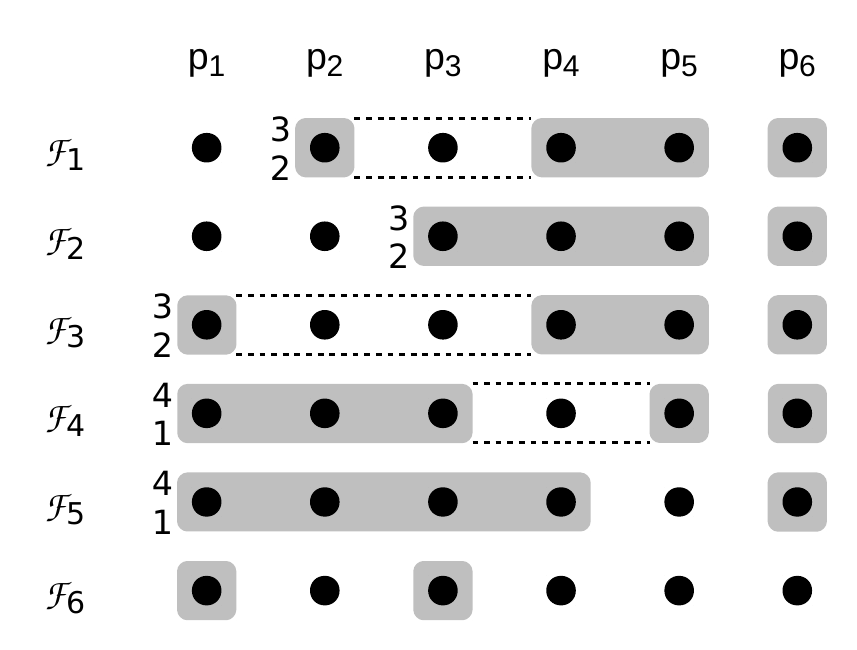}
\end{minipage}

\begin{minipage}[c]{0.1\linewidth}
  \vspace*{2ex}
  \center\Large$\BQ_C$:
\end{minipage}
\begin{minipage}[l]{0.5\linewidth}
  \begin{eqnarray*}
    \mbox{}\\
   \CQ_1 &=& \{\{p_1, p_3, p_5 \}, \{p_1, p_3, p_4 \}, \{p_1, p_2, p_3 \}\} \\
   \CQ_2 &=& \{\{p_1, p_2, p_5 \}, \{p_1, p_2, p_4 \}, \{p_1, p_2, p_3 \}\} \\
   \CQ_3  &=& \{\{p_2, p_3, p_5 \}, \{p_2, p_3, p_4 \}, \{p_1, p_2, p_3 \}\} \\
  \CQ_4  &=& \{\{p_1, p_2, p_3, p_4 \}, \{p_1, p_2, p_4, p_5 \},
              \{p_1, p_3, p_4, p_5 \}, \{p_2, p_3, p_4, p_5 \}\} \\
   \CQ_5  &=& \{\{p_1, p_2, p_3, p_5 \}, \{p_1, p_2, p_4, p_5 \},
               \{p_1, p_3, p_4, p_5 \}, \{p_2, p_3, p_4, p_5 \}\} \\
   \CQ_6 &=& \{\{p_3, p_4, p_5, p_6\}, \{p_2, p_4, p_5, p_6\},
             \{p_2, p_3, p_5, p_6\}, \{p_2, p_3, p_4, p_6 \}\} \\
 \end{eqnarray*}
 \end{minipage}

One can verify that $B^3(\BF_C)$ holds; hence, let $\BQ_C$ be the canonical
quorum system of $\BF_C$.  Again, there is no reliable process that could
be trusted by all and $\BQ_C$ is not a special case of a symmetric
threshold Byzantine quorum system.  With $F = \{p_1, p_5\}$, for instance,
process~$p_3$ is wise, $p_2$, $p_4$, and $p_6$ are \naive, and there is no
guild.

Consider now an execution of Algorithm~\ref{alg:srikanth-toueg} with sender
$p_4^*$ and $F = \{p_4^*, p_5^*\}$ (we write $p_4^*$ and $p_5^*$ to denote
that they are faulty).  This means processes~$p_1, p_2, p_3$ are wise and
form a guild because $\{p_1, p_2, p_3 \}$ is a quorum for all three;
furthermore, $p_6$ is \naive.  The protocol execution proceeds as follows,
with steps from left to right:

\begin{center}
  \small\vspace*{-4ex}
\begin{align*}
  &  & p_1  &: \:\: \msg{echo}{x} \to \CP   & p_1 &: \op{c-deliver}(x) \\
  &  & p_2  &: \:\: \msg{echo}{u} \to \CP   & p_2 &: \text{no quorum of \msg{echo}{} in $\CQ_2$} \\
  &  & p_3  &: \:\: \msg{echo}{x} \to \CP   & p_3 &: \text{no quorum of \msg{echo}{} in $\CQ_3$} \\
  p_4^* & : \begin{cases}\msg{send}{x} \to p_1, p_3\\\msg{send}{u} \to p_2, p_6\end{cases}
     & p_4^*&: \begin{cases}\msg{echo}{x} \to p_1\\ \msg{echo}{u} \to p_6\end{cases} \\
  &  & p_5^*&: \begin{cases}\msg{echo}{x} \to p_1\\ \msg{echo}{u} \to p_6\end{cases} \\
  &  & p_6  &: \:\: \msg{echo}{u} \to \CP   & p_6  &: \op{c-deliver}(u)
\end{align*}
\end{center}
Hence, $p_1$ receives $\msg{echo}{x}$ from, say,
$\{p_1, p_3, p_4^*\} \in \CQ_1$ and \op{c-delivers} $x$, but the
other wise processes do not terminate.  The \naive $p_6$ gets
$\msg{echo}{u}$ from $\{p_2, p_4^*, p_5^*, p_6\} \in \CQ_6$ and
\op{c-delivers}~$u \neq x$.
\end{example}

\paragraph{Byzantine reliable broadcast.}
In the symmetric setting, consistent broadcast has been extended to
\emph{(Byzantine) reliable broadcast} in a well-known way to address the
disagreement about termination among the correct
processes~\cite{DBLP:books/daglib/0025983}.  This primitive has the same interface as
consistent broadcast, except that its events are called \op{r-broadcast}
and \op{r-deliver} instead of \op{c-broadcast} and \op{c-deliver},
respectively.

A reliable broadcast protocol also has all properties of consistent
broadcast, but satisfies the additional \emph{totality} property stated
next.  Taken together, \emph{consistency} and \emph{totality} imply a
notion of \emph{agreement}, similar to what is also ensured by many
crash-tolerant broadcast primitives.  Analogously to the earlier primitives
with asymmetric trust, our notion of an \emph{asymmetric reliable
  broadcast}, defined next, ensures agreement on termination only for the
wise processes, and moreover only for executions with a guild.  Also the
\emph{validity} of Definition~\ref{def:acbc} is extended by the assumption
of a guild.  Intuitively, one needs a guild because the wise processes that
make up the guild are self-sufficient, in the sense that the guild contains
a quorum of wise processes for each of its members; without that, there may
not be enough wise processes.

\begin{definition}[Asymmetric Byzantine reliable broadcast]\label{def:arbc}
  A protocol for \emph{asymmetric (Byzantine) reliable broadcast} is a
  protocol for asymmetric Byzantine consistent broadcast with the revised
  \emph{validity} condition and the additional \emph{totality} condition
  stated next:
\begin{description}
\item[Validity:] In all executions with a guild, if a correct process $p_s$
  \op{r-broadcasts} a message~$m$, then all processes in the maximal guild
  eventually \op{r-deliver}~$m$.
\item[Totality:] In all executions with a guild, if a wise process
  \op{r-delivers} some message, then all processes in the maximal guild
  eventually \op{r-deliver} a message.
\end{description}
\end{definition}

The protocol of Bracha~\cite{DBLP:journals/iandc/Bracha87} implements reliable
broadcast subject to Byzantine faults with symmetric trust.  It augments
the authenticated echo broadcast from
Algorithm~\ref{alg:srikanth-toueg} with a second all-to-all exchange, where
each process is supposed to send \str{ready} with the payload message that
will be \op{r-delivered}.  When a process receives the same $m$ in $2f+1$
\str{ready} messages, in the symmetric model with a threshold Byzantine
quorum system, then it \op{r-delivers}~$m$.
Also, a process that receives \msg{ready}{m} from $f+1$ distinct
processes and that has not yet sent a \str{ready} chimes in and also sends
\msg{ready}{m}.  These two steps ensure totality.

For asymmetric quorums, the conditions of a process~$p_i$ receiving $f+1$
and $2f+1$ equal \str{ready} messages, respectively, generalize to
receiving the same message from a kernel for $p_i$ and from a quorum for
$p_i$.  Intuitively, the change in the first condition ensures that when a
wise process $p_i$ (that is also in the maximal guild) receives the same \msg{ready}{m} message from a kernel
for itself, then this kernel intersects with some quorum of wise processes.
Therefore, at least one wise process has sent \msg{ready}{m} and $p_i$ can
safely adopt $m$.
Furthermore, the change in the second condition relies on the properties
of asymmetric quorums to guarantee that
whenever some wise process has \op{r-delivered}~$m$, then
enough correct processes have sent a \msg{ready}{m} message such that all
wise processes eventually receive a kernel of \msg{ready}{m} messages and
also send~\msg{ready}{m}.

\begin{algo*}[tbh]
\vbox{
\small
\begin{numbertabbing}\reset
  xxxx\=xxxx\=xxxx\=xxxx\=xxxx\=xxxx\=MMMMMMMMMMMMMMMMMMM\=\kill
  \textbf{State} \label{}\\
  \> \(\var{sentecho} \gets \false\): indicates whether $p_i$ has sent \str{echo} \label{}\\
  \> \(\var{echos} \gets [\bot]^N\): collects the received \str{echo} messages from other processes \label{}\\
  \> \(\var{sentready} \gets \false\): indicates whether $p_i$ has sent \str{ready} \label{}\\
  \> \(\var{readys} \gets [\bot]^N\): collects the received \str{ready} messages from other processes \label{}\\
  \> \(\var{delivered} \gets \false\): indicates whether $p_i$ has delivered a message\label{}\\
  \\
  \textbf{upon invocation} \(\op{r-broadcast}(m)\) \textbf{do} \label{}\\
  \> send message \msg{send}{m} to all \(p_j \in \CP\) \label{}\\
  \\
  \textbf{upon} receiving a message \msg{send}{m} from $p_s$
    \textbf{such that} \(\neg \var{sentecho}\) \textbf{do} \label{}\\
  \> \(\var{sentecho} \gets \true\) \label{}\\
  \> send message \msg{echo}{m} to all \(p_j \in \CP\)\label{} \\
  \\
  \textbf{upon} receiving a message \msg{echo}{m} from \(p_j\) \textbf{do} \label{}\\
  \> \textbf{if} \(\var{echos}[j] = \bot\) \textbf{then} \label{}\\
  \> \> \(\var{echos}[j] \gets m\) \label{}\\
  \\
  \textbf{upon exists} \(m \not= \bot \) \textbf{such that}
     \( \{p_j \in \CP | \var{echos}[j] = m\} \in \CQ_i \)
     \textbf{and}  \(\neg \var{sentready}\) \textbf{do}
     \` // a quorum for $p_i$ \label{}\\
  \> \(\var{sentready} \gets \true\) \label{}\\
  \> send message \msg{ready}{m} to all $p_j \in \CP$ \label{}\\
  \\
  \textbf{upon exists} \(m \not= \bot \) \textbf{such that}
     \( \{p_j \in \CP | \var{readys}[j] = m\} \in \CK_i \)
     \textbf{and} \(\neg \var{sentready}\) \textbf{do}
     \` // a kernel for $p_i$\label{} \\
  \> \(\var{sentready} \gets \true\) \label{}\\
  \> send message \msg{ready}{m} to all $p_j \in \CP$ \label{}\\
  \\
  \textbf{upon} receiving a message \msg{ready}{m} from \(p_j\)
     \textbf{do} \label{}\\
  \> \textbf{if} \(\var{readys}[j] = \bot\) \textbf{then} \label{}\\
  \> \> \(\var{readys}[j] \gets m\) \label{}\\
  \\
  \textbf{upon exists} \(m \not= \bot \) \textbf{such that}
     \( \{ p_j \in \CP | \var{readys}[j] = m\} \in \CQ_i \) 
     \textbf{and} \(\neg \var{delivered}\) \textbf{do} \label{}\\
  \> \(\var{delivered} \gets \true\) \label{}\\
  \> \textbf{output} \(\op{r-deliver(m)}\)\label{}\\[-5ex]
\end{numbertabbing}
}
\caption{Asymmetric Byzantine reliable broadcast protocol with sender~$p_s$
  (process~$p_i$)}
\label{alg:bracha}
\end{algo*}

Applying these changes to Bracha's protocol results in the asymmetric
reliable broadcast protocol shown in Algorithm~\ref{alg:bracha}.  Note that
it strictly extends Algorithm~\ref{alg:srikanth-toueg} by the additional
round of \str{ready} messages, in the same way as for symmetric trust.  For
instance, when instantiated with the symmetric threshold quorum system of
$n = 3f + 1$ processes, of which $f$ may fail, then every set of $f + 1$
processes is a kernel.

In Algorithm~\ref{alg:bracha}, there are two conditions that let a correct
$p_i$ send \msg{ready}{m}: either receiving a quorum of \msg{echo}{m}
messages for itself or obtaining a kernel for itself of \msg{ready}{m}
messages.  For the first case, we say $p_i$ \emph{sends \str{ready} after
  \str{echo}}; for the second case, we say $p_i$ \emph{sends \str{ready}
  after \str{ready}}.

\begin{lemma}\label{lem:brachaunique}
  In any execution with a guild, there exists a unique $m$ such that
  whenever a wise process in the maximal guild sends a \str{ready} message,
  it contains~$m$.
\end{lemma}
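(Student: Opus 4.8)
The plan is to fix a guild $\CG$ in the given execution and to pin down a single value $m$ that every wise process is forced to carry in its \str{ready} messages. Following the structure of Algorithm~\ref{alg:bracha}, I distinguish the two ways a process comes to send \str{ready}: \emph{after \str{echo}}, when it has collected an echo-quorum of \msg{echo}{m} messages, and \emph{after \str{ready}}, when it has collected a kernel of \msg{ready}{m} messages. First I would dispose of the after-\str{echo} case. If two wise processes $p_i,p_j$ send \str{ready} after \str{echo} on the strength of echo-quorums $Q_i\in\CQ_i$ and $Q_j\in\CQ_j$ carrying values $m_i$ and $m_j$, then since both are wise we have $F\in\CF_i^*\cap\CF_j^*$, so the consistency property of \BQ yields a process in $Q_i\cap Q_j$ that is not in $F$; a correct process sends \msg{echo}{} only once, forcing $m_i=m_j$. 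Thus all wise after-\str{echo} senders agree on one value, which I name $m$.

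Next I would show that $m$ is actually realized and used by every guild member, by induction on the time order in which guild members send \str{ready}. For the base case, the first guild member $p_i$ to send \str{ready} cannot do so after \str{ready}: by closure, $\CG$ contains a quorum $Q_i\subseteq\CG$ for $p_i$ consisting only of wise processes, and any kernel $K_i\in\CK_i$ triggering the after-\str{ready} rule meets every quorum of $p_i$, so in particular $K_i\cap Q_i\neq\emptyset$; this exhibits a guild member that sent \str{ready} strictly earlier, contradicting minimality. Hence the first guild member sends \str{ready} after \str{echo}, and its value is $m$. For the inductive step, a guild member sending after \str{echo} uses $m$ by the first paragraph, while a guild member $p_i$ sending after \str{ready} via $K_i$ again satisfies $K_i\cap Q_i\neq\emptyset$ with $Q_i\subseteq\CG$ wise, so some guild member lies in $K_i$ and sent its \msg{ready}{} message earlier; by the induction hypothesis that value is $m$, and since all messages counted in $K_i$ carry a common value, $p_i$ also uses $m$.

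It then remains to extend the conclusion from guild members to arbitrary wise processes and to read off uniqueness. A wise process sending after \str{echo} agrees with the first guild member by the consistency argument of the first paragraph, hence uses $m$. For a wise process $p_i$ sending after \str{ready} via a kernel $K_i$, the goal is to locate inside $K_i$ a wise process that has already sent \str{ready}: since $p_i$ is wise it has, by availability, a quorum disjoint from $F$, and combined with a fully wise quorum of a guild member through consistency one shows that $K_i$ meets a wise process that previously sent \msg{ready}{}; by induction its value is $m$, and as before all of $K_i$ carries a common value, so $p_i$ uses $m$. Once every wise \str{ready} is shown to equal $m$, uniqueness is immediate.

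I expect the after-\str{ready} case to be the main obstacle, and specifically for wise processes outside the guild. The clean leverage—that a kernel meets a quorum contained entirely in $\CG$—is available only for guild members, since a wise process not in $\CG_{\max}$ provably has no quorum lying inside the guild (otherwise one could enlarge $\CG_{\max}$). The delicate point is therefore to upgrade ``$K_i$ meets every quorum of $p_i$'' together with ``every quorum of a wise $p_i$ meets $\CG$'' (which follows from consistency, as a guild member's quorum is fully wise) into the stronger statement that $K_i$ itself contains a wise process that has \emph{already} sent \str{ready}. I would address this by strengthening the induction and using the core-set/kernel duality to argue that, once a guild has locked onto $m$, the triggering kernel of a wise process cannot consist solely of faulty and \naive processes; this is the step that genuinely exploits the guild hypothesis rather than mere availability.
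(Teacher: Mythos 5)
Your overall strategy is the same as the paper's: settle the after-\str{echo} case by quorum consistency, then induct over the \str{ready}-after-\str{ready} senders in temporal order, using the fact that a triggering kernel must meet a quorum consisting entirely of wise processes. Your first two paragraphs are correct, and restricting the induction to guild members makes that part clean: for $p_i \in \CG$, closure gives a quorum $Q_i \subseteq \CG$, and $K_i \cap Q_i \neq \emptyset$ exhibits an earlier wise sender whose value is pinned down by the induction hypothesis.

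The gap is exactly where you place it, and your proposed repair does not close it. For a wise $p_i$ outside the maximal guild, the two facts you actually have --- that $K_i$ meets every quorum of $p_i$, and that every quorum of $p_i$ contains a wise process (by consistency against a guild member's all-wise quorum) --- do not combine to put a wise process inside $K_i$: the kernel could intersect each quorum only in its \naive or faulty members. Your availability step gives a quorum disjoint from $F$ and hence a \emph{correct} process in $K_i$, but correct is not wise, and a \naive process may well have sent \msg{ready}{m'} with $m' \neq m$ (the paper's own example has the \naive $p_6$ propagating $u \neq x$). The closing sketch (``strengthen the induction,'' ``core-set/kernel duality'') is a plan, not an argument. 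For comparison, the paper's proof closes this step by asserting that, in an execution with a guild, the first wise after-\str{ready} sender has at least one quorum consisting exclusively of wise processes --- a claim that is immediate for guild members but is not justified there for wise processes outside $\CG_{\max}$, whose existence the paper itself leaves open. So you have isolated precisely the thinnest step of the published argument; a complete proof must either justify that claim for all wise processes or confine the statement to guild members, and your proposal does neither.
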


\begin{proof}
  Consider first all \str{ready} messages sent by wise processes
  after~\str{echo}.  The fact that Algorithm~\ref{alg:bracha} extends
  Algorithm~\ref{alg:srikanth-toueg} achieving consistent broadcast,
  combined with the consistency property in Definition~\ref{def:acbc}
  implies immediately that the lemma holds for \str{ready} messages sent by
  wise processes after~\str{echo}.

  For the second case, let $\CG_{\text{max}}$ be the maximal guild.
  Consider the first wise process~$p_i$ in $\CG_{\text{max}}$ which sends
  \msg{ready}{m'} after \str{ready}.  From the protocol it follows that all
  processes in some kernel $K_i \in \CK_i$, which triggered $p_i$ to send
  \msg{ready}{m'}, have sent \msg{ready}{m'} to~$p_i$.  Moreover, according
  to the definition of a kernel, $K_i$ overlaps with all quorums for $p_i$.
  Since $p_i$ is in the (maximal) guild, at least one of the quorums for
  $p_i$ consists exclusively of wise processes.  Hence, some wise
  process~$p_j$ in the guild has sent \msg{ready}{m'} to $p_i$.  But since
  $p_i$ is the first wise process to send \str{ready} after \str{ready}, it
  follows that $p_j$ sent \msg{ready}{m'} after \str{echo}; therefore,
  $m' = m$ from the proof in the first case.  Continuing this argument
  inductively over all \str{ready} messages sent after \str{ready} by wise
  processes in $\CG_{\text{max}}$, in the order these were sent, shows that
  all those messages contain~$m$ and establishes the lemma.
\end{proof}

\begin{theorem}\label{thm:bracha}
  Algorithm~\ref{alg:bracha} implements asymmetric Byzantine reliable broadcast.
\end{theorem}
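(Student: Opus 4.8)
The plan is to verify the four properties that, by Definition~\ref{def:arbc}, constitute an asymmetric reliable broadcast: such a protocol is a consistent broadcast together with the revised \emph{validity} and the added \emph{totality} condition. I would first dispatch \emph{consistency} and \emph{integrity}, which are inherited in spirit from Theorem~\ref{thm:srikanth-toueg} but with \str{ready} in place of \str{echo}, and then treat the two guild-based properties. Throughout, I would fix via Lemma~\ref{lem:brachaunique} the unique message~$m$ carried by every \str{ready} message sent by a wise process in an execution with a guild.

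For \emph{consistency}, suppose two wise processes $p_i,p_j$ \op{r-deliver} $m$ and $m'$; each did so upon collecting \str{ready} messages from a quorum, $Q_i\in\CQ_i$ and $Q_j\in\CQ_j$ respectively. Since both are wise, $F\in{\CF_i}^*\cap{\CF_j}^*$, so the consistency condition of \BQ yields some $p_k\in Q_i\cap Q_j$ with $p_k\notin F$; being correct, $p_k$ broadcasts a single \str{ready} message, forcing $m=m'$. The first part of \emph{integrity} is immediate from the \var{delivered} flag. For the second part I would take the quorum of \str{ready} messages that made a wise receiver deliver, extract a correct sender of $\msg{ready}{m}$ by quorum consistency, and trace this message to its origin. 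The subtlety here, absent from the echo-only protocol, is that \str{ready} can be triggered \emph{by} \str{ready}: I would therefore argue over the first correct process to emit $\msg{ready}{m}$, show that it must have done so \emph{after \str{echo}} rather than after \str{ready}, and conclude that some correct process echoed $m$, which happens only if the correct $p_s$ had \op{r-broadcast} $m$.

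For the revised \emph{validity}, assume an execution with a guild and a correct sender $p_s$ that \op{r-broadcasts} $m$. Every correct process receives $\msg{send}{m}$ and sends $\msg{echo}{m}$. Each member $p_g$ of $\CG_{\max}$ has, by \emph{closure}, a quorum $Q_g\in\CQ_g$ with $Q_g\subseteq\CG_{\max}$; all of $Q_g$ are wise and echo $m$, so $p_g$ collects a quorum of $\msg{echo}{m}$ and sends $\msg{ready}{m}$. Applying \emph{closure} once more, each $p_g$ then receives $\msg{ready}{m}$ from the full quorum $Q_g\subseteq\CG_{\max}$ and \op{r-delivers} $m$.

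The heart of the argument is \emph{totality}, and I expect the amplification step to be the main obstacle. Suppose a wise process $p_i$ \op{r-delivers} $m$; then it holds $\msg{ready}{m}$ from some $Q_i\in\CQ_i$, and the correct processes $Q_i\setminus F$ genuinely broadcast $\msg{ready}{m}$. The key claim is that $Q_i\setminus F$ contains a kernel for every member of $\CG_{\max}$: for any $p_g\in\CG_{\max}$ and any $Q_g\in\CQ_g$, both $p_i$ and $p_g$ are wise, so $F\in{\CF_i}^*\cap{\CF_g}^*$, and consistency gives $Q_i\cap Q_g\not\subseteq F$, i.e.\ $(Q_i\setminus F)\cap Q_g\neq\emptyset$; hence $Q_i\setminus F$ meets every quorum of $p_g$ and therefore contains some $K_g\in\CK_g$. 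Each guild member thus eventually receives $\msg{ready}{m}$ from a kernel and, by the \emph{ready-after-ready} rule, sends $\msg{ready}{m}$ itself, after which its guild-quorum of \str{ready} messages makes it \op{r-deliver} $m$ exactly as in the validity argument. The delicate points I would watch are reconciling the ``set of senders equals a kernel'' guard of Algorithm~\ref{alg:bracha} with the ``contains a kernel'' conclusion above (the guard is monotone, so receiving a superset of a kernel suffices to trigger it) and using Lemma~\ref{lem:brachaunique} to certify that the amplified message is the same~$m$ at every wise process.
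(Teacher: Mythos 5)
Your proposal is correct and follows essentially the same route as the paper's proof: Lemma~\ref{lem:brachaunique} anchors the argument, totality is obtained by observing that $Q_i\setminus F$ consists of correct processes and meets every quorum of a wise process (hence contains a kernel for it), and validity and the final delivery step use guild closure to supply quorums of \str{ready} messages. The one imprecision is in your integrity tracing, where the induction must run over the first \emph{wise} (not merely correct) process to send \msg{ready}{m} --- a \naive process's kernel may consist entirely of faulty processes, so it could genuinely be the first correct sender without any prior echo --- but this is exactly the case handled by Lemma~\ref{lem:brachaunique}, which you already invoke.
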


\begin{proof}
  Recall that the \emph{validity} property assumes there exists a maximal
  guild~$\CG_{\text{max}}$.  Since the sender $p_s$ is correct and
  according to asymmetric quorum availability, every process~$p_i$ in
  $\CG_{\text{max}}$ eventually receives a quorum of \msg{echo}{m} messages
  for itself, containing the message~$m$ from~$p_s$.  According to the
  protocol, $p_i$ therefore sends \msg{ready}{m} after \str{echo} unless
  $\var{sentready} = \true$; if this is the case, however, $p_i$ has
  already sent \msg{ready}{m} after \str{ready} as ensured by
  Lemma~\ref{lem:brachaunique}.  Hence, every process in $\CG_{\text{max}}$
  eventually sends \msg{ready}{m}.  Then every process $p_j$ in
  $\CG_{\text{max}}$ eventually receives a quorum for itself of
  \msg{ready}{m} messages and \op{r-delivers}~$m$, as ensured by the
  properties of a guild and by the protocol.
  
  To establish the \emph{totality} condition, suppose that some wise
  process~$p_i$ has \op{r-delivered} a message~$m$.  Then it has obtained
  \msg{ready}{m} messages from the processes in some quorum
  $Q_i \in \CQ_i$.
  Consider any other wise process~$p_j \in \CG_{\text{max}}$.  Since $p_i$
  and $p_j$ are both wise, it holds $F \in {\CF_i}^*$ and
  $F \in {\CF_j}^*$, which implies $F \in {\CF_i}^* \cap {\CF_j}^*$.  Then,
  the set $K = Q_i \setminus F$ intersects every quorum of $p_j$ by quorum
  consistency and therefore contains a kernel for $p_j$.  Since $K$
  consists only of correct processes, all of them have sent \msg{ready}{m}
  also to $p_j$ and $p_j$ eventually sends \msg{ready}{m} as well.  This
  implies that all wise processes in $\CG_{\text{max}}$ eventually send
  \msg{ready}{m} to all processes.  With the same argument as just given
  for validity, it follows that every wise process in the guild receives a
  quorum for itself of \msg{ready}{m} and \op{r-delivers}~$m$, as required
  for totality.

  The \emph{consistency} property follows immediately from the preceding
  argument and from Lemma~\ref{lem:brachaunique}, which implies that all
  wise processes deliver the same message.

  Finally, \emph{integrity} holds because of the \var{delivered} flag in
  the protocol and because of the argument showing validity together with
  Lemma~\ref{lem:brachaunique}.
\end{proof}

\begin{example}%
Consider again the protocol execution with $\BQ_{C}$ introduced earlier for
illustrating asymmetric consistent broadcast.  Recall that
$F = \{p_4^*, p_5^*\}$, the set $\{p_1, p_2, p_3\}$ is a guild, and $p_6$
is \naive.  The start of the execution is the same as shown previously and
omitted.  Instead of \op{c-delivering} $x$ and $u$, respectively, $p_1$ and
$p_6$ send \msg{ready}{x} and \msg{ready}{u} to all processes.
This is shown next, again with steps from left to right:
\begin{center}
  \small\vspace*{-4ex}
\begin{align*}
  & \dots  & p_1 &: \msg{ready}{x} \to \CP
     && && & p_1 &: \op{r-deliver}(x) \\
  & \dots  & p_2 &: \text{no quorum}        & p_2 &: \msg{ready}{x} \to \CP
        && & p_2 &: \op{r-deliver}(x) \\ 
  & \dots  & p_3 &: \text{no quorum}
        && & p_3 &: \msg{ready}{x} \to \CP  & p_3 &: \op{r-deliver}(x) \\ 
  & \dots  & p_4^* &: - \\
  & \dots  & p_5^* &: - \\
  & \dots  & p_6 &: \msg{ready}{u} \to \CP  & p_6 &: \text{no quorum}
\end{align*}
\end{center}
\vspace*{-1ex}

Note that the kernel systems of processes $p_1$, $p_2$, and $p_3$ are,
respectively,
$\CK_1 = \{\{p_1\}, \{p_3\}\}$, $\CK_2 = \{\{p_1\}, \{p_2\}\}$, and
$\CK_3 = \{\{p_2\}, \{p_3\}\}$.  Hence, when $p_2$ receives \msg{ready}{x}
from $p_1$, it sends \msg{ready}{x} in turn because $\{ p_1 \}$ is a kernel
for $p_2$, and when $p_3$ receives this message, then it sends
\msg{ready}{x} because $\{ p_2 \}$ is a kernel for $p_3$.

Furthermore, since $\{p_1, p_2, p_3\}$ is the maximal guild and contains a
quorum for each of its members, all three wise processes \op{r-deliver}~$x$
as implied by \emph{consistency} and \emph{totality}.  The \naive $p_6$
does not \op{r-deliver} anything, however.
\end{example}

\paragraph{Remarks.}
Asymmetric reliable broadcast (Definition~\ref{def:arbc}) ensures validity
and totality only for processes in the maximal guild.  There may exist wise
processes outside the maximal guild that do not terminate.  On the other
hand, asymmetric consistent broadcast (Definition~\ref{def:acbc}) ensures
validity also for all \emph{wise} processes.

Another open questions concerns the conditions for reacting to \str{ready}
messages in the asymmetric reliable broadcast protocol.  Already in
Bracha's protocol for the threshold
model~\cite{DBLP:journals/iandc/Bracha87}, a process (1) sends its own
\str{ready} message upon receiving $f+1$ \str{ready} messages and (2)
\op{r-delivers} an output upon receiving $2f+1$ \str{ready} messages.
These conditions generalize for arbitrary, non-threshold quorum systems to
receiving messages (1) from any set that is guaranteed to contain at least
one correct process and (2) from any set that still contains at least one
process even when any two fail-prone process sets are subtracted.  In
Algorithm~\ref{alg:bracha}, in contrast, a process delivers the payload
only after receiving \str{ready} messages from one of its quorums.  But
such a quorum (e.g., $\big\lceil\frac{n+f+1}{2}\big\rceil$ processes) may
be larger than a set in the second case (e.g., $2f+1$ processes).  It
remains interesting to find out whether this discrepancy is necessary.

\section{Consensus}
\label{sec:consensus}

In this section we define asymmetric asynchronous Byzantine consensus and
implement it through a randomized algorithm, which extends and improves the
protocol of Most{\'{e}}faoui \emph{et
  al.}~\cite{DBLP:conf/podc/MostefaouiMR14}.

The protocol of Most{\'{e}}faoui \emph{et al.} comes in multiple versions.
The original one, published at PODC
2014~\cite{DBLP:conf/podc/MostefaouiMR14} and where it also won the
best-paper award, suffers from a subtle and little-known liveness
problem~\cite{TG19}: an adversary can prevent progress among the correct
processes by controlling the messages between them and by sending them
values in a specific order.  The subsequent version (JACM
2015)~\cite{DBLP:journals/jacm/MostefaouiMR15} resolves this issue, but
requires many more communication steps and adds considerable complexity.

In Appendix~\ref{app:attack} we show in detail how it is possible to
violate liveness in the PODC 2014 version.  We also propose a method that
overcomes the problem, maintains the elegance of the protocol, and does not
affect its appealing properties. Based on this insight, in this section, we
show how to realize asynchronous consensus with asymmetric trust, again
with a protocol that maintains the simplicity of the original approach of
Most{\'{e}}faoui \emph{et al.}~\cite{DBLP:conf/podc/MostefaouiMR14}.

\subsection{Definition}

In an asynchronous binary consensus protocol, every correct process
initially \op{ac-proposes} a bit; the protocol concludes at a correct
process when it \op{ac-decides} a bit.  Our notion of Byzantine consensus
uses strong validity in the asymmetric model.  Furthermore, it restricts
the safety properties of consensus from all correct ones to \emph{wise}
processes in the guild.  For implementing asynchronous consensus, we use a
system enriched with randomization.  In round-based consensus algorithms,
the termination property is formulated with respect to the round number~$r$
that a process executes.  The corresponding probabilistic asymmetric termination
property is guaranteed only for wise processes in the maximal guild.

\begin{definition}[Asymmetric strong Byzantine consensus]\label{def:asymstrong}
  A protocol for asynchronous \emph{asymmetric strong Byzantine consensus}
  satisfies:

\begin{description}
\item[Probabilistic termination:] In all executions with a non-empty guild,
  every process in the maximal guild $\op{ac-decides}$ with probability
  $1$, i.e., for all $p_i \in \CG_{\text{max}}$,
  \[
    \lim_{r \rightarrow + \infty} (\P[\text{process $p_i$
      $\op{ac-decides}$ by round $r$}]) = 1.
  \]
  
\item[Strong validity:] In all executions with a non-empty guild, a wise
  process only $\op{ac-decides}$ a value that has been $\op{ac-proposed}$
  by some process in the maximal guild.
  
\item[Integrity:] No correct process $\op{ac-decides}$ twice.

\item[Agreement:] No two wise processes $\op{ac-decide}$ differently.

\end{description}
\end{definition}

The consensus protocol described here relies in a modular way on two
subprotocols.  Recall from Section~\ref{sec:model} that all processes are
connected pairwise by reliable FIFO links.  The FIFO guarantees on the
links hold across multiple protocol modules.

\subsection{Asymmetric common coin}

Our randomized consensus algorithm delegates its probabilistic choices to a
\emph{common coin} abstraction~\cite{DBLP:conf/focs/Rabin83,
  DBLP:books/daglib/0025983}.  This primitive is triggered by a
\op{release-coin} invocation and terminates by generating an
\op{output-coin}$(s)$ event, where $s \in \CB$ represents the random coin
value in a range~\CB.  We define this in the asymmetric-trust model.  The
coin remains hidden and unpredictable by faulty processes up to the time
when sufficiently many wise processes have released it.  This is the case
when at least a set of correct processes that is a kernel for all wise
processes have released it.

\begin{definition}[Asymmetric common coin]\label{def:acc}
  A protocol for \emph{asymmetric common coin} satisfies the
  following properties:
\begin{description}
\item[Termination:] In all executions with a non-empty guild, every process
  in the maximal guild eventually outputs a coin value.
  
\item[Unpredictability:] In all executions with a non-empty guild, no
  process has any information about the value of the coin before at least a
  kernel for all wise processes, which consists entirely of correct
  processes, has released the coin.

\item[Matching:] In all executions with a guild, with probability $1$ every process in the maximal guild outputs the same coin value.	

\item[No bias:] The distribution of the coin is uniform over $\mathcal{B}$.

\end{description}
\end{definition}

Here we consider binary consensus and $\CB = \{0,1\}$.  The
\emph{termination} property guarantees that every process in the maximal
guild eventually outputs a coin value that is ensured to be the same for
each of them by the \emph{matching} property.  The \emph{unpredictability}
property ensures that the coin value is kept secret in an execution until
at least a kernel for a wise process, consisting entirely of wise 
processes, releases the coin.  The existence of a kernel with only \emph{wise}
processes is required in order to avoid a liveness problem in the consensus
protocol (we describe this in Appendix~\ref{app:attack}).  The analogue of
this in the threshold symmetric model, where $f < n / 3$ processes may
fail, would be a coin with threshold $2f$, where the value is kept secret
until at least a set of $f+1$ \emph{correct} processes have released the
coin.  Finally, the \emph{no bias} property specifies the probability
distribution of the coin output.

\paragraph{The scheme.} 
We recall here the notion of the \emph{tolerated system} of an asymmetric
Byzantine quorum system from Section~\ref{sec:asymquorums}.  Every
asymmetric Byzantine quorum system~\BQ induces a tolerated system $\CT$
that contains sets $T$ that are the complement of the maximal guild in some
execution, i.e., $T = \CP \setminus \CG_{\max}$ and $\CG_{\max}$ is a
maximal guild for some execution and for~\BQ.  Crucial for our application
is the fact that \CT satisfies the $Q^3$-condition (Lemma~\ref{lem:b3-q3}),
hence one can construct a \emph{symmetric} Byzantine quorum system from
\CT.  In particular, the corresponding canonical system $\CH$ containing
all possible maximal guilds, is such a symmetric Byzantine quorum system.
The idea is to use the tolerated system as a ``bridge'' from the asymmetric
to the symmetric model, since reasoning is simpler in the latter.  At the
same time, this approach guarantees that in any execution where the system
is able to make progress because a non-empty guild exists, the protocol
can exploit the fact that such a guild exists also for a safety property.

The common coin scheme follows the approach of
Rabin~\cite{DBLP:conf/focs/Rabin83} and assumes that coins are
predistributed by a trusted dealer.  The scheme uses
Benaloh-Leichter~\cite{DBLP:conf/crypto/Leichter88} secret sharing, such
that the coin is additively shared within every maximal guild.  The dealer
shares one coin for every possible round of the protocol.  This requires
knowledge of the symmetric Byzantine quorum system~\CH corresponding to the
tolerated system~\CT.  Observe that every process can compute this because
\BF is globally known.

We assume that before the coin protocol runs, the dealer has chosen
uniformly at random a value $s \in \CB$ and shared it as follows.  For
every possible maximal guild $\CG = \{ p_{i_1}, \dots, p_{i_{m}} \}$ across all executions, the dealer
has picked uniform shares $s_{i_1}^\CG, \dots, s_{i_{m-1}}^\CG$ and set
$s_{i_m}^\CG = s + \sum_{\ell=1}^{m-1} s_{i_\ell}^\CG$.  Then the dealer has
given share $s_{i_\ell}^\CG$ to process $p_{i_\ell}$, for $\ell \in \{1,\dots,m\}$.
This implies that process $p_i$ holds a share for every guild of which it is a
member.

The code for process $p_i$ to release the coin is shown in
Algorithm~\ref{alg:acc}.  Specifically, when asked to release its coin
share (Lines~\ref{line:release-coin-begin}--\ref{line:release-coin-end},
Algorithm~\ref{alg:acc}), a process $p_i$ sends to all other processes a
share $s_\CG$ for each guild $\CG$ of which $p_i$ is a member.  Upon
receiving such shares, each process stores them in a local structure
(Lines~\ref{line:receive-share-begin}--\ref{line:receive-share-end},
Algorithm~\ref{alg:acc}).  When a process $p_i$ has enough shares, i.e.,
all shares from a guild~$\CG$, it can locally add them and output the coin
value (Lines~\ref{line:output-coin-begin}--\ref{line:output-coin-end},
Algorithm~\ref{alg:acc}).

\begin{algo*}[tbh]
\vbox{
\small
\begin{numbertabbing}\reset
  xxxx\=xxxx\=xxxx\=xxxx\=xxxx\=xxxx\=MMMMMMMMMMMMMMMMMMM\=\kill
  \textbf{State} \label{}\\
  \> $\CH$: set of all possible guilds \label{} \\
  \> $\var{share}[\CG][j]$: if $p_i \in \CG$, this holds the share received from $p_j$ for guild $\CG$; initially $\nil$ \label{} \\
  \\
  \textbf{upon event} \(\op{release-coin}\) \textbf{do} \label{line:release-coin-begin}\\
  \> \textbf{for all} \(\CG \in \CH\) \textbf{such that} $p_i \in \CG$ \textbf{do} \label{line:pi-in-G} \\
  \>\> let $s_{i\CG}$ be the share of $p_i$ for guild $\CG$ \label{} \\
  \>\> \textbf{for all} \(p_j \in \CP\) \textbf{do}  \label{} \\
  \>\>\> send message \msg{share}{s_{i\CG}, \CG, \var{round}} to \(p_j\) \label{line:release-coin-end} \\
  \\
  \textbf{upon} receiving a message \msg{share}{s, \CG, r} from \(p_j\)
  \textbf{such that} $r = \var{round} \textbf{ and } p_j \in \CG$ \textbf{do} \label{line:receive-share-begin} \\
  \> \textbf{if} \(\var{share}[\CG][j] = \bot\) \textbf{then}  \label{}\\
  \>\> \(\var{share}[\CG][j] \gets s\) \label{line:receive-share-end}\\
  \\
  \textbf{upon exists} \CG \textbf{such that}
    for all $j$ with $p_j \in \CG$, it holds \(\var{share}[\CG][j] \neq \nil \) \textbf{do} \label{line:output-coin-begin}\\
  \> $s \gets \sum_{j: p_j \in \CG} \var{share}[\CG][j]$ \label{line:compute-s}\\
  \> \textbf{output} $\op{output-coin(s)}$ \label{line:output-coin-end}\\[-5ex]
\end{numbertabbing}
}
\caption{Asymmetric common coin for round $\var{round}$ (code for~$p_i$)}
\label{alg:acc}
\end{algo*}

\begin{theorem}
  Algorithm~\ref{alg:acc} implements an asymmetric common coin.
\end{theorem}
\begin{proof}
  Let us consider an asymmetric fail-prone system $\mathbb{F}$ such that
  $B^3(\mathbb{F})$ holds and the corresponding asymmetric Byzantine quorum
  system $\mathbb{Q}$ for $\mathbb{F}$.  By Lemma~\ref{lem:b3-q3}, the
  tolerated system $\mathcal{T}$ of $\mathbb{Q}$ satisfies the
  $Q^3$-condition.  Let $\CH$ be the Byzantine quorum system for
  $\mathcal{T}$ consisting of all maximal guilds.  Assume an execution with
  a guild, where all processes in some $F \in \mathcal{T}^*$ are faulty and
  $\CG \in \CH$ is the maximal guild.
  
  For the \emph{termination} property, observe that every correct process,
  and hence also every process in $\CG$, invokes $\op{release-coin}$.  This
  implies that every process $p_i \in \CG$ sends \str{share} messages to
  all processes in $\CP$ (Line~\ref{line:release-coin-end},
  Algorithm~\ref{alg:acc}) containing the coin shares of $p_i$ for every
  guild in which $p_i$ belongs (Line~\ref{line:pi-in-G},
  Algorithm~\ref{alg:acc}), including $\CG$.  Eventually every correct
  process in $\CP$ receives a \str{share} message from every process in
  $\CG$, computes $s$ (Line~\ref{line:compute-s}, Algorithm~\ref{alg:acc})
  and triggers~$\op{output-coin}(s)$.  We note that termination holds
  actually for all correct processes, not just for those in the maximal
  guild.

  For the \emph{unpredictability} property, assume a correct process
  \( p_i \) outputs coin \( s \). This implies the existence of a set
  \( \mathcal{G}_k \in \mathcal{H} \), where each member of
  \( \mathcal{G}_k \) has sent a \str{share} message. Now, let us define
  \( K \) as the set \( \mathcal{G}_k \setminus F \). Observe that by
  construction, $K$ always contains a process $p_i$ in $\mathcal{G}$ that
  is wise, since \( \mathcal{G} \) is the maximal guild in the execution.
  This is a consequence of \( \mathcal{H} \) being a Byzantine quorum
  system, ensuring \( \mathcal{G}_k \cap \mathcal{G} \not\subseteq F
  \). This also implies that
  \( K = \mathcal{G}_k \setminus F \cap \mathcal{G} \neq \emptyset \).

  We first prove that this \( K \) intersects with every quorum of every wise
  process in the execution.
  Suppose by contradiction that there exists a wise process
  \( p_j \in \mathcal{P} \) with a quorum \( Q_j \in \mathcal{Q}_j \) such
  that \( Q_j \cap K = \emptyset \). Let \( p_i \) be a process in
  \( K \cap \mathcal{G} \).  Given \( K \subseteq \mathcal{G}_k \) and that
  \( \mathcal{G}_k \) is a guild within \( \mathcal{H} \), there must exist
  a quorum \( Q_i \in \mathcal{Q}_i \) for \( p_i \) such that
  \( Q_i \subseteq \mathcal{G}_k \). However, if
  \( Q_j \cap K = \emptyset \) and \( K = \mathcal{G}_k \setminus F \), it
  follows that \( Q_i \cap Q_j \subseteq F \). This situation contradicts
  the consistency property of the quorum system \( \mathbb{Q}
  \).

  Furthermore, employing the reasoning used in
  Lemma~\ref{lem:kernelinquorum}, we can derive from \( K \) a minimal set
  that continues to intersect with every quorum of every wise
  process. Therefore, it follows that \( K \) contains a kernel for every
  wise process, consisting only of correct processes.

  The \emph{matching} and \emph{no bias} properties follow directly from
  the fact that the coin value for every round is predetermined, albeit not
  known to any process, and chosen uniformly at random by the trusted
  dealer.
\end{proof}

\begin{example}
Let us consider a five-process asymmetric quorum system $\BQ_D$, defined through the following~$\BF_D$.

 \vspace*{-1ex}
\begin{minipage}[c]{0.05\linewidth}
  \vspace*{2ex}
  \center\Large$\BF_D$:
\end{minipage}
\begin{minipage}[c]{0.5\linewidth}
    \begin{eqnarray*}
    \mbox{}\\
    \CF_1 & = & \Theta^3_1(\{p_3,p_4,p_5\}) \\
    \CF_2 & = & \Theta^3_1(\{p_3,p_4,p_5\}) \\
    \CF_3 & = & \Theta^2_2(\{p_1,p_2\}) \vee \{p_4\} \vee \{p_5\} \\
    \CF_4 & = & \Theta^2_2(\{p_1,p_2\}) \vee \{p_3\} \vee \{p_5\}\\
    \CF_5 & = & \Theta^2_2(\{p_1,p_2\}) \vee \{p_3\} \vee \{p_4\} \\
    \mbox{}
  \end{eqnarray*}
 
  \end{minipage}
\begin{minipage}[c]{0.5\linewidth}
  \includegraphics[height=4cm]{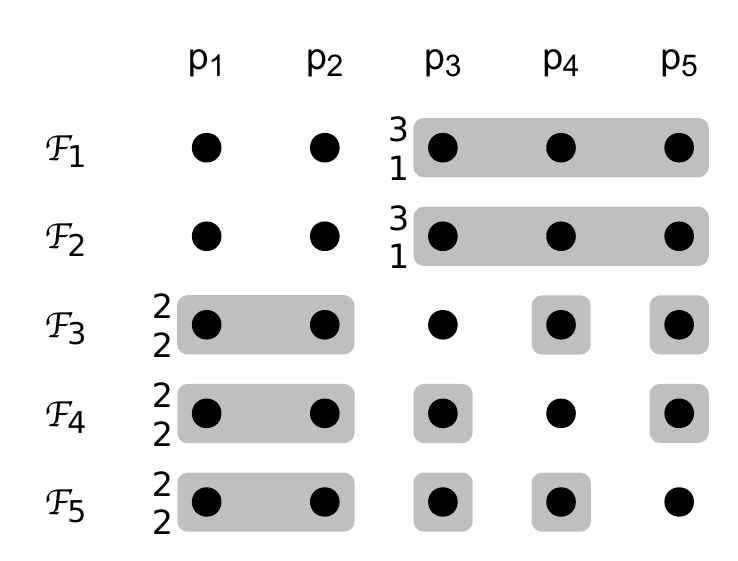}
\end{minipage}

The tolerated system is
$\CT = \{\{p_1, p_2\}, \{p_3\}, \{p_4\}, \{p_5\}\}$.  One can verify that
$B^3(\BF_D)$ holds; hence, by Lemma~\ref{lem:b3-q3}, also $Q^3(\CT)$ holds.
The corresponding symmetric Byzantine quorum system is
$\CH= \{\{p_3, p_4, p_5\}, \{p_1, p_2, p_4 , p_5\}, \{p_1, p_2, p_3 ,
p_5\}, \{p_1, p_2, p_3 , p_4\}\}$.  Observe that every $\CG \in \CH$ is a
guild in an execution in which the processes in $T = \CP \setminus \CG$ are
faulty.
  
Let us assume an execution with a set of faulty processes $F=\{p_1,p_2\}$;
this implies that the guild in this execution is $\{p_3,p_4,p_5\}$. We show how Algorithm~\ref{alg:acc} works.  

Let us assume that
the dealer has chosen $s=1$.  Then, for every guild $\CG_k\in \CH$, with
$k \in \{1,\ldots,4\}$, the dealer has chosen uniform shares as follows,
where $s^k_i$ denotes the share of process $i$ for guild $\CG_k$.
\begin{alignat*}{2}
  \CG_1 = \{p_3, p_4, p_5\}       &: s^1_3=1, s^1_4=0 \text{, and } s^1_5 = s + s^1_3 + s^1_4 = 0 \\
  \CG_2 = \{p_1, p_2, p_4 , p_5\} &: s^2_1=0, s^2_2=1, s^2_4=1 \text{, and } s^2_5 = s + s^2_1 + s^2_2 + s^2_4 = 1 \\
  \CG_3 = \{p_1, p_2, p_3 , p_5\} &: s^3_{1}=0, s^3_{2}=1, s^3_{3}=0 \text{, and } s^3_{5} = s + s^3_{1} + s^3_{2} + s^3_{3} = 0 \\
  \CG_4 = \{p_1, p_2, p_3 , p_4\} &: s^4_{1}=1, s^4_{2}=0, s^4_{3}=0 \text{, and } s^4_{4} = s + s^4_{1} + s^4_{2} + s^4_{3} = 0 
\end{alignat*}

Every process in $\CG_1=\{p_3,p_4,p_5\}$ upon \op{release-coin} sends a \str{share} message to every process $p_j \in \mathcal{P}$ for every share it has. 

Process $p_3$ is part of $\CG_1, \CG_3$ and $\CG_4$. This means that upon \op{release-coin}, $p_3$ sends \msg{share}{1, 1, 1}, \msg{share}{0, 3, 1} and \msg{share}{0, 4, 1} to every process in $\CP.$

Process $p_4$ is part of $\CG_1, \CG_2$ and $\CG_4$. This means that upon \op{release-coin}, $p_4$ sends \msg{share}{0, 1, 1}, \msg{share}{1, 2, 1} and \msg{share}{0, 4, 1} to every process in $\CP.$

Process $p_5$ is part of $\CG_1, \CG_2$ and $\CG_3$. This means that upon \op{release-coin}, $p_5$ sends \msg{share}{0, 1, 1}, \msg{share}{1, 2, 1} and \msg{share}{0, 3, 1} to every process in $\CP.$

Eventually every process in $\CG_1$ receives a \str{share} message
of the form \msg{share}{s^1_i, 1, 1} from each process $p_i \in \CG_1$,
computes $s \gets \sum_{i: p_i \in \CG_1} s^1_i$ (Line~\ref{line:compute-s}, Algorithm~\ref{alg:acc}) and $\op{output-coin}(1)$.
\end{example}

\paragraph{Discussion.}
This implementation is expensive because the number of shares for one
particular coin held by a process~$p_i$ is equal to the number of guilds in
which $p_i$ is contained.  It would be more efficient to implement an
asymmetric coin ``from scratch'' according to the protocols of Canetti and
Rabin \cite{DBLP:conf/stoc/CanettiR93} or of Patra \emph{et
  al.}~\cite{DBLP:journals/dc/PatraCR14}.  Alternatively, distributed
cryptographic implementations are possible, for example, implementations
relying on the hardness of the discrete logarithm
problem~\cite{DBLP:journals/joc/CachinKS05}.

\subsection{Asymmetric binary validated broadcast}

We generalize the binary validated broadcast as introduced by
Most{\'{e}}faoui \emph{et al.}~\cite{DBLP:conf/podc/MostefaouiMR14} and as
reviewed in~Appendix~\ref{app:bv-broadcast} to the asymmetric-trust model.
In this primitive, every process may broadcast a bit $b \in \{0,1\}$ by
invoking $\op{abv-broadcast}(b)$.  The primitive outputs at least one
binary value and possibly also both binary values through an
\op{abv-deliver} event.  This means one or two \op{abv-deliver} events
might occur at a correct process, which separates this notion from the
broadcasts of the previous section.
In the asymmetric version, all safety properties are restricted to wise
processes, and a guild is required for liveness.  This gives the following
notion.

\begin{definition}[Asymmetric binary validated broadcast]\label{def:abvb}
  A protocol for \emph{asymmetric binary validated broadcast} satisfies the
  following properties:
\begin{description}
\item[Validity:] In all executions with a guild, let $K$ be a kernel for every process in the maximal guild. If every process in $K$ is correct and
 has \op{abv-broadcast} the same value~$b \in \{0,1\}$, then every wise process eventually \op{abv-delivers}~$b$.
\item[Integrity:] In all executions with a guild, if a wise process \op{abv-delivers} some~$b$, then $b$ has been \op{abv-broadcast} by some
 process in the maximal guild. 
\item[Agreement:] In all executions with a guild, if a wise process
  \op{abv-delivers} some value~$b$, then every wise process
  eventually \op{abv-delivers}~$b$.
\item[Termination:] In all executions with a guild, every wise process eventually \op{abv-delivers} some value.
\end{description}
\end{definition}
 
Note that it guarantees properties only for processes that are wise.  Liveness properties also assume there exists a
guild.

\begin{algo*}[tbh]
\vbox{
\small
\begin{numbertabbing} \reset
  xxxx\=xxxx\=xxxx\=xxxx\=xxxx\=xxxx\=MMMMMMMMMMMMMMMMMMM\=\kill
  \textbf{State} \label{} \\
   \> \(\var{sentvalue} \gets [\str{false}]^2\): \(\var{sentvalue}[b]\) indicates whether $p_i$ has sent \msg{value}{b}  \label{}\\
  \>  \(\var{values} \gets [\emptyset]^n\): list of sets of received binary values \label{} \\
  \\
  \textbf{upon event} \(\op{abv-broadcast}(b)\) \textbf{do}  \label{}\\
  \> \(\var{sentvalue}[b] \gets \str{true}\)  \label{}\\
  \> send message \msg{value}{b} to all \(p_j \in \CP\)  \label{}\\
  \\
  \textbf{upon} receiving a message \msg{value}{b} from \(p_j\)
     \textbf{do}  \label{}\\
  \> \textbf{if} \(b \not\in \var{values}[j]\) \textbf{then}  \label{}\\
  \> \> \(\var{values}[j] \gets \var{values}[j] \cup \{b\}\)  \label{}\\
\\
  \textbf{upon exists} \(b \in \{0,1\} \) \textbf{such that}
     \( \{p_j \in \CP |~b \in ~\var{values}[j]\} \in \CK_i \)
     \textbf{and}  \(\neg \var{sentvalue}[b]\) \textbf{do}
     \` // a kernel for $p_i$    \label{algabvb:kernel}\\
  \> \(\var{sentvalue}[b] \gets \str{true}\)  \label{}\\
  \> send message \msg{value}{b} to all $p_j \in \CP$  \label{}\\
  \\
  \textbf{upon exists} \(b \in \{0,1\} \) \textbf{such that}
     \( \{p_j \in \CP |~b \in ~\var{values}[j] \} \in \CQ_i \) \textbf{do}
     \` // a quorum for $p_i$  \label{}\\
  \> \textbf{output} \(\op{abv-deliver(b)}\) \label{}\\[-5ex]
\end{numbertabbing}
}
\caption{Asymmetric binary validated broadcast (code for~$p_i$)}
\label{alg:abv-broadcast}
\end{algo*}

Algorithm~\ref{alg:abv-broadcast} works in the same way as the binary
validated broadcast by Most{\'{e}}faoui \emph{et
  al.}~\cite{DBLP:conf/podc/MostefaouiMR14}, but differs in the use of an
asymmetric quorum and kernel systems.  When a correct process $p_i$ invokes
$\op{abv-broadcast}(b)$ for $b \in \{0,1\}$, it sends a \str{value} message
containing $b$ to all processes.  Afterwards, whenever a correct process
$p_i$ receives \str{value} messages containing $b$ from from a kernel~$K_i$
for itself and has not sent a \str{value} message containing $b$ itself,
then it sends such message to every process.  Finally, once a correct
process $p_i$ receives \str{value} messages containing $b$ from a quorum
$Q_i$ for itself, it delivers $b$ through $\op{abv-deliver}(b)$.  Note that
a process \op{abv-delivers} at least one and at most two values.

\begin{theorem}\label{thm:abvbtheo}
  Algorithm~\ref{alg:abv-broadcast} implements asymmetric binary validated
  broadcast.
\end{theorem}

\begin{proof}
  To prove the \emph{validity} property, let us consider a kernel $K$ for
  every process $p_i$ in the maximal guild $\CG_{\text{max}}$. Moreover,
  let us assume that every process
  in $K$ has $\op{abv-broadcast}$ the same value $b \in \{0,1\}$. Then, by definition of a kernel, $K$
  intersects every $Q_i$ for every $p_i \in \CG_{\text{max}}$. According to
  the protocol, every process in $\CG_{\text{max}}$ eventually sends
  \msg{value}{b} unless $\var{sentvalue}[b] = \str{true}$ for some
  $p_i \in \CG_{\text{max}}$. However, if $\var{sentvalue}[b] = \str{true}$
  for~$p_i$, process~$p_i$ has already sent \msg{value}{b}. Since every process in the maximal guild 
  eventually sends \msg{value}{b}, eventually every correct process $p_j$ also receives \msg{value}{b} from a kernel for itself (see Corollary~\ref{cor:gmaxkernel}) and sends \msg{value}{b} unless $\var{sentvalue}[b] = \str{true}$. 
  However, as above, if $\var{sentvalue}[b] = \str{true}$
  for~$p_j$, process~$p_j$ has already sent \msg{value}{b}. It follows that eventually every wise
  process receives a quorum for itself of values~$b$
  and \op{abv-delivers}~$b$.

  For the \emph{integrity} property, let us assume an execution with a
  maximal guild $\CG_{\text{max}}$. Suppose first that only Byzantine
  processes \op{abv-broadcast}~$b$. Then, the set consisting of only these
  processes cannot form a kernel for any wise process. It follows that
  Line~\ref{algabvb:kernel} of Algorithm~\ref{alg:abv-broadcast} cannot be
  satisfied. If only \naive processes \op{abv-broadcast}~$b$, then by the
  definition of a quorum system and by the assumed existence of a maximal
  guild, there is at least one quorum for every process in
  $\CG_{\text{max}}$ that does not contain any \naive processes (e.g., as
  in Example~\ref{ex:wisegmax}). All \naive processes together cannot be a
  kernel for processes in $\CG_{\text{max}}$. Again,
  Line~\ref{algabvb:kernel} of Algorithm~\ref{alg:abv-broadcast} cannot be
  satisfied. Finally, let us assume that a wise process $p_i$ outside the
  maximal guild \op{abv-broadcasts}~$b$. Then, $p_i$ cannot be a kernel for
  every wise process: it is not part of the quorums inside
  $\CG_{\text{max}}$. It follows that if a wise process \op{abv-delivers}
  some~$b$, then $b$ has been \op{abv-broadcast} by some processes in the
  maximal guild.

  To show \emph{agreement}, let $F$ be the set of faulty processes and suppose that a wise process~$p_i$ has
  \op{abv-delivered}~$b$. Then it has obtained \msg{value}{b} messages
  from the processes in some quorum $Q_i \in \CQ_i$ and before from a
  kernel $K=Q_i \setminus F$ for itself. Each correct process in $K$ has sent
  \msg{value}{b} message to all other processes. Consider any other wise
  process~$p_j$.  Since $p_i$ and $p_j$ are both wise, we have
  $F \in {\CF_i}^*$ and $F \in {\CF_j}^*$, which implies
  $F \in {\CF_i}^* \cap {\CF_j}^*$. It follows that $K$ is also a kernel
  for $p_j$.  Thus, $p_j$ sends a \msg{value}{b} message to every
  process. This implies that all wise processes eventually send
  \msg{value}{b} to all processes. This also implies that eventually every process in $\CG_{\text{max}}$ sends
  \msg{value}{b}. By Corollary~\ref{cor:gmaxkernel}, $\CG_{\text{max}}$ contains a kernel for every correct process $p_k$. Thus, $p_k$ sends a \msg{value}{b} message to every
  process. 
  Therefore eventually every wise process receives a quorum for itself of \msg{value}{b} messages and
  \op{abv-deliver}~$b$.

  For the \emph{termination} property, let us assume an execution with a
  maximal guild $\CG_{\text{max}}$ and set of faulty processes $F$. Note
  that in any execution, every process in $\mathcal{P} \setminus F$
  \op{abv-broadcasts} some binary values.  We show that there is a set
  $K \subseteq \mathcal{P} \setminus F$ such that $K$ is a kernel for every
  process in the maximal guild consisting of correct processes and every process in $K$ \op{abv-broadcasts}
  the same value $b \in \{0,1\}$.  Observe that a correct process initially
  \op{abv-broadcasts} only one value in $\{0,1\}$. So, let
  $\mathcal{P} \setminus F = S_0 \cup S_1$ with $S_0$ and $S_1$ two sets of processes such that
  $S_0 \cap S_1 = \emptyset$ and such that every process in $S_0$
  \op{abv-broadcasts}~$b$ and every process in $S_1$
  \op{abv-broadcasts}~$1-b = \overline{b}$. Moreover, let us assume that
  neither $S_0$ nor $S_1$ contains a kernel for every process in the
  maximal guild.  If $S_0$ does not contain a kernel for a process in the
  maximal guild, then there exists a process
  $p_j \in \CG_{\text{max}}$ and a quorum $Q_j$ for $p_j$ such that
  $Q_j \cap S_0 = \emptyset$. This means that every correct process in
  $Q_j$ \op{abv-broadcasts}~$\overline{b}$. Similarly, if $S_1$ does not
  contain a kernel for a process in the maximal guild, then there
  exists a process $p_k \in \CG_{\text{max}}$ and a quorum $Q_k$ for $p_k$
  such that $Q_k \cap S_1 = \emptyset.$ This means that every correct
  process in $Q_k$ \op{abv-broadcasts} $b$.  However, if this is the case,
  then $Q_j \cap Q_k \subseteq F$, which contradicts the consistency
  property of an asymmetric Byzantine quorum system, given that $p_j$ and
  $p_k$ are both wise.  This implies that either $S_0$ or $S_1$ contains a
  kernel $K$ for every process in the maximal guild consisting of correct
  processes and such that every process in $K$ \op{abv-broadcasts} the same
  value.  Termination then follows from the validity property.
\end{proof}

\subsection{Asymmetric randomized consensus}

In consensus, a correct process may \emph{propose} a binary value $b$ by
invoking $\op{ac-propose}(b)$, and the consensus abstraction \emph{decides}
for $b$ through an $\op{ac-decide}(b)$ event.

\newcommand\est{\var{est}\xspace}
\begin{algo*}
\vbox{
\small
\begin{numbertabbing}\reset
  xxxx\=xxxx\=xxxx\=xxxx\=xxxx\=xxxx\=MMMMMMMMMMMMMMMMMMM\=\kill
  \textbf{State}  \label{}\\
  \> $\var{round} \gets 0$: current round  \label{}\\
  \> $\var{values} \gets \{\}$: set of \op{abv-delivered} binary values for
     the round  \label{}\\
  \> $\var{aux} \gets [\{\}]^n$: stores sets of values that have
     been received in \str{aux} messages in the round \label{} \\
    \> $\var{decided} \gets []^n$: stores binary values that have
     been reported as decided by other processes   \label{}\\
     \> $\var{sentdecide} \gets \false$: indicates whether $p_i$ has sent a \str{decide} message  \label{}\\
 \\
 \textbf{upon event} \(\op{ac-propose}(b)\) \textbf{do}   \label{}\\
 \> \textbf{invoke} $\op{abv-broadcast}(b)$ with tag~\var{round} \label{} \\
\\
 \textbf{upon} $\op{abv-deliver}(b)$ with tag $r$ \textbf{such that}
    $r = \var{round}$ \textbf{do}  \label{}\\
 \> $\var{values} \gets \var{values} \cup \{b\}$  \label{}\\
 \> send message \msg{aux}{\var{round}, b} to all $p_j \in \CP$  \label{}\\
  \\
 \textbf{upon} receiving a message \msg{aux}{r, b} from $p_j$
    \textbf{such that} $r = \var{round}$ \textbf{do}  \label{}\\
 \> $\var{aux}[j] \gets \var{aux}[j] \cup \{b\}$  \label{}\\
\\
 \textbf{upon exist}
 \(
   \{p_j \in \CP \,|\, \var{aux}[j] \subseteq \var{values}\} \in \CQ_i \) \textbf{do}
 \` // a quorum for $p_i$  \label{}\\
 \> $\op{release-coin}$ with tag~\var{round}  \label{}\\
 \\
 \textbf{upon event} $\op{output-coin}(s)$ with tag~$\var{round}$ \textbf{and} \label{algarbc:coin1} \\
  \>\>\textbf{exists}~$B \subseteq \{0,1\}$, $Q_i \in \mathcal{Q}_i$ \textbf{such that} $B \neq \emptyset$ \textbf{and} for all $p_j \in Q_i$ it holds $B = \var{aux}[j]$ \textbf{do}  \label{algarbc:coin2} \\
 \> \(\var{round} \gets \var{round} + 1\)  \label{}\\
 \> \textbf{if exists} $b$ \textbf{such that} $|B| = 1 \land B = \{b\}$ \textbf{then} \label{} \\
 \>\> \textbf{if} $b = s \land \neg \var{sentdecide}$ \textbf{then}  \label{}\\
 \>\>\> send message \msg{decide}{b} to all $p_j \in \CP$ \label{} \\
      \>\>\> $\var{sentdecide} \gets \true$  \label{}\\
      \>\> \textbf{invoke} $\op{abv-broadcast}(b)$ with tag~\var{round}
      \` // propose $b$ for the next round \label{algarbc:proposeb} \\
 \> \textbf{else} \label{} \\
 \> \> \textbf{invoke} $\op{abv-broadcast}(s)$ with tag~\var{round}
      \` // propose coin value $s$ for the next round \label{algarbc:proposes} \\
 \> \(\var{values} \gets [\perp]^n\)  \label{}\\
 \> $\var{aux} \gets [\{\}]^n$ \label{}\\
 \\
   \textbf{upon} receiving a message \msg{decide}{b} from $p_j$ \textbf{such that} $\var{decided}[j] = \bot$ \textbf{do}  \label{}\\
 \> $\var{decided}[j] = b$  \label{}\\
\\
 \textbf{upon exists} \(b \neq \bot\) \textbf{such that} \(\{p_j \in \CP \,|\, \var{decided}[j] = b\} \in \CK_i\) \textbf{do}
 \` // a kernel for $p_i$  \label{}\\
 \> \textbf{if} $\neg\var{sentdecide}$ \textbf{then}  \label{}\\
 \>\> send message \msg{decide}{b} to all $p_j \in \CP$  \label{}\\
 \>\> $\var{sentdecide} \gets \true$ \label{}\\
  \\
  \textbf{upon exists} \(b \neq \bot\) \textbf{such that} \(\{p_j \in \CP \,|\, \var{decided}[j] = b\} \in \CQ_i\) \textbf{do}
 \` // a quorum for $p_i$  \label{}\\
 \> \(\op{ac-decide}(b)\)  \label{} \\
 \> \textbf{halt} \label{}\\[-5ex]
\end{numbertabbing}
}
\caption{Asymmetric randomized binary consensus (code for $p_i$).}
\label{alg:arbv-consensus}
\end{algo*}

Similar to the protocol of Most{\'{e}}faoui \emph{et
  al.}~\cite{DBLP:journals/jacm/MostefaouiMR15},
Algorithm~\ref{alg:arbv-consensus} proceeds in rounds, and in each round an
instance of $\op{abv-broadcast}$ is invoked. A correct process $p_i$
executes $\op{abv-broadcast}$ and waits for a value $b$ identified by a tag
characterizing the current round. Once received, $p_i$ adds $b$ to
$\var{values}$, broadcasts $b$ in an \str{aux} message to all other
processes, and all of them will eventually add $b$ to~$\var{aux}$.
The \str{aux} messages serve to ``enhance'' the distributed knowledge about
the valid decision values, which must have been \op{abv-proposed}
by processes in the guild.  When $p_i$ has received
a set $B \subseteq \var{values}$ of values carried by \str{aux} messages
from all processes in a quorum $Q_i$ for itself, then $p_i$ releases its
coin with tag~$r$.  Process~$p_i$ then waits for $\op{output-coin}$ with
tag $r$ and the common coin value~$s$. Observe that
Algorithm~\ref{alg:arbv-consensus} allows the set $B$ to change while
reconstructing the common coin
(Lines~\ref{algarbc:coin1}--\ref{algarbc:coin2}).

Subsequently, $p_i$ checks if there is a single value $b$ in $B$.  If so,
and if $b=s$, then $p_i$ becomes ready to decide~$b$ and it does so by
broadcasting a \str{decide} message with value~$b$ to every process.  If
there is more than one value in $B$, then $p_i$ changes its proposal to
$s$.  In any case, the process starts another round and invokes a new
instance of $\op{abv-broadcast}$ with its proposal.

In parallel, the protocol potentially disseminates \str{decide} messages
and may terminate.  When $p_i$ receives a \str{decide} message from a kernel
of processes for itself containing the same value~$b$, then it broadcasts a
\str{decide} message itself containing $b$ to every process, unless it has
already done so.  Once $p_i$ has received a \str{decide} message from a
quorum of processes for itself with the same value~$b$, it
$\op{ac-decides}(b)$ and halts.  This ``amplification'' step is reminiscent
of Bracha's reliable broadcast
protocol~\cite{DBLP:journals/iandc/Bracha87}.  Hence, the protocol does not
execute rounds forever, in contrast to the original formulation of
Most{\'{e}}faoui \emph{et al.}~\cite{DBLP:journals/jacm/MostefaouiMR15},
which satisfies a weaker notion of termination.

The following lemma illustrates that the problem described by Tholoniat and
Gramoli~\cite{TG19} and described in Appendix~\ref{app:attack} does not
occur in this protocol.  This lemma is not directly used in the analysis of
Algorithm~\ref{alg:arbv-consensus}.

\begin{lemma}\label{lem:attack}
  If a wise process $p_i$ executes $\op{output-coin}(s)$ and has
  $B = \{0,1\}$, then every other wise process that $\op{output-coin}(s)$
  has also $B=\{0,1\}$.
\end{lemma}

\begin{proof}
  Let us assume that a wise process~$p_i$ executes $\op{output-coin}(s)$
  while it stores $B = \{0,1\}$.  By inspection of the common-coin
  implementation, this means that $p_i$ has received \str{share} messages
  from every process in some guild $\mathcal{G}$
  (Line~\ref{line:output-coin-begin}, Algorithm~\ref{alg:acc}) and has
  $B=\var{aux}[j]=\{0,1\}$ for all $p_j$ in a quorum $Q_i$ for
  $p_i$. Observe that because $p_i$ is wise, $Q_i \cap \CG$ contains some
  correct process.

  Consider another wise process $p_j$ that has also obtained
  $\op{output-coin}(s)$. It follows that $p_j$ has received \str{share}
  messages from some guild $\mathcal{G}'$ as well. Observe that $p_i$ and
  $p_j$, before receiving the \str{share} messages from every process in
  $\mathcal{G}$ and $\mathcal{G}'$, respectively, receive all \str{aux}
  messages that the correct processes in these guilds have sent before the
  \str{share} messages. This follows from the assumption of FIFO reliable
  point-to-point links across the protocols.

  Recall from Lemma~\ref{lem:b3-q3} that the set of guilds is a symmetric
  Byzantine quorum system for the tolerated system $\CT$ of $\BQ$.  Quorum
  consistency then implies that $\mathcal{G}$ and $\mathcal{G}'$ have some
  correct process(es) in common.  So, according to the reasoning above,
  $p_i$ and $p_j$ receive some \str{aux} messages from the same correct
  process before they may output the coin.  This means that if $p_i$ has
  $B=\{0,1\}$ after $\op{output-coin}(s)$, then every quorum $Q_j$ for
  $p_j$ will contain a process $p_k$ such that $\var{aux}[k]=\{0,1\}$ for
  $p_j$.  Every wise process therefore must eventually have $B=\{0,1\}$.
\end{proof}

\begin{theorem}\label{thm:arbvtheo}
  Algorithm~\ref{alg:arbv-consensus} implements asymmetric strong Byzantine consensus.
\end{theorem}

\begin{proof}
  To prove the \emph{strong validity} property, let us assume that a wise
  process $p_i$ has $\op{ac-decided}$ a value~$b$. This means that $p_i$
  has received \msg{decide}{b} messages from a quorum $Q_i$ for itself.
  Moreover, before deciding, process $p_i$ has received \msg{decide}{b}
  messages from a kernel~$\mathcal{K}_i$ for itself and sent
  \msg{decide}{b} to every other process.

  Whenever a correct process~$p_i$ has sent such a \str{decide} message
  containing $b$ in a round~$r$, it has obtained $B = \{b\}$ and $b$ is the
  same as the coin value in the round. Then, $p_i$ has received $b$ from a
  quorum $Q_i$ for itself through \str{aux} messages. Every process in
  $Q_i$ has received a \msg{aux}{r,b} message and $b$ has been
  \op{abv-delivered}.  According to the integrity property of the validated
  broadcast, $b$ has been \op{abv-broadcast} by a process in the maximal
  guild and, specifically, $\var{values}$ contains only values
  \op{abv-broadcast} by processes in the maximal guild. It follows that $b$
  has been proposed by some processes in the maximal guild.

  For the \emph{agreement} property, suppose that a wise process has
  received \msg{aux}{r,b} messages from a quorum $Q_i$ for itself. Consider
  any other wise process $p_j$ that has received a quorum $Q_j$ for itself
  of \msg{aux}{r,\overline{b}} messages. If at the end of round $r$ there
  is only one value in $B$, then from consistency property of quorum
  systems, it follows $b = \overline{b}$. Furthermore, if $b=s$ then $p_i$
  and $p_j$ broadcast a \msg{decide}{b} message to every process and decide
  for $b$ after receiving a quorum of \msg{decide}{b} messages for
  themselves, otherwise they both $\op{abv-broadcast}(b)$ and they continue
  to $\op{abv-broadcast}(b)$ until $b=s$. If $B$ contains more than one
  value, then $p_i$ and $p_j$ proceed to the next round and invoke a new
  instance of $\op{abv-broadcast}$ with~$s$. Therefore, at the beginning of
  the next round, the proposed values of all wise processes are equal. The
  property easily follows.

  For the \emph{integrity} property, notice that the process halts after
  \op{ac-deciding} and therefore does not \op{ac-decide} more than once.

  The \emph{probabilistic termination} property follows from two
  observations. First, the termination and the agreement properties of
  binary validated broadcast imply that every wise process 
  \op{abv-delivers} the same binary value from the validated broadcast
  instance and this value has been \op{abv-broadcast} by some processes in
  the maximal guild. Second, we show that with probability $1$, there
  exists a round at the end of which all processes in $\CG_{\text{max}}$
  have the same proposal~$b$. If at the end of round~$r$, every process in
  $\CG_{\text{max}}$ has proposed the coin value
  (Line~\ref{algarbc:proposes}, Algorithm~\ref{alg:arbv-consensus}), then
  all of them start the next round with the same value. Similarly, if every
  process in $\CG_{\text{max}}$ has executed Line~\ref{algarbc:proposeb}
  (Algorithm~\ref{alg:arbv-consensus}) they adopt the value $b$ and start
  the next round with the same value.

  However, it could be the case that some wise process in the maximal guild
  \op{abv-broadcasts} a bit $b$ in Line~\ref{algarbc:proposeb} and another
  such process \op{abv-broadcasts} the coin output~$s$ in
  Line~\ref{algarbc:proposes}.  Observe that the properties of the common
  coin abstraction guarantee that the coin value is random and chosen
  independently of~$b$.
  In particular, the \emph{unpredictability} of the common coin ensures
  that no information about~$s$ is revealed until some kernel~$K$ for all
  wise processes, which consists only of correct processes, has released
  the coin.  But for every wise process~$p_i$, this kernel $K$ will
  intersect a quorum $Q_i$ in the condition of
  Line~\ref{algarbc:coin2}. Since the \str{aux} messages from the processes
  $Q_i$ determine $B = \{b\}$ for every wise process that
  \op{abv-broadcasts}~$b$, all processes in $K \cap Q_i$ must have received
  the same value~$b$ before information about the coin can become public.
  Hence $b$ is independent of the random value~$s$ and they with
  probability~$\frac{1}{2}$. The probability that there exists a round $r'$
  in which the coin equals the value $b$ proposed by all processes in
  $\CG_{\text{max}}$ during round $r'$ approaches $1$ when $r$ goes to
  infinity.

  Let $r$ thus be some round in which every process in $\CG_{\text{max}}$
  \op{abv-broadcasts} the same value~$b$; then, none of them will ever change
  their proposal again. This is due to the fact that every wise process
  invokes an binary validated broadcast instance with the same proposal~$b$.
  According to the validity and agreement properties of asymmetric
  binary validated broadcast, every wise process then
  \op{bv-delivers} the same, unique value~$b$. Hence, the proposal of every wise
  process is set to $b$ and does not change in future
  rounds. Finally, the properties of common coin guarantee that the
  processes eventually reach a round in which the coin outputs~$b$.
  Therefore, with probability $1$ every process in the maximal guild sends
  a \str{decide} message with value~$b$ to every process in that round.
  This implies that it exists a quorum $Q_i \subseteq \CG_{\text{max}}$ for
  a process $p_i \in \CG_{\text{max}}$ such that every process in $Q_i$ has
  sent a \str{decide} message with value~$b$ to every process.
  Moreover, the set of processes in the maximal guild contains a kernel for $p_i$ and for every other correct process $p_j$ (Corollary~\ref{cor:gmaxkernel}). If a correct process $p_j$ receives a \str{decide} message with value~$b$ from a kernel for itself, it sends a \str{decide} message with value~$b$ to every process unless it has already done so. 
  It follows that eventually every wise process receives \str{decide}
  messages with the value~$b$ from a quorum for itself and \op{ac-decides}
  for~$b$.
\end{proof}

\section{Conclusion}

This work has introduced asymmetric Byzantine quorum systems, which enable
distributed fault-tolerant protocols with subjective trust assumptions.
The asymmetric-trust model is a strict generalization of Byzantine quorum
systems and intended to work with generic extensions of the standard
protocols, where Byzantine quorums are used.  Indeed, this paper has shown
how register emulations, Byzantine consistent and reliable broadcasts, and
randomized asynchronous consensus can be extended to asymmetric trust.
Some of existing protocols had to be changed in subtle ways because not
only asymmetric quorums play a role but also further concepts, such as core
sets and kernels.  This work has also extended these notions to asymmetric
trust.

The changes to existing protocols follow a general pattern.  The most
important one is that when a process~$p_i$ obtains a number of responses
from a (Byzantine) quorum, which consists in the threshold case of any set
with more than $\lceil \frac{n+f+1}{2} \rceil < n-f$ processes, this is
replaced by the step of $p_i$ receiving responses from one of its
quorums~$Q_i$.  Waiting for a core set of responses, which means $f+1$
messages in the threshold case, changes to obtaining a core set~$C_i$
for~$p_i$ of responses or a kernel~$K_i$ for~$p_i$ of responses,
respectively.  The appropriate notion depends on the context.

There exist a considerable number of more elaborate distributed protocols
in the Byzantine-fault model, notably for consensus and total-order
broadcast.  It is expected that these can be generalized as well to
asymmetric quorums, but the actual formulations remain open.  Furthermore,
many Byzantine-tolerant distributed protocols rely on distributed
cryptographic primitives.  It is an interesting problem to generalize them
to subjective trust assumptions in a scalable and efficient way.

\appendix

\section*{Appendix}

\section{Revisiting signature-free asynchronous Byzantine consensus}
\label{app:attack}

In 2014, Most{\'{e}}faoui \emph{et
  al.}~\cite{DBLP:conf/podc/MostefaouiMR14} introduced a round-based
asynchronous randomized consensus algorithm for binary values. It had
received considerable attention because it was the first protocol with
optimal resilience, tolerating up to $f < \frac{n}{3}$ Byzantine processes,
that did not use digital signatures.  Hence, this protocol needs only
authenticated channels and remains secure against a computationally
unbounded adversary.  Moreover, it takes $O(n^2)$ constant-sized messages
in expectation and has a particularly simple structure.  Our description
here excludes the necessary cost for implementing randomization, for which
the protocol relies on an abstract common coin primitive, as defined by
Rabin~\cite{DBLP:conf/focs/Rabin83}.

This protocol, which we call the \emph{PODC-14}
version~\cite{DBLP:conf/podc/MostefaouiMR14} in the following, suffers from
a subtle and little-known problem.  It may violate liveness, as has been
explicitly mentioned by Tholoniat and Gramoli~\cite{TG19}.  The
corresponding journal publication by Most{\'{e}}faoui \emph{et
  al.}~\cite{DBLP:journals/jacm/MostefaouiMR15}, to which we refer as the
\emph{JACM-15} version, touches briefly on the issue and goes on to present
an extended protocol.  This fixes the problem, but requires also many more
communication steps and adds considerable complexity.

The purpose of this appendix is to revisit the PODC-14 protocol, to point
out in detail how the protocol may fail, and to introduce a compact
solution for fixing it, all in a self-contained way.  For the same reason,
we use the symmetric threshold-fault model here, where any $f$ out of $n$
processes may be faulty.

We discovered discovered this solution while extending the randomized
consensus algorithm to asymmetric quorums.  The corresponding asymmetric
randomized Byzantine consensus protocol appeared in
Section~\ref{sec:consensus} and is proven secure there.

Before addressing randomized consensus, we recall the key abstraction introduced in the PODC-14 paper, a protocol for broadcasting binary values.

\subsection{Binary-value broadcast}
\label{app:bv-broadcast}

The \emph{binary validated broadcast} primitive has been introduced in the PODC-14 version~\cite{DBLP:conf/podc/MostefaouiMR14} under the name \emph{binary-value broadcast}.\footnote{Compared to their work, we adjusted some conditions to standard terminology and chose to call the primitive ``binary \emph{validated} broadcast'' to better emphasize its aspect of validating that a delivered value was broadcast by a correct process.}
In this primitive, every process may broadcast a bit $b \in \{0,1\}$ by invoking $\op{bv-broadcast}(b)$. The broadcast primitive outputs at least one  value $b$ and possibly also both binary values through a $\op{bv-deliver}(b)$ event, according to the following notion. 

\begin{definition}[Binary validated broadcast]\label{def:bvb}
  A protocol for \emph{binary validated broadcast} satisfies the
  following properties:
\begin{description}
\item[Validity:] If at least $(f+1)$ correct processes \op{bv-broadcast} the same value~$b \in \{0,1\}$, then every correct process eventually \op{bv-delivers}~$b$.
\item[Integrity:] A correct process \op{bv-delivers} a particular value~$b$ at most once and only if $b$ has been \op{bv-broadcast} by some correct process.
\item[Agreement:] If a correct process \op{bv-delivers} some value $b$, then every correct process eventually \op{bv-delivers}~$b$.
\item[Termination:] Every correct process eventually \op{bv-delivers} some value $b$.
\end{description}
\end{definition}

The implementation given by Most{\'{e}}faoui \emph{et
  al.}~\cite{DBLP:conf/podc/MostefaouiMR14} works as follows. When a
correct process~$p_i$ invokes $\op{bv-broadcast}(b)$ for $b \in \{0,1\}$,
it sends a \str{value} message containing $b$ to all processes.
Afterwards, whenever a correct process receives \str{value} messages
containing $b$ from at least $f+1$ processes and has not itself sent a
\str{value} message containing $b$, then it sends such message to every
process.  Finally, once a correct process receives \str{value} messages
containing $b$ from at least $2f+1$ processes, it delivers $b$ through
$\op{bv-deliver}(b)$.  Note that a process may \op{bv-deliver} up to two
values.  A formal description, in the asymmetric model, appeared in
Algorithm~\ref{alg:abv-broadcast} (Section~\ref{sec:consensus}).

\subsection{Randomized consensus}

We recall the notion of \emph{randomized Byzantine consensus} here and its
implementation by Most{\'{e}}faoui \emph{et
  al.}~\cite{DBLP:conf/podc/MostefaouiMR14}.
In a consensus primitive, every correct process proposes a value~$v$ by
invoking \(\op{propose}(v)\), which typically triggers the start of the
protocol among processes; it obtains as output a decided value $v$ through
a \(\op{decide}(v)\) event. There are no assumptions made about the faulty
processes.
We use the probabilistic termination property for round-based protocols.
It requires that the probability that a correct process decides after
executing infinitely many rounds approaches~1.

\begin{definition}[Strong Byzantine consensus]\label{def:strong}
  A protocol for asynchronous \emph{strong Byzantine consensus} satisfies:

\begin{description}
\item[Probabilistic termination:] Every correct process~$p_i$ decides with probability $1$, in the sense
  that
  \[
    \lim_{r \rightarrow + \infty}
      \P[\text{a correct process $p_i$ decides by round $r$}] = 1.
  \]
  
\item[Strong validity:] A correct process only decides a value that has been proposed by some correct process.
  
\item[Integrity:] No correct process decides twice.

\item[Agreement:] No two correct processes decide differently.

\end{description}

\end{definition}

The probabilistic termination and integrity properties together imply that
every correct process decides exactly once, while the agreement property
ensures that the decided values are equal. Strong validity asks that if all
correct processes propose the same value~$v$, then no correct process
decides a value different from~$v$. Otherwise, a correct process may only
decide a value that was proposed by some correct
process~\cite{DBLP:books/daglib/0025983}.  In a \emph{binary} consensus
protocol, as considered here, only 0 and 1 may be proposed.

The implementation of randomized consensus by Most{\'{e}}faoui \emph{et
  al.}~\cite{DBLP:conf/podc/MostefaouiMR14} delegates its probabilistic
choices to a \emph{common coin} abstraction \cite{DBLP:conf/focs/Rabin83,
  DBLP:books/daglib/0025983}, a random source observable by all processes
but unpredictable for an adversary. A common coin is invoked at every
process by triggering a \op{release-coin} event. We say that a process
\emph{releases} a coin because its value is unpredictable, unless more than
$f$ correct processes have invoked the coin. The value $s \in \CB$ of the
coin with tag $r$ is output through an event \op{output-coin}.

\begin{definition}[Common coin]\label{def:cc}
  A protocol for \emph{common coin} satisfies the following properties:
\begin{description}
\item[Termination:] Every correct process eventually outputs a coin value. 
  
\item[Unpredictability:] Unless more than $2f$
  processes have released the coin, no process has any information about
  the coin output by a correct process.
  
\item[Matching:] With probability 1 every correct process outputs the same coin value.	

\item[No bias:] The distribution of the coin is uniform over $\mathcal{B}$.

\end{description}
\end{definition}

Observe that the unpredictability condition implies that at least $f+1$
\emph{correct} processes are required to release the coin in order for a
process to have information about the coin value output by a correct
process.


\begin{algo*}
\vbox{
\small
\begin{numbertabbing}\reset

  xxxx\=xxxx\=xxxx\=xxxx\=xxxx\=xxxx\=MMMMMMMMMMMMMMMMMMM\=\kill
  \textbf{State} \label{} \\
  \> $\var{round} \gets 0$: current round \label{} \\
  \> $\var{values} \gets \{\}$: set of \op{bv-delivered} binary values for
     the round  \label{}\\
  \> $\var{aux} \gets [\{\}]^n$: stores sets of values that have
     been received in \str{aux} messages in the round  \label{}\\
  \\
 \textbf{upon event} \(\op{rbc-propose}(b)\) \textbf{do}  \label{} \\
 \> \textbf{invoke} $\op{bv-broadcast}(b)$ with tag~\var{round}  \label{}\\
\\
 \textbf{upon} $\op{bv-deliver}(b)$ with tag $r$ \textbf{such that}
    $r = \var{round}$ \textbf{do} \label{} \\
 \> $\var{values} \gets \var{values} \cup \{b\}$  \label{}\\
 \> send message \msg{aux}{\var{round}, b} to all $p_j \in \CP$ \label{} \\
\\
 \textbf{upon} receiving a message \msg{aux}{r, b} from $p_j$
    \textbf{such that} $r = \var{round}$ \textbf{do} \label{} \\
 \> $\var{aux}[j] \gets \var{aux}[j] \cup \{b\}$  \label{}\\
  \\
 \textbf{upon exists} $B \subseteq \var{values}$ \textbf{such that} 
    \( B \neq \{\} \) \textbf{and} \(
    |\{p_j \in \CP \,|\, B = \var{aux}[j]\}| \ge n-f \) \textbf{do}
    \label{algrbc:uponb} \\
 \> $\op{release-coin}$ with tag~\var{round}  \label{}\\
 \> \textbf{wait for} $\op{output-coin}(s)$ with tag~$\var{round}$  \label{}\\
 \> \(\var{round} \gets \var{round} + 1\)  \label{}\\
 \> \textbf{if exists} $b$ \textbf{such that} $B = \{b\}$ \textbf{then}
    \` // i.e., $|B| = 1$ \label{algrbc:ifb} \\
 \> \> \textbf{if} $b = s$ \textbf{then} \label{} \\
 \>\>\> \textbf{output} \(\op{rbc-decide(b)}\) \label{}\\
 \>\> \textbf{invoke} $\op{bv-broadcast}(b)$ with tag~\var{round}
      \` // propose $b$ for the next round  \label{algrbc:bvbit}\\
 \> \textbf{else} \label{} \\
 \> \> \textbf{invoke}  $\op{bv-broadcast}(s)$ with tag~\var{round}
      \` // propose coin value $s$ for the next round  \label{algrbc:bvcoin}\\
 \> \(\var{values} \gets [\perp]^n\)  \label{}\\
 \> $\var{aux} \gets [\{\}]^n$  \label{}\\[-5ex]
\end{numbertabbing}
}
\caption{Randomized binary consensus according to Most{\'{e}}faoui \emph{et al.}~\cite{DBLP:conf/podc/MostefaouiMR14}  (code for $p_i$).}
\label{alg:rbv-consensus}
\end{algo*}

We now recall the implementation of strong Byzantine consensus according to
Most{\'{e}}faoui \emph{et al.}~\cite{DBLP:conf/podc/MostefaouiMR14} in the
PODC-14 version, shown in Algorithm~\ref{alg:rbv-consensus}. A correct
process \emph{proposes} a binary value $b$ by invoking
$\op{rbc-propose}(b)$; the consensus abstraction \emph{decides} for $b$
through an $\op{rbc-decide}(b)$ event.

The algorithm proceeds in rounds. In each round, an instance of
$\op{bv-broadcast}$ is invoked. A correct process $p_i$ executes
$\op{bv-broadcast}$ and waits for a value $b$ to be \op{bv-delivered},
identified by a tag characterizing the current round. When such a bit~$b$
is received, $p_i$ adds $b$ to $\var{values}$ and broadcasts $b$ through an
\str{aux} message to all processes. Whenever a process receives an
\str{aux} message containing~$b$ from $p_j$, it stores $b$ in a local set
$\var{aux}[j]$. Once $p_i$ has received a set $B \subseteq \var{values}$ of
values such that every $b \in B$ has been delivered in \str{aux} messages
from at least $n-f$ processes, then $p_i$ releases the coin for the round.
Subsequently, the process waits for the coin protocol to output a binary
value~$s$ through $\op{output-coin}(s)$, tagged with the current round
number.

Process $p_i$ then checks if there is a single value $b$ in $B$.  If so,
and if $b=s$, then it decides for value~$b$.  The process then proceeds to
the next round with proposal $b$.  If there is more than one value in $B$,
then $p_i$ changes its proposal to $s$.  In any case, the process starts
another round and invokes a new instance of $\op{bv-broadcast}$ with its
proposal.  Note that the protocol appears to execute rounds forever.

\subsection{A liveness problem}

Tholoniat and Gramoli~\cite{TG19} mention a liveness issue with the
randomized algorithm in the PODC-14
version~\cite{DBLP:conf/podc/MostefaouiMR14}, as presented in the previous
section. They sketch a problem that may prevent progress by the correct
processes when the messages between them are received in a specific order.
In the JACM-15 version, Most{\'{e}}faoui \emph{et
  al.}~\cite{DBLP:journals/jacm/MostefaouiMR15} appear to be aware of the
issue and present a different, more complex consensus protocol.

We give a detailed description of the problem in
Algorithm~\ref{alg:rbv-consensus}.  Recall the implementation of
binary-value broadcast, which disseminates bits in \str{value} messages.
According to our model, the processes communicate by exchanging messages
through an asynchronous reliable point-to-point network.  Messages may be
reordered, as in the PODC-14 version.

Let us consider a system with $n = 4$ processes and $f = 1$ Byzantine
process.  Let $p_1, p_2$ and $p_3$ be correct processes with input values
$0, 1, 1$, respectively, and let $p_4$ be a Byzantine process with control
over the network. Process $p_4$ aims to cause $p_1$ and $p_3$ to release
the coin with $B = \{0,1\}$, so that they subsequently propose the coin
value for the next round. If messages are scheduled depending on knowledge
of the round's coin value~$s$, it is possible, then, that $p_2$ releases
the coin with $B = \{\overline{s}\}$. Subsequently, $p_2$ proposes also
$\overline{s}$ for the next round, and this may continue forever. We now
work out the details, as illustrated in
Figures~\ref{fig:attack-1}--\ref{fig:attack-2}.

First, $p_4$ may cause $p_1$ to receive $2f+1$ $[\str{value},1]$ messages,
from $p_2, p_3$ and $p_4$, and to $\op{bv-deliver}$~$1$ sent at the start
of the
round.
Then, $p_4$ sends $[\str{value}, 0]$ to $p_3$, so that $p_3$ receives value
$0$ twice (from $p_1$ and $p_4$) and also broadcasts a $[\str{value},0]$
message itself.  Process $p_4$ also sends $0$ to $p_1$, hence, $p_1$
receives $0$ from $p_3$, $p_4$, and itself and therefore $\op{bv-delivers}$
$0$.  Furthermore, $p_4$ causes $p_3$ to $\op{bv-deliver}$ $0$ by making it
receive $[\str{value},0]$ messages from $p_1$, $p_4$, and itself.  Hence,
$p_3$ $\op{bv-delivers}$ $0$. Finally, process $p_3$ receives three
$[\str{value}, 1]$ messages (from itself, $p_2$, and $p_4$) and
$\op{bv-delivers}$ also $1$.

Recall that a process may broadcast more than one $\str{aux}$ message. In
particular, it broadcasts an $\str{aux}$ message containing a bit~$b$
whenever it has bv-delivered~$b$.  Thus, $p_1$ broadcasts first
$[\str{aux}, 1]$ and subsequently $[\str{aux}, 0]$, whereas $p_3$ first
broadcasts $[\str{aux}, 0]$ and then $[\str{aux}, 1]$.  Process $p_4$ then
sends to $p_1$ and $p_3$ $\str{aux}$ messages containing $1$ and $0$. After
delivering all six \str{aux} messages, both $p_1$ and $p_3$ finally obtain
$B = \{0,1\}$ in Line~\ref{algrbc:uponb}
(Algorithm~\ref{alg:rbv-consensus}) and see that $|B| \neq 1$ in
L~\ref{algrbc:ifb} (Algorithm~\ref{alg:rbv-consensus}). Processes $p_1$,
$p_3$ and $p_4$ invoke the common coin.

\begin{figure}
\begin{center}
\includegraphics[height=6cm]{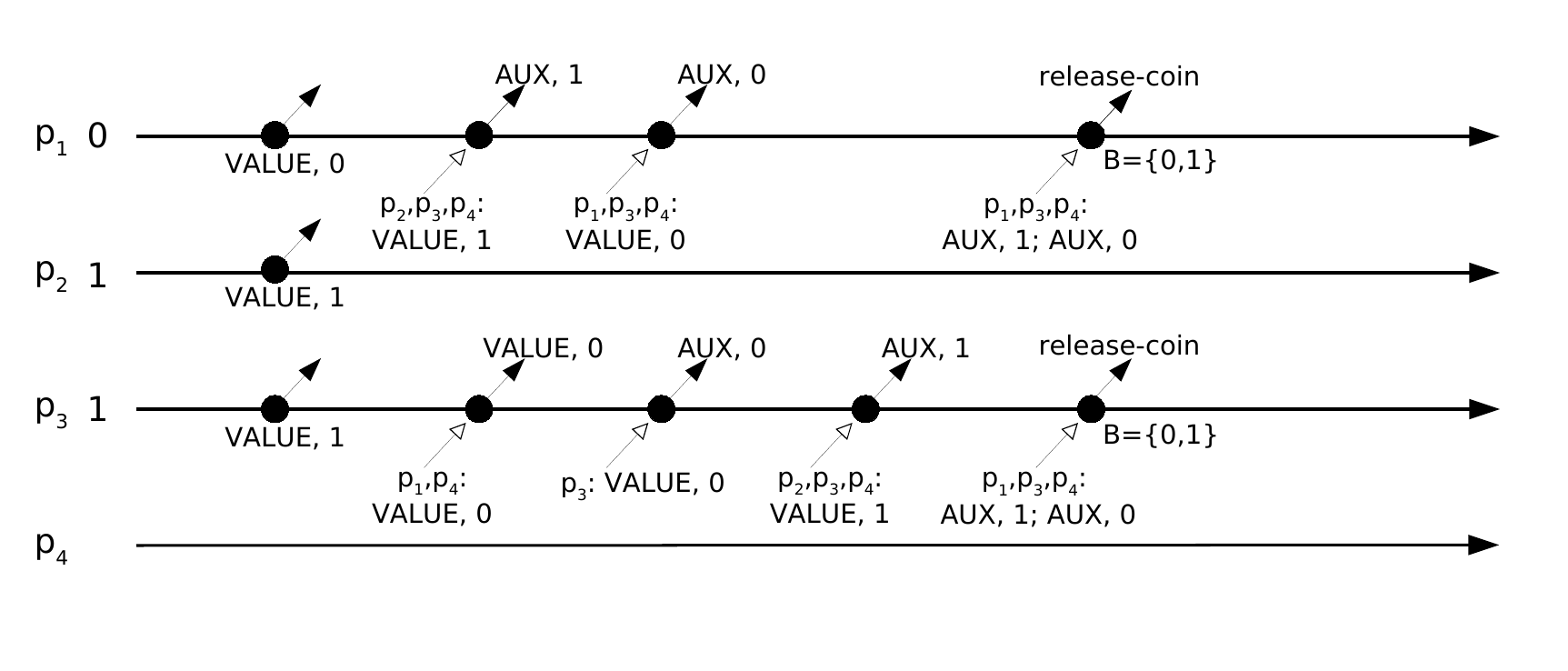}
\end{center}
\caption{The execution of Algorithm~\ref{alg:rbv-consensus}, where processes $p_1$ and $p_3$ execute Line~\ref{algrbc:uponb} with $B=\{0,1\}$.}
\label{fig:attack-1}
\end{figure}

The Byzantine process~$p_4$ may learn the coin value as soon as $p_1$ or
$p_3$ have released the common coin, according to unpredictability.  Let
$s$ be the coin output. We distinguish two cases:

\begin{description}
\item[Case $s = 0$:] Process $p_2$ receives now three $[\str{value}, 1]$
  messages, from $p_3$, $p_4$ and itself, as shown in
  Figure~\ref{fig:attack-2}.  It $\op{bv-delivers}$~$1$ and broadcasts an
  $[\str{aux}, 1]$ message. Subsequently, $p_2$ delivers three $\str{aux}$
  messages containing $1$, from $p_1, p_4$ and itself, but no
  $[\str{aux}, 0]$ message. It follows that $p_2$ obtains $B = \{1\}$ and
  proposes 1 for the next round in Line~\ref{algrbc:bvbit}.  On the other
  hand, $p_1$ and $p_3$ adopt $0$ as their new proposal for the next round,
  according to Line~\ref{algrbc:bvcoin}.  This means that no progress was
  made within this round. The three correct processes start the next round
  again with differing values, again two of them propose one bit and the
  remaining one proposes the opposite.

\begin{figure}
\begin{center}
\includegraphics[height=6cm]{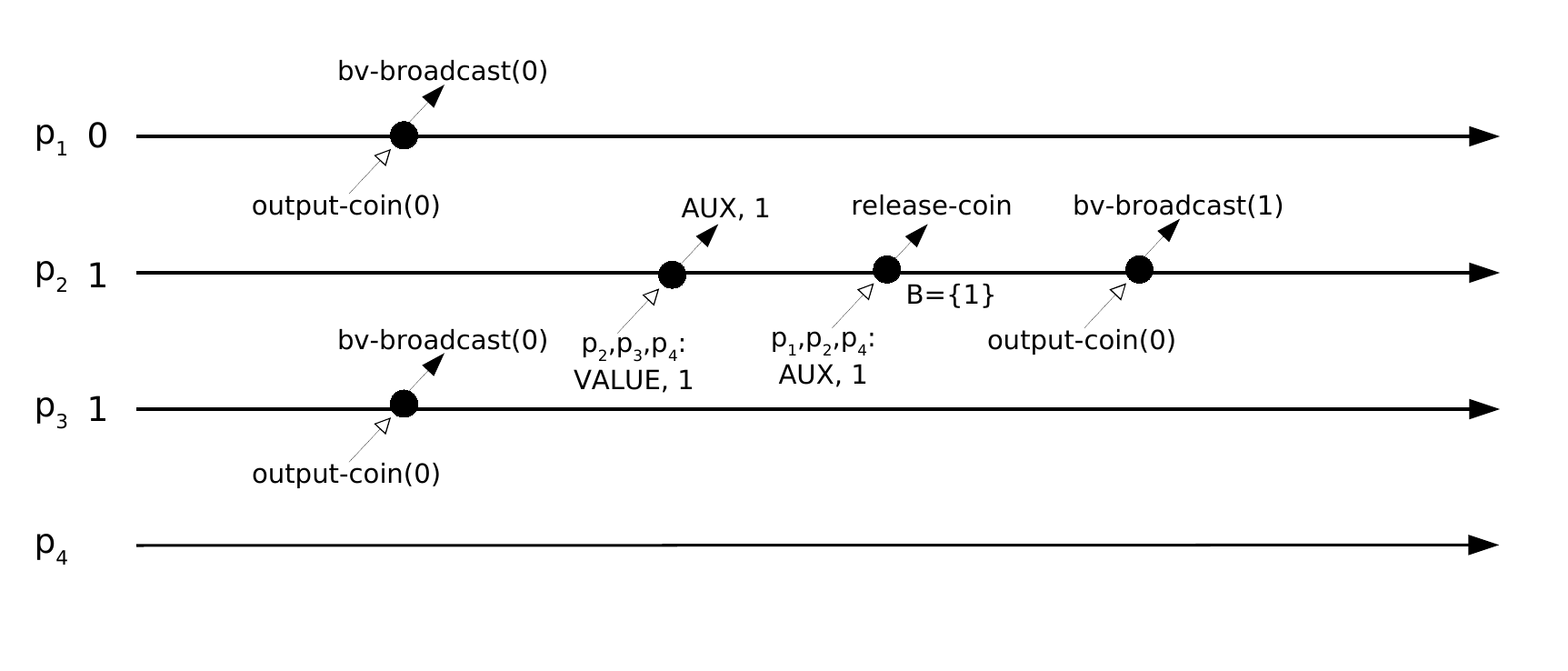}
\end{center}
\caption{Continuing the execution for the case $s=0$: Process~$p_2$
  executes Line~\ref{algrbc:uponb} with $B=\{1\}$.  Processes $p_1$ and
  $p_3$ have already proposed the coin value $s=0$ for the next round, but
  $p_2$ proposes $\overline{s} = 1$.}
\label{fig:attack-2}
\end{figure}

\item[Case $s = 1$:] Process $p_4$ sends $[\str{value}, 0]$ to $p_2$, so
  that it delivers two \str{value} messages containing $0$ (from $p_1$ and
  $p_4$) and thus also broadcast $[\str{value}, 0]$ (this execution is not
  shown).  Recall that $p_3$ has already sent $[\str{value}, 0]$ before.
  Thus, $p_2$ receives $n-f$ $[\str{value}, 0]$ messages,
  $\op{bv-delivers}$ $0$, and also broadcasts an $\str{aux}$ message
  containing $0$. Subsequently, $p_2$ may receive $n-f$ messages
  $[\str{aux}, 0]$, from $p_3$, $p_4$, and itself. It follows that $p_2$
  executes Line~\ref{algrbc:uponb} with $B = \{0\}$ and chooses~0 as its
  proposal for the next round (in Line~\ref{algrbc:bvbit}). On the other
  hand, also here, $p_1$ and $p_3$ adopt the coin value~$s=1$ and propose~1
  for the next round in Line~\ref{algrbc:bvcoin}. Hence, no progress has
  been made in this round, as the three correct processes enter the next
  round with differing values.
\end{description}

\noindent
The protocol may continue like this forever, producing an infinite
execution with no termination.

\subsection{Fixing the problem}

We show how the problem can be prevented with a conceptual insight and two
small changes to the original protocol. We do this by recalling the example
just presented. The complete protocol and a formal proof are given in
Section~\ref{sec:consensus}, using the more general model of asymmetric
quorums.

We start by considering the nature of the common coin abstraction: In any
full implementation, the coin is not an abstract oracle, but implemented by
a concrete protocol that exchanges messages among the processes.

Observe now that in the problematic execution, the network reorders
messages between correct processes. Our first change, therefore, is to
assume FIFO ordering on the reliable point-to-point links. This may be
implemented over authenticated links, by adding sequence numbers to
messages and maintaining a buffer at the
receiver~\cite{DBLP:books/daglib/0025983}.  Consider $p_2$ in the example
and the messages it receives from the other correct processes, $p_1$ and
$p_3$. W.l.o.g.~any protocol implementing a common coin requires an
additional message exchange, where a correct process sends at least one
message to every other process, say, a \str{coin} message with arbitrary
content (to be specific, see Algorithm~\ref{alg:acc},
Section~\ref{sec:consensus}). Observe that at least $f+1$ \emph{correct}
processes are required to send a \str{coin} message in order for a process
to have information about the coin value.

When $p_2$ waits for the output of the coin, it needs to receive, again
w.l.o.g., a \str{coin} message from $n-f$ processes.  Since the other two
correct processes ($p_1$ and $p_3$) have sent two \str{value} messages and
\str{aux} messages each before releasing the coin, then $p_2$ receives
these messages from at least one of them before receiving enough \str{coin}
messages, according to the overlap among Byzantine quorums.

This means that $p_2$ cannot satisfy the condition in
Line~\ref{algrbc:uponb} with $|B| = 1$.  Thus the adversary may no longer
exploit its knowledge of the coin value to prevent
termination. (Most{\'{e}}faoui \emph{et
  al.}~\cite{DBLP:journals/jacm/MostefaouiMR15} (JACM-15) remark in
retrospect about the PODC-14 version that a ``fair scheduler'' is needed.
However, this comes without any proof and thus remains open, especially
because the JACM-15 version introduces a much more complex version of the
protocol.)

Our second change is to allow the set $B$ to dynamically change while the
coin protocol executes. In this way, process $p_2$ may find a suitable $B$
according to the received \str{aux} messages while concurrently running the
coin protocol.  Eventually, $p_2$ will have output the coin \emph{and} its
set $B$ will contain the same values as the sets $B$ of $p_1$ and
$p_3$. Observe that this dynamicity is necessary; process $p_2$ could start
to release the coin after receiving $n-f$ \str{aux} messages containing
only the value $1$. However, following our example, due to the assumed FIFO
order, it will receive from another correct process also an $\str{aux}$
message containing the value $0$, before the \str{coin} message. If we do
not ask for the dynamicity of the set $B$, process $p_2$, after outputting
the coin, will still have $|B|=1$. Most{\'{e}}faoui \emph{et al.} in the
PODC-14 version (\cite[Fig.~2, Line~5]{DBLP:conf/podc/MostefaouiMR14})
seem to rule this out.

Observe that the common-coin primitive here requires \emph{more than $f$
  correct} processes to release the coin before it may be predicted by the
faulty processes.  Within an implementation, this translates into receiving
a \str{coin} message from more than $2f$ processes (or $2f+1 = n-f$
processes, in case $n = 3f+1$).  Abraham \emph{et
  al.}~\cite{DBLP:conf/podc/AbrahamBY22,cryptoeprint:2022/711} show that
such an assumption (which they call an $2f$-unpredictable coin) is
necessary in order to prevent this liveness problem.  With an ordinary coin
primitive (i.e., one where at least \emph{one correct process} is required
to send a \str{coin} message, before information about the coin value may
become available), an adversary would still be able to produce an infinite
execution and to violate termination~\cite[Appendix
A]{cryptoeprint:2022/711}.

\section*{Acknowledgments}

The authors are grateful to Joachim Neu and Srivatsan Sridhar for pointing
out a mistake in an earlier version of the paper as well as for interesting
discussions about asymmetric and flexible trust.

This work has received funding from the Swiss National Science Foundation
(SNSF) under grant agreement Nr\@.~200021\_188443 (Advanced Consensus
Protocols).

\bibliography{references, dblpbibtex}
\bibliographystyle{ieeesort}

\end{document}